\documentclass[]{article}
\usepackage{amsmath,amssymb,amsfonts}
\usepackage{algorithmic}
\usepackage{graphicx}
\usepackage{textcomp}
\def\BibTeX{{\rm B\kern-.05em{\sc i\kern-.025em b}\kern-.08em
    T\kern-.1667em\lower.7ex\hbox{E}\kern-.125emX}}
\usepackage{float}
\usepackage{comment}
    
% our packages
\usepackage{booktabs} % for professional tables
\usepackage{bm}
\usepackage{natbib}
% \setcitestyle{square}
\setcitestyle{comma}
\usepackage{amsthm}
\usepackage{stmaryrd}
\usepackage{setspace}
\usepackage{multirow}
\usepackage{subcaption}
\usepackage{authblk}

\usepackage{hyperref}
\usepackage[a4paper,%
            left=2cm,right=2cm]{geometry}

\hypersetup{
	colorlinks=true,
	urlcolor=blue,
	citecolor=blue,
	linktoc=all,
	linkcolor=blue}

  \theoremstyle{plain}
  \newtheorem{lyxalgorithm}{\protect\algorithmname}
\theoremstyle{plain}
\newtheorem{thm}{\protect\theoremname}
  \theoremstyle{plain}
  \newtheorem{prop}{\protect\propositionname}
  \theoremstyle{plain}
  \newtheorem{lem}{\protect\lemmaname}
  \theoremstyle{plain}
  \newtheorem{cor}{\protect\corollaryname}

\makeatother

\def\ddr  {{\rm d}}

  \providecommand{\algorithmname}{Algorithm}
  \providecommand{\lemmaname}{Lemma}
  \providecommand{\propositionname}{Proposition}
\providecommand{\corollaryname}{Corollary}
\providecommand{\theoremname}{Theorem}
    
\title{Approximate Bayesian computation via the energy statistic}
\author{
\uppercase{Hien D. Nguyen}$^{1}$,
\uppercase{Julyan Arbel}$^{2}$, \\
\uppercase{Hongliang L\"u}$^{2}$, 
\uppercase{Florence Forbes}$^{2}$
}

\affil{$^1$Department of Mathematics and Statistics, La Trobe University, Bundoora Melbourne 3066, Victoria Australia. (e-mail: h.nguyen5@latrobe.edu.au) $^2$Univ. Grenoble Alpes, Inria, CNRS, Grenoble INP, LJK, 38000 Grenoble, France}

\begin{document}

\maketitle

\begin{abstract}
Approximate Bayesian computation (ABC) has become an essential part of the Bayesian toolbox for addressing problems in which the likelihood is prohibitively expensive or entirely unknown, making it intractable. 
ABC defines a pseudo-posterior by comparing observed data with simulated data, traditionally based on some summary statistics, the elicitation of which is regarded as a key difficulty.  
%In  recent years, a number of data discrepancy measures bypassing the construction of summary statistics have been proposed, including the Kullback--Leibler divergence, the Wasserstein distance and maximum mean discrepancies. 
Recently, using data discrepancy measures has been proposed in order to bypass the construction of summary statistics. 
% Here we propose a novel importance-sampling ABC (IS-ABC) algorithm relying on the so-called \textit{two-sample energy statistic}. 
Here we propose to use the importance-sampling ABC (IS-ABC) algorithm relying on the so-called \textit{two-sample energy statistic}. 
We establish a new asymptotic result for the case where both the observed sample size and the simulated data sample size increase to infinity, which highlights to what extent the data discrepancy measure impacts the asymptotic pseudo-posterior. 
The result holds in the broad setting of IS-ABC methodologies, thus generalizing previous results that have been established only for rejection ABC algorithms. 
Furthermore, we propose a consistent V-statistic estimator of the energy statistic, under which we show that the large sample result holds, and prove that the rejection ABC algorithm, based on the energy statistic, generates pseudo-posterior distributions that achieves  convergence to the correct limits, when implemented with rejection thresholds that converge to zero, in the finite sample setting.
Our proposed energy statistic based ABC algorithm is demonstrated on a variety of models, including a Gaussian mixture, a moving-average model of order two, a bivariate beta and a multivariate $g$-and-$k$ distribution. We find that our proposed method compares well with alternative discrepancy measures.
\end{abstract}

\section{Introduction}

In recent years, Bayesian inference has become a popular paradigm
for machine learning and statistical analysis. Good introductions
and references to the primary methods and philosophies of Bayesian
inference can be found in texts such as \cite{Press2003,Ghosh2006,Koch2007,Koop2007,Robert2007,Barber2012,Murphy2012}. 
%In this article, we are concerned with the
%problem of parametric, or classical Bayesian inference. For details
%regarding nonparametric Bayesian inference, the reader is referred
%to the expositions of \cite{Ghosh2003}, \cite{Hjort2010}, and
%\cite{ghosal2017fundamentals}.
When conducting parametric Bayesian inference, we observe some realization
$\bm{x}$ of the data $\bm{X}\in\mathbb{X}$ that are generated from
some data generating process (DGP), which can be characterized by
a parametric likelihood, given by a probability density function (PDF)
$f\left(\bm{x}|\bm{\theta}\right)$, determined entirely via the parameter
vector $\bm{\theta}\in\mathbb{T}$. 
% Using the information that the parameter vector
% $\bm{\theta}$ is a realization of a random variable $\bm{\theta}\in\mathbb{T}$,
% which arises from a DGP that can be characterized by some known prior
% PDF $\pi\left(\bm{\theta}\right)$, 
Using expert knowledge, or based on computational considerations such as conjugacy, we endow the parameter $\bm{\theta}$ with some prior
PDF $\pi\left(\bm{\theta}\right)$. The goal of Bayesian inference is then to characterize the posterior
distribution
\begin{equation}
\pi\left(\bm{\theta}|\bm{x}\right)=\frac{f\left(\bm{x}|\bm{\theta}\right)\pi\left(\bm{\theta}\right)}{c\left(\bm{x}\right)}\text{,}\label{eq: Posterior Dist}
\end{equation}
where the prior predictive distribution $c\left(\bm{x}\right)$ is defined by
\[
c\left(\bm{x}\right)=\int_{\mathbb{T}}f\left(\bm{x}|\bm{\theta}\right)\pi\left(\bm{\theta}\right)\ddr  \bm{\theta}\; .
\]

In very simple cases, such as cases when the prior PDF is a conjugate
of the likelihood (cf. Sec. 3.3 of \cite{Robert2007}), the posterior
PDF (\ref{eq: Posterior Dist}) can be expressed explicitly.
In the case of more complex but still tractable pairs of likelihood
and prior PDFs, one can sample from (\ref{eq: Posterior Dist}) via
a variety of Monte Carlo methods, such as those reported in Ch. 6 of \cite{Press2003}.

In cases where the likelihood function is known but not tractable,
or when the likelihood function has entirely unknown form, one cannot
exactly sample from (\ref{eq: Posterior Dist}) in an inexpensive
manner, or at all. In such situations, a sample from an approximation
of (\ref{eq: Posterior Dist}) may suffice in order to conduct the
user's desired inference. Such a sample can be drawn via the method
of approximate Bayesian computation (ABC).

It is generally agreed that the ABC paradigm originated from the works
of \cite{Rubin1984,Pritchard1999}; see \cite{Tavare:2019aa} for details.
Stemming from the initial listed works, there are now numerous variants
of ABC methods. Some good reviews of the current ABC literature can
be found in the expositions of \cite{Marin2012,Voss2014,Lintusaari2017,Karabatsos2018}. The volume \cite{Sisson:2019aa} provides a comprehensive treatment regarding ABC methodologies.

The core philosophy of ABC is to define a pseudo-posterior by comparing data with plausibly simulated replicates. The comparison is traditionally based on some summary statistics, the choice of which being regarded as a key challenge of the approach.  

In recent years, data discrepancy measures bypassing the construction of summary statistics have been proposed by viewing data sets as empirical measures. \textcolor{black}{Recent examples of such an approach include the use of the maximum mean discrepancy (MMD) \citep{Park2016}, Kullback--Leibler divergence \citep{Jiang2018}, and the Wasserstein distance \citep{Bernton:2017aa}. Furthermore, \cite{Jiang2018} also considered the use of the classification accuracy method of \cite{gutmann2018likelihood}, and the indirect inference method of \cite{drovandi2015bayesian}, in the data discrepancy context.}

In this article, we develop upon the discrepancy measurement approach
of \cite{Jiang2018}, via the importance sampling ABC (IS-ABC) approach, which makes use of a weight function; see e.g. \cite{Karabatsos2018}. In particular, we report on a class of
ABC algorithms that utilize the two-sample energy statistic (ES) of
\cite{Szekely2004} (see also \cite{Baringhaus2004,Szekely2013,Szekely2017,mak2018support}). Our approach is related to the MMD ABC algorithms that were implemented in \cite{Park2016,Jiang2018,Bernton:2017aa}. The MMD is a discrepancy
measurement that is closely related to the ES, cf. \cite{Sejdinovic:2013aa}.

We establish new asymptotic results that have not
been proved in these previous papers. In the IS-ABC setting and in the regime where both the observation sample size and the simulated data sample size increase to infinity, our theoretical result highlights how the data discrepancy measure impacts the asymptotic pseudo-posterior. 
More specifically, we make the assumption that the data discrepancy measure converges to some asymptotic value $\mathcal{D}_{\infty}\left(\bm{\theta}_{0},\bm{\theta}\right)$,
where $\bm{\theta}_{0}$ stands for the `true' parameter value associated to the DGP that generates observations $\bm{X}$. 
We then show that the pseudo-posterior distribution converges almost surely to a distribution depending on the prior $\pi$ and on the limiting value $\mathcal{D}_{\infty}\left(\bm{\theta}_{0},\bm{\theta}\right)$. 
In addition to our asymptotic results regarding large sample scenarios, we also provide corollaries regarding the performance of our ES-based ABC method, due to the general finite sample theoretical results of \cite{Bernton:2017aa}. \textcolor{black}{Our asymptotic results provide useful approximations and guarantees for the practical application of our method.}

%proportional to $\pi(\bm{\theta})w(\mathcal{D}_{\infty}\left(\bm{\theta}_{0},\bm{\theta}\right))$: the prior distribution times the IS weight $w$ function evaluated at $\mathcal{D}_{\infty}\left(\bm{\theta}_{0},\bm{\theta}\right)$.
The last decade has seen an active development in research on asymptotic properties of ABC. Early works revolved around the impact of the acceptance threshold on the ABC bias and the Monte Carlo error \citep{blum2010approximate,barber2015rate,biau2015new}, and on the choice of summary statistics \citep{blum2010approximate,fearnhead2012constructing,prangle2014semi}.  
Further works focused on large sample size properties such as 
consistency for model choice \citep{marin2014relevant}, 
asymptotic efficiency \citep{li2018asymptotic}, posterior consistency, and contraction rates \citep{frazier2018asymptotic}. It is with these results, where our article fits.
Although devised in settings where likelihoods are assumed intractable, ABC can also be cast in the setting of robustness with respect to misspecification \citep{frazier2020model}. In particular,  the ABC posterior distribution can be viewed as a special case of a coarsened posterior distribution \citep{miller2018robust}.

The remainder of the article proceeds as follows. In Section~\ref{sec:is-abc}, we introduce
the general IS-ABC framework. In Section~\ref{sec:energy}, we introduce the two-sample
ES and demonstrate how it can be incorporated into the IS-ABC framework.
Theoretical results regarding the IS-ABC framework and the two-sample
ES are presented in Section~\ref{sec:theory}. Illustrations of the IS-ABC framework are presented in Section~\ref{sec:illustrations}. Conclusions are drawn in Section~\ref{sec:conclusion}. 

\section{Importance sampling ABC} \label{sec:is-abc}

Assume that we observe $n$ independent and identically distributed
(IID) replicates of $\bm{X}$ from some DGP, which we put into $\mathbf{X}_{n}=\left\{ \bm{X}_{i}\right\} _{i=1}^{n}$.
We suppose that the DGP that generates $\bm{X}$ is dependent on some
parameter vector $\bm{\theta}$ from space $\mathbb{T}$,
which is random and has prior PDF $\pi\left(\bm{\theta}\right)$.

Denote $f\left(\bm{x}|\bm{\theta}\right)$ to be the PDF of $\bm{X}$,
given $\bm{\theta}$, and write
\[
f\left(\mathbf{x}_{n}|\bm{\theta}\right)=\prod_{i=1}^{n}f\left(\bm{x}_{i}|\bm{\theta}\right)\text{,}
\]
where $\mathbf{x}_{n}$ is a realization of $\mathbf{X}_{n}$, and
each $\bm{x}_{i}$ is a realization of $\bm{X}_{i}$ ($i\in\left[n\right]=\left\{ 1,\dots,n\right\} $).

If $f\left(\mathbf{x}_{n}|\bm{\theta}\right)$ were known, then we
could use (\ref{eq: Posterior Dist}) to write the posterior PDF
\begin{equation}
\pi\left(\bm{\theta}|\mathbf{x}_{n}\right)=\frac{f\left(\mathbf{x}_{n}|\bm{\theta}\right)\pi\left(\bm{\theta}\right)}{c\left(\mathbf{x}_{n}\right)}\text{,}\label{eq: posterior}
\end{equation}
where $c\left(\mathbf{x}_{n}\right)=\int_{\mathbb{T}} f\left(\mathbf{x}_{n}|\bm{\theta}\right)\pi\left(\bm{\theta}\right)\ddr \bm{\theta}$
is a constant that makes $\int_{\mathbb{T}}\pi\left(\bm{\theta}|\mathbf{x}_{n}\right)\ddr \bm{\theta}=1$.
When evaluating $f\left(\bm{x}|\bm{\theta}\right)$ is prohibitive and ABC is required,
then operating with $f\left(\mathbf{x}_{n}|\bm{\theta}\right)$ is
similarly difficult. We suppose that given any $\bm{\theta}\in\mathbb{T}$,
we at least have the capability of sampling from the DGP with PDF
$f\left(\bm{x}|\bm{\theta}\right)$.  
That is, we have a simulation
method that allows us to feasibly sample the IID vector $\mathbf{Y}_{m}=\left\{ \bm{Y}_{i}\right\} _{i=1}^{m}$,
for any $m\in\mathbb{N}$, for a DGP with PDF
\[
f\left(\mathbf{y}_{n}|\bm{\theta}\right)=\prod_{i=1}^{m}f\left(\bm{y}_{i}|\bm{\theta}\right)\text{.}
\]
\textcolor{black}{Typically, one should choose $m=n$, as it fulfils the hypotheses of all of our proved theoretical results. This choice is made throughout all of our numerical demonstrations. However, we anticipate that there may be practical or computational scenarios, where it may be advantageous to be able to choose $m\ne n$, which is permissible in our methodological framework.}

Using the simulation mechanism that generates samples $\mathbf{Y}_{m}$
and the prior distribution that generates parameters $\bm{\theta}$,
we can simulate a set of $N\in\mathbb{N}$ simulations $\mathbf{Z}_{N}=\left\{ \bm{Z}_{m,k}\right\} _{k=1}^{N}$,
where $\bm{Z}_{m,k}^{\top}=\left(\mathbf{Y}_{m,k}^{\top},\bm{\theta}_{k}^{\top}\right)$
and $\left(\cdot\right)^{\top}$ is the transposition operator. Here,
for each $k\in\left[N\right]$, $\bm{Z}_{m,k}$ is an observation from the DGP with joint PDF $f\left(\mathbf{y}_{m}|\bm{\theta}\right)\pi\left(\bm{\theta}\right)$, hence each $\bm{Z}_{m,k}$ is  composed of a parameter value and a datum conditional on the parameter value. 
We now consider how $\mathbf{X}_{n}$ and $\mathbf{Z}_{N}$ can be
combined in order to construct an approximation of (\ref{eq: posterior}).

Following the approach of \cite{Jiang2018}, we define $\mathcal{D}\left(\mathbf{x}_{n},\mathbf{y}_{m}\right)$
to be some non-negative real-valued function that outputs a small
value if $\mathbf{x}_{n}$ and $\mathbf{y}_{m}$ are similar, and
outputs a large value if $\mathbf{x}_{n}$ and $\mathbf{y}_{m}$ are
different, in some sense. We call $\mathcal{D}\left(\mathbf{x}_{n},\mathbf{y}_{m}\right)$
the data discrepancy measurement between $\mathbf{x}_{n}$ and $\mathbf{y}_{m}$,
and we say that $\mathcal{D}\left(\cdot,\cdot\right)$ is the data discrepancy
function.

Next, we let $w\left(d,\epsilon\right)$ be a non-negative, decreasing (in $d$), and bounded
(importance sampling) weight function (cf. Section 3 of \cite{Karabatsos2018}),
which takes as inputs a data discrepancy measurement $d=\mathcal{D}\left(\mathbf{x}_{n},\mathbf{y}_{m}\right)\ge0$ and
a calibration parameter $\epsilon>0$. Using the weight and discrepancy
functions, we can propose the following approximation for (\ref{eq: posterior}).

In the language of \cite{Jiang2018}, we call
\begin{equation}
\pi_{m,\epsilon}\left(\bm{\theta}|\mathbf{x}_{n}\right)=\frac{\pi\left(\bm{\theta}\right)L_{m,\epsilon}\left(\mathbf{x}_{n}|\bm{\theta}\right)}{c_{m,\epsilon}\left(\mathbf{x}_{n}\right)}\label{eq: pseudo posterior}
\end{equation}
the pseudo-posterior PDF, where
\[
L_{m,\epsilon}\left(\mathbf{x}_{n}|\bm{\theta}\right)=\int_{\mathbb{X}^{m}}w\left(\mathcal{D}\left(\mathbf{x}_{n},\mathbf{y}_{m}\right),\epsilon\right)f\left(\mathbf{y}_{m}|\bm{\theta}\right)\ddr \mathbf{y}_{m}
\]
is the approximate likelihood function, and

\[
c_{m,\epsilon}\left(\mathbf{x}_{n}\right)=\int_{\mathbb{T}}\pi\left(\bm{\theta}\right)L_{m,\epsilon}\left(\mathbf{x}_{n}|\bm{\theta}\right)\ddr \bm{\theta}
\]
is a normalization constant. We can use (\ref{eq: pseudo posterior})
to approximate (\ref{eq: posterior}) in the following way. For any
functional of the parameter vector $\bm{\theta}$ of interest, $g\left(\bm{\theta}\right)$ say,
we may approximate the posterior mean Bayesian estimator of $g\left(\bm{\theta}\right)$
via the expression

\begin{equation}
\mathbb{E}\left[g\left(\bm{\theta}\right)|\mathbf{x}_{n}\right]\approx\frac{\int_{\mathbb{T}}g\left(\bm{\theta}\right)\pi\left(\bm{\theta}\right)L_{m,\epsilon}\left(\mathbf{x}_{n}|\bm{\theta}\right)\ddr \bm{\theta}}{c_{m,\epsilon}\left(\mathbf{x}_{n}\right)}\text{,}\label{eq: approx Bayes est}
\end{equation}
where the right-hand side of (\ref{eq: approx Bayes est}) can be
unbiasedly estimated using $\mathbf{Z}_{N}$ via
\begin{equation}
\mathbb{M}\left[g\left(\bm{\theta}\right)|\mathbf{x}_{n}\right]=\frac{\sum_{k=1}^{N}g\left(\bm{\theta}_{k}\right)w\left(\mathcal{D}\left(\mathbf{x}_{n},\mathbf{Y}_{m,k}\right),\epsilon\right)}{\sum_{k=1}^{N}w\left(\mathcal{D}\left(\mathbf{x}_{n},\mathbf{Y}_{m,k}\right),\epsilon\right)}\text{.}\label{eq: Sample approx Bayes est}
\end{equation}

We call the process of constructing (\ref{eq: Sample approx Bayes est}),
to approximate (\ref{eq: approx Bayes est}), the IS-ABC procedure.
The general form of the IS-ABC procedure is provided in Algorithm
\ref{alg:IS-ABC}.
\begin{lyxalgorithm}
\label{alg:IS-ABC}IS-ABC procedure for approximating $\mathbb{E}\left[g\left(\bm{\theta}\right)|\mathbf{x}_{n}\right]$.

\textbf{Input:} a data discrepancy function $\mathcal{D}$, a weight function
$w$, and a calibration parameter $\epsilon>0$.

\textbf{For} $k\in\left[N\right]$;

sample $\bm{\theta}_{k}$ from PDF $\pi\left(\bm{\theta}\right)$;

generate $\mathbf{Y}_{m,k}$ from the DGP with PDF $f\left(\mathbf{y}_{m}|\bm{\theta}_{k}\right)$;

put $\bm{Z}_{k}=\left(\mathbf{Y}_{m,k},\bm{\theta}_{k}\right)$ into
$\mathbf{Z}_{N}$.

\textbf{Output:} $\mathbf{Z}_{N}$ and construct the estimator $\mathbb{M}\left[g\left(\bm{\theta}\right)|\mathbf{x}_{n}\right]$.
\end{lyxalgorithm}

\section{The energy statistic (ES)} \label{sec:energy}

Let $\delta$ define a metric and let $\bm{X}\in\mathbb{X}\subseteq\mathbb{R}^{d}$
and $\bm{Y}\in\mathbb{X}$ be two random variables that are in a space endowed with a semi-metric $\delta$, where $d\in\mathbb{N}$  (cf. \cite{Sejdinovic:2013aa}). 
Furthermore, let $\bm{X}^{\prime}$ and $\bm{Y}^{\prime}$ be two
random variables that have the same distributions as $\bm{X}$ and
$\bm{Y}$, respectively. Here, $\bm{X}$, $\bm{X}^{\prime}$, $\bm{Y}$,
and $\bm{Y}^{\prime}$ are all independent of one another.

Upon writing
\[
\mathcal{E}_{\delta}\left(\bm{X},\bm{Y}\right)=2\mathbb{E}\left[\delta\left(\bm{X},\bm{Y}\right)\right]-\mathbb{E}\left[\delta\left(\bm{X},\bm{X}^{\prime}\right)\right]-\mathbb{E}\left[\delta\left(\bm{Y},\bm{Y}^{\prime}\right)\right]\text{,}
\]
we can define the original ES of \cite{Baringhaus2004} and \cite{Szekely2004},
as a function of $\bm{X}$ and $\bm{Y}$, via the expression $\mathcal{E}_{\delta_{1}}\left(\bm{X},\bm{Y}\right)$,
where $\delta_{\beta}\left(\bm{x},\bm{y}\right)=\left\Vert \bm{x}-\bm{y}\right\Vert _{2}^{\beta}$
is the $\beta$ power of the metric corresponding to the $L_{2}\text{-norm}$ ($\beta\in\left(0, 2\right]$; cf. \cite[Prop. 2]{Szekely2013}).  
Thus, the original ES statistic, which we shall also denote as $\mathcal{E}\left(\bm{X},\bm{Y}\right)$,
is defined using the Euclidean metric $\delta_1$.

The original ES has numerous useful mathematical properties. For instance,
under the assumption that $\mathbb{E}\left\Vert \bm{X}\right\Vert _{2}+\mathbb{E}\left\Vert \bm{Y}\right\Vert _{2}<\infty$,
it was shown that
\begin{equation}
\mathcal{E}\left(\bm{X},\bm{Y}\right)=\frac{\Gamma\left(\frac{d+1}{2}\right)}{\pi^{\left(d+1\right)/2}}\int_{\mathbb{R}^{d}}\frac{\left|\varphi_{X}\left(\bm{t}\right)-\varphi_{Y}\left(\bm{t}\right)\right|^{2}}{\left\Vert \bm{t}\right\Vert _{2}^{d+1}}\ddr \bm{t}\text{,}\label{eq: ES limit}
\end{equation}
in Proposition 1 of \cite{Szekely2013}, where $\Gamma\left(\cdot\right)$
is the gamma function and $\varphi_{X}$ (respectively, $\varphi_{Y}$) is the characteristic function of $\bm{X}$ (respectively, $\bm{Y}$). Thus, we have the fact that $\mathcal{E}\left(\bm{X},\bm{Y}\right)\ge0$
for any $\bm{X},\bm{Y}\in\mathbb{X}$, and $\mathcal{E}\left(\bm{X},\bm{Y}\right)=0$
if and only if $\bm{X}$ and $\bm{Y}$ are identically distributed. 

The result above is generalized in Proposition 3 of \cite{Szekely2013},
where we have the following statement. If $\delta\left(\bm{x},\bm{y}\right)=\delta\left(\bm{x}-\bm{y}\right)$
is a continuous function and $\bm{X},\bm{Y}\in\mathbb{R}^{d}$ are
independent random variables, then it is necessary and sufficient
that $\delta\left(\cdot\right)$ is strictly negative definite (see \cite{Szekely2013} for the precise definition) for
the following conclusion to hold: $\mathcal{E}_{\delta}\left(\bm{X},\bm{Y}\right)\ge0$
for any $\bm{X},\bm{Y}\in\mathbb{X}$, and $\mathcal{E}_{\delta}\left(\bm{X},\bm{Y}\right)=0$
if and only if $\bm{X}$ and $\bm{Y}$ are identically distributed.

We observe that there is thus an infinite variety of functions
$\delta$ from which we can construct energy statistics. We shall
concentrate on the use of the original ES, based on $\delta_{1}$,
since it is the most well known and popular of the varieties.

\subsection{The V-statistic estimator}

Suppose that we observe $\mathbf{X}_{n}=\left\{ \bm{X}_{i}\right\} _{i=1}^{n}$
and $\mathbf{Y}_{m}=\left\{ \bm{Y}_{i}\right\} _{i=1}^{m}$, where
the former is a sample containing $n$ IID replicates of $\bm{X}$, and the latter
is a sample containing $m$ IID replicates of $\bm{Y}$, respectively, with $\mathbf{X}_{n}$
and $\mathbf{Y}_{m}$ being independent. In \cite{Gretton:2012aa},
it was shown that for any $\delta$, upon assuming that $\delta\left(\bm{x},\bm{y}\right)<\infty$,
the so-called V-statistic estimator (cf. \cite[Ch. 5]{Serfling:1980aa}
and \cite{Koroljuk:1994aa})
\begin{align}
\mathcal{V}_{\delta}\left(\mathbf{X}_{n},\mathbf{Y}_{m}\right)=&\frac{2}{mn}\sum_{i=1}^{n}\sum_{j=1}^{m}\delta\left(\bm{X}_{i},\bm{Y}_{j}\right) \nonumber \\
-&\frac{1}{n^{2}}\sum_{i=1}^{n}\sum_{j=1}^{n}\delta\left(\bm{X}_{i},\bm{X}_{j}\right) \nonumber \\
-&\frac{1}{m^{2}}\sum_{i=1}^{m}\sum_{j=1}^{m}\delta\left(\bm{Y}_{i},\bm{Y}_{j}\right)\text{,} \label{eq: V-stat}
\end{align} 
can be proved to converge in probability to $\mathcal{E}_{\delta}\left(\bm{X},\bm{Y}\right)$,
as $n\rightarrow\infty$ and $m\rightarrow\infty$, under the condition
that $m/n\rightarrow\alpha<\infty$, for some constant $\alpha$ (see
also \cite{Gretton:2007aa}). Here, the proof was provided in the context of MMDs (see definition in Section \ref{sec:rel}) but is easily portable to the ES setting. 

We note that the assumption of this
result is rather restrictive, since it either requires the bounding
of the space $\mathbb{X}$ or the function $\delta$. In the sequel,
we will present a result for the almost sure convergence of the V-statistic
that depends on the satisfaction of a more realistic hypothesis.

It is noteworthy that if the ES is non-negative, then the V-statistic
retains the non-negativity property of its corresponding ES (cf. \cite{Gretton:2012aa}).
That is, for any continuous and negative definite function $\delta\left(\bm{x},\bm{y}\right)=\delta\left(\bm{x}-\bm{y}\right)$,
we have $\mathcal{V}_{\delta}\left(\mathbf{X}_{n},\mathbf{Y}_{m}\right)\ge0$.

\subsection{The ES-based IS-ABC algorithm}

From Algorithm \ref{alg:IS-ABC}, we observe that an IS-ABC algorithm
requires three components. A data discrepancy measurement $d=\mathcal{D}\left(\mathbf{X}_{n},\mathbf{Y}_{m}\right)\ge0$,
a weighting function $w\left(d,\epsilon\right)\ge0$, and a tuning
parameter $\epsilon>0$. We propose the use of the ES in the place
of the data discrepancy measurement $d$, in combination with various
weight functions that have been used in the literature. That is we
set

\[
\mathcal{D}\left(\mathbf{X}_{n},\mathbf{Y}_{m}\right)=\mathcal{V}_{\delta}\left(\mathbf{X}_{n},\mathbf{Y}_{m}\right)\text{,}
\]
in Algorithm \ref{alg:IS-ABC}. 

In particular, we consider original ES, where $\delta=\delta_{1}$.
We name our framework the ES-ABC algorithm. In Section~\ref{sec:theory}, we shall
demonstrate that the proposed algorithm possesses desirable large sample
qualities that guarantees its performance in practice, as illustrated in Section~\ref{sec:illustrations}. 

\subsection{Related methods}
\label{sec:rel}

The ES-ABC algorithm that we have presented here is closely
related to ABC algorithms based on the maximum mean discrepancy (MMD) that were implemented in \cite{Park2016}, \cite{Jiang2018}, and \cite{Bernton:2017aa}. For each Mercer kernel function $\chi\left(\bm{x},\bm{y}\right)$
($\bm{x},\bm{y}\in\mathbb{X})$, the corresponding MMD is defined
via the equation
\begin{align*}
\text{MMD}_{\chi}^{2}\left(\bm{X},\bm{Y}\right)=&\mathbb{E}\left[\chi\left(\bm{X},\bm{X}^{\prime}\right)\right]+\mathbb{E}\left[\chi\left(\bm{Y},\bm{Y}^{\prime}\right)\right]\\
-&2\mathbb{E}\left[\chi\left(\bm{X},\bm{Y}\right)\right]\text{,}
\end{align*}
where $\bm{X},\bm{X}^{\prime},\bm{Y},\bm{Y}^{\prime}$ are random
variable such that $\bm{X}$ and $\bm{Y}$ are identically distributed
to $\bm{X}^{\prime}$ and $\bm{Y}^{\prime}$, respectively. 

The MMD as a statistic for testing goodness-of-fit was studied prominently
in articles such as \cite{Gretton:2007aa}, \cite{Gretton:2009aa},
and \cite{Gretton:2012aa}.  More details
regarding the relationship between the two classes of statistics can
be found in \cite{Sejdinovic:2013aa}.

We note two shortcomings with respect to the applications of the MMD
as a basis for an ABC algorithm in the previous literature. Firstly,
no theoretical results regarding the consistency of the MMD-based
methods have been proved. And secondly, in the application by \cite{Park2016}
and \cite{Jiang2018}, the MMD was implemented using the unbiased
U-statistic estimator, rather than the biased V-statistic estimator.
Although both estimators are consistent, in the sense that they can
be proved to be convergent to the desired limiting MMD value, the
U-statistic estimator has the property of not being bounded
from below by zero (cf. \cite{Gretton:2012aa}). As such, it does
not meet the strict definition of a data discrepancy measurement.

For a sufficiently large sample size, the U-statistic will have low probability of having a value less than zero, and thus the difference between the U-statistic and V-statistic becomes immaterial for large $n$. One may also consider a truncation of the U-statistic, which causes no issues, asymptotically, as the U-statistic and V-statistic have the same limit, which is guaranteed to be non-negative. 

\section{Theoretical results } \label{sec:theory}

% \subsection{General asymptotic analysis}
% [Large sample sizes n and m asymptotic analysis]
\subsection[Behavior as n and m go to infinity]{Behavior as $\MakeLowercase{n}\to\infty$ and $\MakeLowercase{m}\to\infty$}

\subsubsection{Analysis with a generic discrepancy}

We now establish a consistency result for the pseudo-posterior density
(\ref{eq: pseudo posterior}), when $n$ and $m$ approach infinity.
Our result generalizes the main result of \cite{Jiang2018} (i.e.,
Theorem 1), which is the specific case when the weight function is
restricted to the form

\begin{equation}
w\left(d,\epsilon\right)=\left\llbracket d<\epsilon\right\rrbracket \text{,}\label{eq: rejection}
\end{equation}
where $\left\llbracket \cdot\right\rrbracket $ is the Iverson bracket
notation, which equals 1 when the internal statement is true, and
0, otherwise (cf. \cite{Graham:1994aa}).

The weighting function of form (\ref{eq: rejection}), when implemented
within the IS-ABC framework, produces the common rejection ABC algorithms,
that were suggested by \cite{Tavare1997}, and \cite{Pritchard1999}.
We extended upon the result of \cite{Jiang2018} so that we may provide
theoretical guarantees for more exotic ABC procedures, such as the
kernel-smoothed ABC procedure of \cite{Park2016}, which implements
weights of the form
\begin{equation}
w\left(d,\epsilon\right)=\exp\left(-d^{q}/\epsilon\right)\text{,}\label{eq: kernel}
\end{equation}
for $q>0$. See \cite{Karabatsos2018} for further discussion and
examples.

In order to prove our asymptotic result, we require Hunt's lemma,
which is reported in \cite{Dellacherie1980}, as Theorem 45 of Section
V.5. For convenience to the reader, we present the result, below.
\begin{thm}
\label{thm: Hunt's lemma-1}Let $\left(\Omega,\mathcal{F},\mathbb{P}\right)$
be a probability space with increasing $\sigma\text{-fields}$ $\left\{ \mathcal{F}_{n}\right\} $
and let $\mathcal{F}_{\infty}=\cup_{n}\mathcal{F}_{n}$. Suppose that
$\left\{ U_{n}\right\} $ is a sequence of random variables that is
bounded from above in absolute value by some integrable random variable
$V$, and further suppose that $U_{n}$ converges almost surely to
the random variable $U$. Then, $\lim_{n\rightarrow\infty}\mathbb{E}\left(U_{n}|\mathcal{F}_{n}\right)=\mathbb{E}\left(U|\mathcal{F}_{\infty}\right)$ almost surely, and in  $\mathcal{L}_{1}$ mean, as $n\rightarrow\infty$.
\end{thm}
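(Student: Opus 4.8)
The plan is to reduce the statement to two limits, each governed by L\'evy's upward martingale convergence theorem together with a conditional dominated-convergence argument. Writing the target as a telescoping difference and using $|U_n-U|\ge |\mathbb{E}(U_n-U\mid\mathcal{F}_n)|$, I would bound
\[
\left| \mathbb{E}\left(U_n \mid \mathcal{F}_n\right) - \mathbb{E}\left(U \mid \mathcal{F}_\infty\right) \right| \le \mathbb{E}\left(\left|U_n - U\right| \mid \mathcal{F}_n\right) + \left| \mathbb{E}\left(U \mid \mathcal{F}_n\right) - \mathbb{E}\left(U \mid \mathcal{F}_\infty\right) \right|.
\]
The hypothesis $|U_n|\le V$ with $V$ integrable forces $|U|\le V$ on the almost-sure limit, so $U$ is integrable and the second term tends to zero almost surely and in $\mathcal{L}_1$ by L\'evy's upward theorem applied to the \emph{fixed} integrable variable $U$. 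The entire difficulty therefore concentrates in the first term, where both the integrand $|U_n-U|$ and the conditioning field $\mathcal{F}_n$ vary simultaneously with $n$; a naive appeal to dominated convergence for conditional expectations is unavailable precisely because the conditioning $\sigma$-field is not fixed.

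The key device for the first term is the decreasing envelope $W_n=\sup_{k\ge n}|U_k-U|$. Since $U_n\to U$ almost surely, $W_n\downarrow 0$ almost surely, and $0\le W_n\le 2V$ is integrable. Because $|U_n-U|\le W_n$, it suffices to show $\mathbb{E}(W_n\mid\mathcal{F}_n)\to 0$ almost surely. Here I would use a two-parameter limiting argument: fixing $m$ and taking $n\ge m$, monotonicity gives $W_n\le W_m$, hence $\mathbb{E}(W_n\mid\mathcal{F}_n)\le\mathbb{E}(W_m\mid\mathcal{F}_n)$, and L\'evy's theorem applied to the fixed integrable $W_m$ yields $\mathbb{E}(W_m\mid\mathcal{F}_n)\to\mathbb{E}(W_m\mid\mathcal{F}_\infty)$ almost surely as $n\to\infty$. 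Consequently $\limsup_{n}\mathbb{E}(W_n\mid\mathcal{F}_n)\le\mathbb{E}(W_m\mid\mathcal{F}_\infty)$ almost surely for every $m$. Letting $m\to\infty$ and invoking conditional monotone convergence (legitimate because $W_m\downarrow 0$ and $W_m\le 2V$) collapses the right-hand side to $0$, so the non-negative $\limsup$ vanishes and $\mathbb{E}(W_n\mid\mathcal{F}_n)\to 0$ almost surely. This staged passage---letting $n\to\infty$ before $m\to\infty$---is what I expect to be the main obstacle, since it is the only way to decouple the moving integrand from the moving conditioning field.

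Finally, for the $\mathcal{L}_1$ conclusion I would bound the $\mathcal{L}_1$ norm of the target difference by the $\mathcal{L}_1$ norms of the same two pieces. The second piece converges in $\mathcal{L}_1$ by L\'evy's theorem, and for the first the tower property gives $\mathbb{E}[\mathbb{E}(W_n\mid\mathcal{F}_n)]=\mathbb{E}[W_n]\to 0$ by ordinary dominated convergence (using $W_n\downarrow 0$ and $W_n\le 2V$). Since $\mathbb{E}(W_n\mid\mathcal{F}_n)\ge 0$, this is exactly $\mathcal{L}_1$ convergence, and assembling the two bounds delivers both the almost-sure and the $\mathcal{L}_1$ statements simultaneously.
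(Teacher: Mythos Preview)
Your argument is correct and is essentially the classical proof of Hunt's lemma: the telescoping split into $\mathbb{E}(|U_n-U|\mid\mathcal{F}_n)$ and $|\mathbb{E}(U\mid\mathcal{F}_n)-\mathbb{E}(U\mid\mathcal{F}_\infty)|$, the decreasing envelope $W_n=\sup_{k\ge n}|U_k-U|$, and the two-stage limit (first $n\to\infty$ via L\'evy's upward theorem for fixed $W_m$, then $m\to\infty$ via conditional monotone convergence) are exactly the standard ingredients. One cosmetic slip: the inline justification ``$|U_n-U|\ge|\mathbb{E}(U_n-U\mid\mathcal{F}_n)|$'' is not the inequality you actually use (and is not true pointwise in general); what you need, and what your displayed bound correctly employs, is conditional Jensen, $|\mathbb{E}(U_n-U\mid\mathcal{F}_n)|\le\mathbb{E}(|U_n-U|\mid\mathcal{F}_n)$.

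As for comparison with the paper: there is nothing to compare. The paper does not prove this statement at all; it is quoted verbatim as Hunt's lemma from Dellacherie--Meyer (Theorem~45 of Section~V.5) and used as a black box in the proof of Theorem~\ref{thm: Asymptotic pseudo-posteriori-1}. Your proposal therefore supplies a self-contained proof where the paper simply cites the literature, and the route you take coincides with the standard textbook argument.
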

Define the continuity set of a function $d\mapsto w\left(d\right)$ as
\[
C\left(w\right)=\left\{ d:w\text{ is continuous at }d\right\} \text{.}
\]
Using Theorem \ref{thm: Hunt's lemma-1}, we can now prove the following
result regarding the asymptotic behavior of the pseudo-posterior density
function (\ref{eq: pseudo posterior}).
\begin{thm}
\label{thm: Asymptotic pseudo-posteriori-1}Let $\mathbf{X}_{n}$ and
$\mathbf{Y}_{m}$ be IID samples from DGPs that can be characterized
by PDFs $f\left(\mathbf{x}_{n}|\bm{\theta}_{0}\right)=\prod_{i=1}^{n}f\left(\bm{x}_{i}|\bm{\theta}_{0}\right)$
and $f\left(\mathbf{y}_{m}|\bm{\theta}\right)=\prod_{i=1}^{m}f\left(\bm{y}_{i}|\bm{\theta}\right)$,
respectively, with corresponding parameter vectors $\bm{\theta}_{0}$
and $\bm{\theta}$. Suppose that the data discrepancy $\mathcal{D}\left(\mathbf{X}_{n},\mathbf{Y}_{m}\right)$
converges to some $\mathcal{D}_{\infty}\left(\bm{\theta}_{0},\bm{\theta}\right)$,
which is a function of $\bm{\theta}_{0}$ and $\bm{\theta}$, almost
surely as $n\rightarrow\infty$, for some $m=m\left(n\right)\rightarrow\infty$.
If $w\left(d,\epsilon\right)$ is piecewise continuous and decreasing in $d$ and
$w\left(d,\epsilon\right)\le a<\infty$ for all $d\ge0$ and any $\epsilon>0$,
and if
\[
\mathcal{D}_{\infty}\left(\bm{\theta}_{0},\bm{\theta}\right)\in C\left(w\left(\cdot,\epsilon\right)\right)\text{,}
\]
then we have
\begin{equation}
\pi_{m,\epsilon}\left(\bm{\theta}|\mathbf{X}_{n}\right)\rightarrow\frac{\pi\left(\bm{\theta}\right)w\left(\mathcal{D}_{\infty}\left(\bm{\theta}_{0},\bm{\theta}\right),\epsilon\right)}{\int\pi\left(\bm{\theta}\right)w\left(\mathcal{D}_{\infty}\left(\bm{\theta}_{0},\bm{\theta}\right),\epsilon\right)\ddr \bm{\theta}} \text{,}\label{eq: convergence equation}
\end{equation}
almost surely, as $n\rightarrow\infty$.
\end{thm}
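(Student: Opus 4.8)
\emph{Proof proposal.} The plan is to recognize the approximate likelihood $L_{m,\epsilon}\left(\mathbf{X}_{n}|\bm{\theta}\right)$ as a conditional expectation and then to pass to the limit by invoking Hunt's lemma (Theorem \ref{thm: Hunt's lemma-1}). Fix $\bm{\theta}\in\mathbb{T}$ and realize everything on a common probability space carrying an infinite sequence $\left\{\bm{X}_{i}\right\}_{i\ge1}$ drawn from $f\left(\cdot|\bm{\theta}_{0}\right)$ and an independent infinite sequence $\left\{\bm{Y}_{i}\right\}_{i\ge1}$ drawn from $f\left(\cdot|\bm{\theta}\right)$, so that $\mathbf{X}_{n}$ and $\mathbf{Y}_{m\left(n\right)}$ are nested prefixes. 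Let $\mathcal{F}_{n}=\sigma\left(\bm{X}_{1},\dots,\bm{X}_{n}\right)$, which is increasing in $n$, and set $U_{n}=w\left(\mathcal{D}\left(\mathbf{X}_{n},\mathbf{Y}_{m\left(n\right)}\right),\epsilon\right)$. Since $\mathbf{Y}_{m\left(n\right)}$ is independent of $\mathcal{F}_{n}$, the standard formula for conditioning on an independent $\sigma$-field gives $\mathbb{E}\left(U_{n}|\mathcal{F}_{n}\right)=\int_{\mathbb{X}^{m}}w\left(\mathcal{D}\left(\mathbf{X}_{n},\mathbf{y}_{m}\right),\epsilon\right)f\left(\mathbf{y}_{m}|\bm{\theta}\right)\ddr\mathbf{y}_{m}=L_{m,\epsilon}\left(\mathbf{X}_{n}|\bm{\theta}\right)$ almost surely.

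First I would verify the hypotheses of Hunt's lemma for $\left\{U_{n}\right\}$. Boundedness is immediate, since $\left|U_{n}\right|\le a<\infty$, so $V\equiv a$ serves as an integrable dominating variable. For the almost sure convergence of $U_{n}$, I would combine the standing assumption $\mathcal{D}\left(\mathbf{X}_{n},\mathbf{Y}_{m\left(n\right)}\right)\to\mathcal{D}_{\infty}\left(\bm{\theta}_{0},\bm{\theta}\right)$ almost surely with the continuity of $w\left(\cdot,\epsilon\right)$ at the limit point, which holds precisely because $\mathcal{D}_{\infty}\left(\bm{\theta}_{0},\bm{\theta}\right)\in C\left(w\left(\cdot,\epsilon\right)\right)$; composing an almost surely convergent sequence with a function continuous at the limit then yields $U_{n}\to w\left(\mathcal{D}_{\infty}\left(\bm{\theta}_{0},\bm{\theta}\right),\epsilon\right)=:U$ almost surely. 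As $U$ is a deterministic constant, $\mathbb{E}\left(U|\mathcal{F}_{\infty}\right)=U$, so Hunt's lemma delivers
\[
L_{m,\epsilon}\left(\mathbf{X}_{n}|\bm{\theta}\right)=\mathbb{E}\left(U_{n}|\mathcal{F}_{n}\right)\to w\left(\mathcal{D}_{\infty}\left(\bm{\theta}_{0},\bm{\theta}\right),\epsilon\right)
\]
almost surely, for each fixed $\bm{\theta}$. Multiplying by $\pi\left(\bm{\theta}\right)$ then fixes the almost sure limit of the numerator of (\ref{eq: pseudo posterior}).

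The hard part will be the denominator $c_{m,\epsilon}\left(\mathbf{X}_{n}\right)=\int_{\mathbb{T}}\pi\left(\bm{\theta}\right)L_{m,\epsilon}\left(\mathbf{X}_{n}|\bm{\theta}\right)\ddr\bm{\theta}$, because the convergence just obtained holds for each $\bm{\theta}$ on an exceptional null set that may depend on $\bm{\theta}$, whereas interchanging limit and integral requires a single null set valid for $\pi$-almost every $\bm{\theta}$. To upgrade the pointwise statement to this joint form I would apply a Fubini--Tonelli argument on the product of the underlying probability space and $\left(\mathbb{T},\pi\right)$: the set of pairs $\left(\omega,\bm{\theta}\right)$ on which convergence fails has a null $\omega$-section for every $\bm{\theta}$, hence null product measure, hence a $\pi$-null $\bm{\theta}$-section for $\mathbb{P}$-almost every $\omega$. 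Thus, almost surely, $L_{m,\epsilon}\left(\mathbf{X}_{n}|\bm{\theta}\right)\to w\left(\mathcal{D}_{\infty}\left(\bm{\theta}_{0},\bm{\theta}\right),\epsilon\right)$ for $\pi$-almost every $\bm{\theta}$. The uniform bound $\pi\left(\bm{\theta}\right)L_{m,\epsilon}\left(\mathbf{X}_{n}|\bm{\theta}\right)\le a\,\pi\left(\bm{\theta}\right)$, which is $\pi$-integrable, then licenses the dominated convergence theorem and gives $c_{m,\epsilon}\left(\mathbf{X}_{n}\right)\to\int_{\mathbb{T}}\pi\left(\bm{\theta}\right)w\left(\mathcal{D}_{\infty}\left(\bm{\theta}_{0},\bm{\theta}\right),\epsilon\right)\ddr\bm{\theta}$ almost surely.

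Finally I would combine the two limits. Provided the limiting normalizer $\int_{\mathbb{T}}\pi\left(\bm{\theta}\right)w\left(\mathcal{D}_{\infty}\left(\bm{\theta}_{0},\bm{\theta}\right),\epsilon\right)\ddr\bm{\theta}$ is strictly positive, which holds whenever $w\left(\cdot,\epsilon\right)$ is positive on a neighborhood of $\mathcal{D}_{\infty}\left(\bm{\theta}_{0},\bm{\theta}_{0}\right)$ carrying positive prior mass, the ratio of the almost sure numerator and denominator limits converges to the right-hand side of (\ref{eq: convergence equation}), completing the argument. I anticipate that the measurability bookkeeping required for Fubini--Tonelli, together with confirming positivity of the limiting normalizer, will be the only delicate points; the rest reduces to the structural identification of $L_{m,\epsilon}$ as a conditional expectation and a direct application of Theorem \ref{thm: Hunt's lemma-1}.
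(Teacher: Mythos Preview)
Your proposal is correct and follows essentially the same route as the paper: identify $L_{m,\epsilon}\left(\mathbf{X}_{n}|\bm{\theta}\right)$ as the conditional expectation $\mathbb{E}\left(U_{n}|\mathcal{F}_{n}\right)$, apply Hunt's lemma with the constant dominating variable $V=a$ and the extended continuous mapping theorem to obtain convergence of the numerator, then use dominated convergence (with bound $a\,\pi$) for the normalizer and a continuous mapping argument for the ratio. Your Fubini--Tonelli step handling the $\bm{\theta}$-dependence of the exceptional null sets, and your remark on positivity of the limiting normalizer, are in fact more careful than the paper's own argument, which invokes dominated convergence directly without commenting on either point.
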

\begin{proof}
Using the notation of Theorem \ref{thm: Hunt's lemma-1}, we set $U_{n}=w\left(d\left(\mathbf{X}_{n},\mathbf{Y}_{m}\right),\epsilon\right)$.
Since $w\left(d,\epsilon\right)\le a<\infty$, for any $d$, we have
the existence of a $\left|U_{n}\right|\le V<\infty$ such that $V$
is integrable, since we can take $V=a$. Since $\mathcal{D}\left(\mathbf{X}_{n},\mathbf{Y}_{m}\right)$
converges almost surely to $\mathcal{D}_{\infty}\left(\bm{\theta}_{0},\bm{\theta}\right)$,
and $w\left(\cdot,\epsilon\right)$ is continuous at $\mathcal{D}_{\infty}\left(\bm{\theta}_{0},\bm{\theta}\right)$, we have $U_{n}\rightarrow U=w\left(\mathcal{D}_{\infty}\left(\bm{\theta}_{0},\bm{\theta}\right),\epsilon\right)$ with probability one 
by the extended continuous mapping theorem (cf. \cite[Thm. 7.10]{DasGupta2011}).

Now, let $\mathcal{F}_{n}$ be the $\sigma\text{-field}$ generated
by the sequence $\left\{ \bm{X}_{1},\dots,\bm{X}_{n}\right\} $. Thus,
$\mathcal{F}_{n}$ is an increasing $\sigma\text{-field}$, which
approaches $\mathcal{F}_{\infty}=\cup_{n}\mathcal{F}_{n}$. We are
in a position to directly apply Theorem \ref{thm: Hunt's lemma-1}.
This yields 
% \textcolor{black}{replace the $\mathbf{X}_{n}$ by 
% $\mathcal{F}_{n}$, etc...in the conditioning (3 times) unless there is a reason for keeping the $X$ notation?} \hien{This is sensible to me :)}
\[
\mathbb{E}\left[w\left(\mathcal{D}\left(\mathbf{X}_{n},\mathbf{Y}_{m}\right),\epsilon\right)|\textcolor{black}{\mathcal{F}_{n}}\right]\rightarrow\mathbb{E}\left[w\left(\mathcal{D}_{\infty}\left(\bm{\theta}_{0},\bm{\theta}\right),\epsilon\right)|\mathcal{F}_{\infty}\right]\text{,}
\]
almost surely, as $n\rightarrow\infty$, where the right-hand side
equals $w\left(\mathcal{D}_{\infty}\left(\bm{\theta}_{0},\bm{\theta}\right),\epsilon\right)$. 

Notice that the left-hand side has the form
\[
\mathbb{E}\left[w\left(\mathcal{D}\left(\mathbf{X}_{n},\mathbf{Y}_{m}\right),\epsilon\right)|\mathcal{F}_{n}\right]=L_{m,\epsilon}\left(\mathbf{X}_{n}|\bm{\theta}\right)
\]
and therefore $L_{m,\epsilon}\left(\mathbf{X}_{n}|\bm{\theta}\right)\rightarrow w\left(\mathcal{D}_{\infty}\left(\bm{\theta}_{0},\bm{\theta}\right),\epsilon\right)$,
almost surely, as $n\rightarrow\infty$. Thus, the numerator of (\ref{eq: pseudo posterior})
converges to
\begin{equation}
\pi\left(\bm{\theta}\right)w\left(\mathcal{D}_{\infty}\left(\bm{\theta}_{0},\bm{\theta}\right),\epsilon\right)\text{,}\label{eq: numerator1-1}
\end{equation}
almost surely. 

To complete the proof, it suffices to show that the denominator of
(\ref{eq: pseudo posterior}) converges almost surely to
\begin{equation}
\int_{\mathbb{T}}\pi\left(\bm{\theta}\right)w\left(\mathcal{D}_{\infty}\left(\bm{\theta}_{0},\bm{\theta}\right),\epsilon\right)\ddr \bm{\theta}\text{.}\label{eq: denominator1-1}
\end{equation}
Since $L_{m,\epsilon}\left(\mathbf{X}_{n}|\bm{\theta}\right)\rightarrow w\left(\mathcal{D}_{\infty}\left(\bm{\theta}_{0},\bm{\theta}\right),\epsilon\right)$
and $c_{m,\epsilon}\left(\mathbf{x}_{n}\right)=\int_{\mathbb{T}}\pi\left(\bm{\theta}\right)L_{m,\epsilon}\left(\mathbf{x}_{n}|\bm{\theta}\right)\ddr \bm{\theta}$,
we obtain our desired convergence via the dominated convergence theorem,
because $w\left(d,\epsilon\right)\le a<\infty$. An application of
a continuous mapping theorem (cf. \cite[Thm. 7.8]{DasGupta2011}) yields the almost sure convergence of the ratio
between (\ref{eq: numerator1-1}) and (\ref{eq: denominator1-1})
to the right-hand side of (\ref{eq: convergence equation}), as $n\rightarrow\infty$.
%\hfill \qed  
\end{proof}
The following result and proof guarantees the applicability of Theorem
\ref{thm: Asymptotic pseudo-posteriori-1} to rejection ABC procedures,
and to kernel-smoothed ABC procedures, as used in \cite{Jiang2018}
and \cite{Park2016}, respectively.
\begin{prop}
The result of Theorem \ref{thm: Asymptotic pseudo-posteriori-1} applies
to rejection ABC and importance sampling ABC, with weight functions of respective forms (\ref{eq: rejection}) and (\ref{eq: kernel}).
\end{prop}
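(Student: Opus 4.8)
The plan is to show that each of the two weight functions (\ref{eq: rejection}) and (\ref{eq: kernel}) satisfies the three hypotheses imposed on $w$ in Theorem \ref{thm: Asymptotic pseudo-posteriori-1}: (i) $w\left(\cdot,\epsilon\right)$ is piecewise continuous and decreasing in $d$; (ii) $w\left(d,\epsilon\right)\le a<\infty$ uniformly over $d\ge0$; and (iii) the limiting discrepancy $\mathcal{D}_{\infty}\left(\bm{\theta}_{0},\bm{\theta}\right)$ lies in the continuity set $C\left(w\left(\cdot,\epsilon\right)\right)$. Once these are checked for each form, the conclusion follows by invoking Theorem \ref{thm: Asymptotic pseudo-posteriori-1} verbatim, so the entire argument reduces to a case-by-case verification.

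For the kernel-smoothed weight (\ref{eq: kernel}), $w\left(d,\epsilon\right)=\exp\left(-d^{q}/\epsilon\right)$ with $q>0$, I would argue as follows. Since $d\mapsto d^{q}$ is continuous and nondecreasing on $\left[0,\infty\right)$ and $x\mapsto\exp\left(-x/\epsilon\right)$ is continuous and strictly decreasing, the composition is continuous on all of $\left[0,\infty\right)$ and decreasing in $d$; hence (i) holds, and because $w$ is continuous everywhere we have $C\left(w\left(\cdot,\epsilon\right)\right)=\left[0,\infty\right)$, so (iii) is automatic for any value of $\mathcal{D}_{\infty}\left(\bm{\theta}_{0},\bm{\theta}\right)$. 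Boundedness (ii) follows from $0\le\exp\left(-d^{q}/\epsilon\right)\le1$, so one may take $a=1$. This case is therefore immediate.

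For the rejection weight (\ref{eq: rejection}), $w\left(d,\epsilon\right)=\left\llbracket d<\epsilon\right\rrbracket$, I would note that this is a step function equal to $1$ on $\left[0,\epsilon\right)$ and $0$ on $\left[\epsilon,\infty\right)$; it is piecewise continuous (a single jump), nonincreasing, and bounded by $a=1$, so (i) and (ii) hold. Its continuity set is $C\left(w\left(\cdot,\epsilon\right)\right)=\left[0,\infty\right)\setminus\left\{ \epsilon\right\}$, the whole half-line minus the jump location.

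The hard part --- indeed the only nontrivial point --- is hypothesis (iii) for the rejection weight, which requires $\mathcal{D}_{\infty}\left(\bm{\theta}_{0},\bm{\theta}\right)\ne\epsilon$. I would dispatch this in one of two ways. Either one adopts this as an explicit (and very mild) nondegeneracy condition on the pair $\left(\bm{\theta}_{0},\bm{\theta}\right)$ relative to the chosen threshold; or, since $\epsilon>0$ is a free tuning parameter, one observes that for fixed $\bm{\theta}_{0}$ and $\bm{\theta}$ the quantity $\mathcal{D}_{\infty}\left(\bm{\theta}_{0},\bm{\theta}\right)$ is a single nonnegative number, so the exceptional equality $\mathcal{D}_{\infty}\left(\bm{\theta}_{0},\bm{\theta}\right)=\epsilon$ can hold for at most one value of $\epsilon$, whence all but a negligible (single-point, hence Lebesgue-null) set of thresholds is admissible. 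Either resolution places $\mathcal{D}_{\infty}\left(\bm{\theta}_{0},\bm{\theta}\right)$ in the continuity set, after which Theorem \ref{thm: Asymptotic pseudo-posteriori-1} applies directly and yields (\ref{eq: convergence equation}).
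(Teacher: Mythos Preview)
Your proposal is correct and follows essentially the same approach as the paper: a case-by-case verification that each weight function is bounded, decreasing, and piecewise continuous, with the rejection weight requiring the additional condition $\mathcal{D}_{\infty}\left(\bm{\theta}_{0},\bm{\theta}\right)\ne\epsilon$. The only cosmetic difference is that the paper establishes monotonicity of (\ref{eq: kernel}) by computing the derivative $\ddr w/\ddr d=-(q/\epsilon)d^{q-1}\exp(-d^{q}/\epsilon)$, whereas you argue via composition of monotone maps; and the paper simply states the nondegeneracy condition $\mathcal{D}_{\infty}\left(\bm{\theta}_{0},\bm{\theta}\right)\ne\epsilon$ without your additional remark that it excludes at most a single value of $\epsilon$.
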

\begin{proof}
For weights of form (\ref{eq: rejection}), we note that $w\left(d,\epsilon\right)=\left\llbracket d<\epsilon\right\rrbracket $
is continuous in $d$ at all points, other than when $d=\epsilon$.
Furthermore, $w\left(d,\epsilon\right)\in\left\{ 0,1\right\} $ and
is hence non-negative and bounded. Thus, under the condition that
$\mathcal{D}_{\infty}\left(\bm{\theta}_{0},\bm{\theta}\right)\ne\epsilon$,
we have the desired conclusion of Theorem \ref{thm: Asymptotic pseudo-posteriori-1}.

For weights of form (\ref{eq: kernel}), we note that for fixed $\epsilon$,
$w\left(d,\epsilon\right)$ is continuous and positive in $d$. Since $w$ is uniformly bounded by 1, 
differentiating with respect to $d$, we obtain $\ddr  w/\ddr  d=-\left(q/\epsilon\right)d^{q-1}\exp\left(-d^{q}/\epsilon\right)$, which is negative for any $d\ge0$ and $q>0$. Thus, 
(\ref{eq: kernel}) constitutes a weight function and
 satisfies the conditions of Theorem~\ref{thm: Asymptotic pseudo-posteriori-1}.
%\hfill \qed 
\end{proof}
%

% \subsection{Asymptotic of the energy statistic}
\subsubsection{Analysis with the energy statistic}

% Let $\bm{X}$ and $\bm{Y}$ be arbitrary elements of $\mathbf{X}_{n}$
% and $\mathbf{Y}_{m}$, respectively. That is $\bm{X}$ and $\bm{Y}$
% arise from DGPs that can be characterized by PDFs $f\left(\bm{x};\bm{\theta}_{0}\right)$
% and $f\left(\bm{y};\bm{\theta}\right)$, respectively. Under the assumption
% $\mathbb{E}\left\Vert \bm{X}\right\Vert _{2}+\mathbb{E}\left\Vert \bm{Y}\right\Vert _{2}<\infty$,
% Proposition 1 of \cite{Szekely2013} states that we can write the
% ES as
% \begin{equation}
% \mathcal{E}\left(\bm{X},\bm{Y}\right)=\frac{\Gamma\left(\frac{d+1}{2}\right)}{\pi^{\left(d+1\right)/2}}\int_{\mathbb{R}^{d}}\frac{\left|\varphi\left(\bm{t};\bm{\theta}_{0}\right)-\varphi\left(\bm{t};\bm{\theta}\right)\right|^{2}}{\left\Vert \bm{t}\right\Vert _{2}^{d+1}}\ddr   \bm{t}\text{,}\label{eq: Specific ES}
% \end{equation}
% where $\varphi\left(\bm{t};\bm{\theta}\right)$ is the characteristic
% function corresponding to the PDF $f\left(\bm{y};\bm{\theta}\right)$. 

We write $\log^{+}x=\log\left(\max\left\{ 1,x\right\} \right)$. From
\cite{Szekely2004} we have the fact that for arbitrary $\delta$,
\[
\mathcal{V}_{\delta}\left(\mathbf{X}_{n},\mathbf{Y}_{m}\right)=\sum_{i_{1}=1}^{n}\sum_{i_{2}=1}^{n}\sum_{j_{1}=1}^{m}\sum_{j_{2}=1}^{m}\frac{\kappa_{\delta}\left(\bm{X}_{i_{1},}\bm{X}_{i_{2}};\bm{Y}_{j_{1}},\bm{Y}_{j_{2}}\right)}{m^2 n^2}\text{,}
\]
where
\begin{align}
\kappa_{\delta}\left(\bm{x}_{i_{1}},\bm{x}_{i_{2}};\bm{y}_{j_{1}},\bm{y}_{j_{2}}\right)=&\delta\left(\bm{x}_{i_{1}},\bm{y}_{j_{1}}\right)+\delta\left(\bm{x}_{i_{2}},\bm{y}_{j_{2}}\right) \nonumber \\
-&\delta\left(\bm{x}_{i_{1}},\bm{x}_{i_{2}}\right)-\delta\left(\bm{y}_{j_{1}},\bm{y}_{j_{2}}\right) \nonumber
\end{align}
is the kernel of the V-statistic that is based on the function $\delta$.
The following result is a direct consequence of Theorem 1 of \cite{Sen:1977aa},
when applied to V-statistics constructed from functionals $\delta$
that satisfy the hypothesis of \cite[Prop. 3]{Szekely2013}.
\begin{lem}
\label{lem: Sen lemma}Make the same assumptions regarding $\mathbf{X}_{n}$
and $\mathbf{Y}_{m}$ as in Theorem \ref{thm: Asymptotic pseudo-posteriori-1}.
Let $\delta\left(\bm{x},\bm{y}\right)=\delta\left(\bm{x}-\bm{y}\right)$
be a continuous and strictly negative definite function.
If
\begin{equation}
\mathbb{E}\left(\left|\kappa_{\delta}\left(\bm{X}_{1,}\bm{X}_{2};\bm{Y}_{1},\bm{Y}_{2}\right)\right|\log^{+}\left|\kappa_{\delta}\left(\bm{X}_{1,}\bm{X}_{2};\bm{Y}_{1},\bm{Y}_{2}\right)\right|\right)<\infty\text{,}\label{eq: Sen condition}
\end{equation}
then $\mathcal{V}_{\delta}\left(\mathbf{X}_{n},\mathbf{Y}_{m}\right)$ converges
almost surely to $\mathcal{E}_{\delta}\left(\bm{X}_{1},\bm{Y}_{1}\right)\ge0$,
as $\min\left\{ n,m\right\} \rightarrow\infty$, where $\bm{X}_{1},\bm{X}_{2}\in\mathbb{X}$
and $\bm{Y}_{1},\bm{Y}_{2}\in\mathbb{X}$ are arbitrary elements of
$\mathbf{X}_{n}$ and $\mathbf{Y}_{m}$, respectively. 
\end{lem}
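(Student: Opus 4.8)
The plan is to recognize $\mathcal{V}_{\delta}\left(\mathbf{X}_{n},\mathbf{Y}_{m}\right)$ as a two-sample V-statistic of degree $\left(2,2\right)$ and to invoke the strong law of large numbers for generalized U-statistics of \cite{Sen:1977aa}. Because the V-statistic sums $\kappa_{\delta}$ over all ordered index tuples, it coincides with the V-statistic built from the kernel symmetrized separately over its two $\bm{X}$-arguments and its two $\bm{Y}$-arguments; this symmetrized kernel is the object to which Sen's theorem applies, and its mean is unchanged. The first step is therefore to identify the almost sure limit: using the mutual independence of $\bm{X}_{1},\bm{X}_{2},\bm{Y}_{1},\bm{Y}_{2}$ together with the IID structure within each sample, a direct computation gives
\[
\mathbb{E}\kappa_{\delta}\left(\bm{X}_{1},\bm{X}_{2};\bm{Y}_{1},\bm{Y}_{2}\right)=2\mathbb{E}\delta\left(\bm{X},\bm{Y}\right)-\mathbb{E}\delta\left(\bm{X},\bm{X}^{\prime}\right)-\mathbb{E}\delta\left(\bm{Y},\bm{Y}^{\prime}\right)=\mathcal{E}_{\delta}\left(\bm{X}_{1},\bm{Y}_{1}\right),
\]
so the candidate limit is precisely the energy statistic.

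The core step is to split the V-statistic into its off-diagonal part, which is a rescaling of the associated generalized U-statistic, and a diagonal remainder collecting all terms in which $i_{1}=i_{2}$ or $j_{1}=j_{2}$. For the U-statistic part, the hypotheses of Theorem 1 of \cite{Sen:1977aa} are met: after symmetrization $\kappa_{\delta}$ is an admissible symmetric kernel for two independent samples, and the $L\log L$-type moment assumption (\ref{eq: Sen condition}) is exactly the integrability requirement of that theorem. Hence the U-statistic converges almost surely to $\mathbb{E}\kappa_{\delta}=\mathcal{E}_{\delta}\left(\bm{X}_{1},\bm{Y}_{1}\right)$ as $\min\left\{n,m\right\}\to\infty$, with no constraint on the ratio $m/n$; this is precisely the improvement over the in-probability result quoted earlier, which required $m/n\to\alpha<\infty$.

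It then remains to show that the diagonal remainder vanishes almost surely. The number of diagonal index tuples is of order $n^{2}m+nm^{2}$ out of the $n^{2}m^{2}$ total, so after division by $n^{2}m^{2}$ these terms carry a prefactor of order $\max\left\{1/n,1/m\right\}$; the individual diagonal averages are themselves bounded in the limit by ordinary strong laws of large numbers, the relevant integrability again being supplied by (\ref{eq: Sen condition}). Combining the two pieces yields $\mathcal{V}_{\delta}\left(\mathbf{X}_{n},\mathbf{Y}_{m}\right)\to\mathcal{E}_{\delta}\left(\bm{X}_{1},\bm{Y}_{1}\right)$ almost surely, and the non-negativity $\mathcal{E}_{\delta}\left(\bm{X}_{1},\bm{Y}_{1}\right)\ge0$ follows from Proposition 3 of \cite{Szekely2013}, since $\delta$ is continuous and strictly negative definite.

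The main obstacle I anticipate is the clean passage from Sen's U-statistic SLLN to the V-statistic: one must verify that the diagonal terms are integrable and form an asymptotically negligible fraction of the sum, so that discarding them does not alter the almost sure limit. A secondary point requiring care is confirming that the symmetrized $\kappa_{\delta}$ meets the precise symmetry and moment hypotheses under which \cite{Sen:1977aa} states Theorem 1, and that the convergence there is genuinely indexed by $\min\left\{n,m\right\}\to\infty$ rather than by a fixed sampling ratio.
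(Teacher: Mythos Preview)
Your proposal is correct and follows the same route as the paper, which simply states that the lemma ``is a direct consequence of Theorem 1 of \cite{Sen:1977aa}, when applied to V-statistics constructed from functionals $\delta$ that satisfy the hypothesis of \cite[Prop.~3]{Szekely2013}'' and gives no further argument. Your write-up actually supplies the details the paper omits---identifying $\mathbb{E}\kappa_{\delta}=\mathcal{E}_{\delta}$, reducing the V-statistic to the generalized U-statistic plus a diagonal remainder, and invoking Sen's SLLN under the $L\log L$ condition---so there is nothing to correct, and the anticipated obstacles you flag (diagonal negligibility, symmetrization, $\min\{n,m\}\to\infty$ indexing) are exactly the points one must check to turn the paper's one-line citation into a proof.
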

We may apply the result of Lemma \ref{lem: Sen lemma} directly to
the case of $\delta=\delta_{1}$ in order to provide an almost sure
convergence result regarding the V-statistic $\mathcal{V}_{\delta_{1}}\left(\mathbf{X}_{n},\mathbf{Y}_{m}\right)$.
\begin{cor}
\label{cor: Energy distance}Make the same assumptions regarding $\mathbf{X}_{n}$
and $\mathbf{Y}_{m}$ as in Theorem \ref{thm: Asymptotic pseudo-posteriori-1}.
If $\bm{X}\in\mathbb{X}$ and $\bm{Y}\in\mathbb{X}$
are arbitrary elements of $\mathbf{X}_{n}$ and $\mathbf{Y}_{m}$,
respectively, and
\begin{equation}
%\mathbb{E}\left\Vert \bm{X}_{1}\right\Vert _{2}+\mathbb{E}\left\Vert \bm{Y}_{1}\right\Vert _{2}<\infty \text{ and }
\mathbb{E}\left(\left\Vert \bm{X}\right\Vert _{2}^{2}\right)+\mathbb{E}\left(\left\Vert \bm{Y}\right\Vert _{2}^{2}\right)<\infty
\text{,}\label{eq: Square condition}
\end{equation}
and if $\min\left\{ n,m\right\} \rightarrow\infty$, then $\mathcal{V}_{\delta_{1}}\left(\mathbf{X}_{n},\mathbf{Y}_{m}\right)$
converges almost surely to % $\mathcal{E}\left(\bm{X}_{1},\bm{Y}_{1}\right)$,
% of form (\ref{eq: Specific ES}).
\begin{equation}
\mathcal{E}\left(\bm{X},\bm{Y}\right)=\frac{\Gamma\left(\frac{d+1}{2}\right)}{\pi^{\left(d+1\right)/2}}\int_{\mathbb{R}^{d}}\frac{\left|\varphi\left(\bm{t};\bm{\theta}_{0}\right)-\varphi\left(\bm{t};\bm{\theta}\right)\right|^{2}}{\left\Vert \bm{t}\right\Vert _{2}^{d+1}}\ddr   \bm{t}\text{,}\label{eq: Specific ES}
\end{equation}
where $\varphi\left(\bm{t};\bm{\theta}\right)$ is the characteristic
function corresponding to the PDF $f\left(\bm{y};\bm{\theta}\right)$. 
\end{cor}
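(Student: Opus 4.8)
The plan is to invoke Lemma~\ref{lem: Sen lemma} with the specific choice $\delta=\delta_1$, where $\delta_1(\bm{x},\bm{y})=\|\bm{x}-\bm{y}\|_2$. Two hypotheses of that lemma must be checked. First, $\delta_1$ must be continuous and strictly negative definite; this is immediate from the discussion preceding the corollary, since $\delta_\beta$ is strictly negative definite for every $\beta\in(0,2]$ (Prop.~2 and Prop.~3 of \cite{Szekely2013}) and $\delta_1$ is the case $\beta=1$. Second, and this is where the real work lies, I would verify that the integrability condition (\ref{eq: Sen condition}) is implied by the square-moment condition (\ref{eq: Square condition}).

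To control (\ref{eq: Sen condition}), first I would bound the kernel pointwise. Applying the triangle inequality to each of the four terms defining $\kappa_{\delta_1}$ gives
\[
\left|\kappa_{\delta_1}\left(\bm{X}_1,\bm{X}_2;\bm{Y}_1,\bm{Y}_2\right)\right|\le 2\left(\|\bm{X}_1\|_2+\|\bm{X}_2\|_2+\|\bm{Y}_1\|_2+\|\bm{Y}_2\|_2\right).
\]
Next I would use the elementary inequality $t\log^{+}t\le t^{2}$, valid for all $t\ge0$ (the left side vanishes for $t<1$, while $\log^{+}t=\log t\le t$ for $t\ge1$), to reduce (\ref{eq: Sen condition}) to showing $\mathbb{E}\left(\left|\kappa_{\delta_1}\right|^{2}\right)<\infty$. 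Squaring the pointwise bound and applying the power-mean inequality $(a+b+c+d)^{2}\le 4(a^{2}+b^{2}+c^{2}+d^{2})$ expresses $\mathbb{E}\left(\left|\kappa_{\delta_1}\right|^{2}\right)$ as a finite linear combination of the second moments $\mathbb{E}\|\bm{X}\|_2^{2}$ and $\mathbb{E}\|\bm{Y}\|_2^{2}$, each of which is finite by (\ref{eq: Square condition}). Hence (\ref{eq: Sen condition}) holds and Lemma~\ref{lem: Sen lemma} applies, yielding almost sure convergence of $\mathcal{V}_{\delta_1}\left(\mathbf{X}_n,\mathbf{Y}_m\right)$ to $\mathcal{E}_{\delta_1}\left(\bm{X},\bm{Y}\right)=\mathcal{E}\left(\bm{X},\bm{Y}\right)$ as $\min\{n,m\}\to\infty$.

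Finally I would identify this limit with the integral in (\ref{eq: Specific ES}). Since (\ref{eq: Square condition}) entails $\mathbb{E}\|\bm{X}\|_2+\mathbb{E}\|\bm{Y}\|_2<\infty$ by the Cauchy--Schwarz inequality, the hypothesis of the representation (\ref{eq: ES limit}) (Proposition~1 of \cite{Szekely2013}) is met, so $\mathcal{E}\left(\bm{X},\bm{Y}\right)$ equals the displayed Fourier integral with $\varphi_X$ and $\varphi_Y$ the characteristic functions of $\bm{X}$ and $\bm{Y}$; writing these as $\varphi\left(\bm{t};\bm{\theta}_0\right)$ and $\varphi\left(\bm{t};\bm{\theta}\right)$, since $\bm{X}$ is drawn under $\bm{\theta}_0$ and $\bm{Y}$ under $\bm{\theta}$, gives exactly (\ref{eq: Specific ES}). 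The main obstacle is the middle step: translating the square-integrability assumption into the $\kappa\log^{+}\kappa$ moment condition required by \cite{Sen:1977aa}, for which the $t\log^{+}t\le t^{2}$ bound together with the triangle inequality is the crucial device.
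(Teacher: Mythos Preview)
Your proposal is correct and follows essentially the same approach as the paper: bound $|\kappa_{\delta_1}|$ via the triangle inequality, use $t\log^{+}t\le t^{2}$ to reduce condition~(\ref{eq: Sen condition}) to a second-moment bound on the kernel, and then conclude from~(\ref{eq: Square condition}). The only cosmetic difference is that the paper expands $|\kappa_{\delta_1}|^{2}$ out and handles the cross terms via independence, whereas you apply the power-mean inequality $(a+b+c+d)^{2}\le 4(a^{2}+b^{2}+c^{2}+d^{2})$ directly; your route is slightly cleaner but the argument is the same.
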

\begin{proof}
By the law of total expectation, we apply Lemma \ref{lem: Sen lemma}
by considering the two cases of (\ref{eq: Sen condition}): when $\left|\kappa_{\delta_{1}}\right|\le1$
and when $\left|\kappa_{\delta_{1}}\right|>1$, separately, to write
\begin{align}
&\mathbb{E}\left(\left|\kappa_{\delta_{1}}\right|\log^{+}\left|\kappa_{\delta_{1}}\right|\right)=p_{0}\mathbb{E}\left(\left|\kappa_{\delta_{1}}\right|\log^{+}\left|\kappa_{\delta_{1}}\right||\left|\kappa_{\delta_{1}}\right|\le1\right)\nonumber \\
&\quad\quad\quad+p_{1}\mathbb{E}\left(\left|\kappa_{\delta_{1}}\right|\log^{+}\left|\kappa_{\delta_{1}}\right||\left|\kappa_{\delta_{1}}\right|>1\right)\text{,}\label{eq: Law of total}
\end{align}
where $p_{0}=\mathbb{P}\left(\left|\kappa_{\delta_{1}}\right|\le1\right)$
and $p_{1}=\mathbb{P}\left(\left|\kappa_{\delta_{1}}\right|>1\right)$.
The first term on the right-hand side of (\ref{eq: Law of total})
is equal to zero, since $\log^{+}\left|\kappa_{\delta_{1}}\right|=\log\left(1\right)=0$,
whenever $\left|\kappa_{\delta_{1}}\right|\le1$. Thus, we need only
be concerned with bounding the second term.

For $\left|\kappa_{\delta_{1}}\right|>1$, $\left|\kappa_{\delta_{1}}\right|\log\left|\kappa_{\delta_{1}}\right|\le\left|\kappa_{\delta_{1}}\right|^{2}$,
thus
\[
\mathbb{E}\left(\left|\kappa_{\delta_{1}}\right|\log^{+}\left|\kappa_{\delta_{1}}\right||\left|\kappa_{\delta_{1}}\right|>1\right)\le\mathbb{E}\left(\left|\kappa_{\delta_{1}}\right|^{2}|\left|\kappa_{\delta_{1}}\right|>1\right)
\]
The condition that $\mathbb{E}\left(\left|\kappa_{\delta_{1}}\right|\log^{+}\left|\kappa_{\delta_{1}}\right|\right)<\infty$
is thus fulfilled if $\mathbb{E}\left(\left|\kappa_{\delta_{1}}\right|^{2}|\left|\kappa_{\delta_{1}}\right|>1\right)<\infty$,
which is equivalent to

\begin{align*}
\mathbb{E}\left(\left|\kappa_{\delta_{1}}\right|^{2}\right)=&p_{0}\mathbb{E}\left(\left|\kappa_{\delta_{1}}\right|^{2}|\left|\kappa_{\delta_{1}}\right|\le1\right)\\
+&p_{1}\mathbb{E}\left(\left|\kappa_{\delta_{1}}\right|^{2}|\left|\kappa_{\delta_{1}}\right|>1\right)<\infty\text{,}
\end{align*}
by virtue of the integrability of $\left\{ \left|\kappa_{\delta_{1}}\right|^{2}|\left|\kappa_{\delta_{1}}\right|\le1\right\} $
implying the existence of
$$\mathbb{E}\left(\left|\kappa_{\delta_{1}}\right|^{2}|\left|\kappa_{\delta_{1}}\right|\le1\right),$$
since it is defined on a bounded support.

Next, by the triangle inequality,
$$\left|\kappa_{\delta_{1}}\right|\le2\left(\left\Vert \bm{X}_{1}\right\Vert _{2}+\left\Vert \bm{X}_{2}\right\Vert _{2}+\left\Vert \bm{Y}_{1}\right\Vert _{2}+\left\Vert \bm{Y}_{2}\right\Vert _{2}\right)\text{,}$$
and hence
\begin{align*}
\left|\kappa_{\delta_{1}}\right|^{2} & \le4\left(\left\Vert \bm{X}_{1}\right\Vert _{2}^{2}+\left\Vert \bm{X}_{2}\right\Vert _{2}^{2}+\left\Vert \bm{Y}_{1}\right\Vert _{2}^{2}+\left\Vert \bm{Y}_{2}\right\Vert _{2}^{2}\right)\\
 & +8(\left\Vert \bm{X}_{1}\right\Vert _{2}\left\Vert \bm{X}_{2}\right\Vert _{2}+\left\Vert \bm{X}_{1}\right\Vert _{2}\left\Vert \bm{Y}_{1}\right\Vert _{2}+\left\Vert \bm{X}_{1}\right\Vert _{2}\left\Vert \bm{Y}_{2}\right\Vert _{2}\\
 & +\left\Vert \bm{X}_{2}\right\Vert _{2}\left\Vert \bm{Y}_{1}\right\Vert _{2}+\left\Vert \bm{X}_{2}\right\Vert _{2}\left\Vert \bm{Y}_{2}\right\Vert _{2}+\left\Vert \bm{Y}_{1}\right\Vert _{2}\left\Vert \bm{Y}_{2}\right\Vert _{2})\text{.}
\end{align*}
\normalsize
Since $\bm{X}_{1},\bm{X}_{2},\bm{Y}_{1},\bm{Y}_{2}$ are all pairwise
independent, and $\bm{X}_{1}$ and $\bm{Y}_{1}$ are identically distributed
to $\bm{X}_{2}$ and $\bm{Y}_{2}$, respectively, we have 
\begin{align*}
\mathbb{E}\left(\left|\kappa_{\delta_{1}}\right|^{2}\right) & 
\le8\left[\mathbb{E}\left(\left\Vert \bm{X}_{1}\right\Vert _{2}^{2}\right)+\mathbb{E}\left(\left\Vert \bm{Y}_{1}\right\Vert _{2}^{2}\right)\right]\\
&+8\left[\left(\mathbb{E}\left\Vert \bm{X}_{1}\right\Vert _{2}\right)^{2}+\left(\mathbb{E}\left\Vert \bm{Y}_{1}\right\Vert _{2}\right)^{2}\right]\\ & +32\left[\mathbb{E}\left\Vert \bm{X}_{1}\right\Vert _{2}\mathbb{E}\left\Vert \bm{Y}_{1}\right\Vert _{2}\right]\text{,}
\end{align*}
which concludes the proof since
$\mathbb{E}\left\Vert \bm{X}_{1}\right\Vert^2 _{2}+\mathbb{E}\left\Vert \bm{Y}_{1}\right\Vert^2 _{2}<\infty$ is satisfied by the hypothesis and implies $\mathbb{E}\left\Vert \bm{X}_{1}\right\Vert _{2}+\mathbb{E}\left\Vert \bm{Y}_{1}\right\Vert _{2}<\infty$.
%\hfill \qed  
\end{proof}
We note that condition (\ref{eq: Square condition}) is stronger than
a direct application of condition (\ref{eq: Sen condition}), which
may be preferable in some situations. However, condition (\ref{eq: Square condition})
is somewhat more intuitive and verifiable since it is concerned with
the polynomial moments of norms and does not involve the piecewise
function $\log^{+}x$. It is also suggested in \cite{Zygmund:1951aa}
that one may replace $\log^{+}x$ by $\log\left(2+x\right)$ if it
is more convenient to do so. 
We further note that (\ref{eq: Square condition}) is required for establishing almost sure convergence, and is stronger than what is needed to ensure convergence in probability, as is established in \cite{Szekely2004} and \cite{Gretton:2012aa}. 

Combining the result of Theorem \ref{thm: Asymptotic pseudo-posteriori-1}
with Corollary \ref{cor: Energy distance} and the conclusion from
Proposition 1 of \cite{Szekely2013} provided in Equation~\eqref{eq: Specific ES} yields the key result below.
This result justifies the use of the V-statistic estimator $\mathcal{V}_{\delta_{1}}\left(\mathbf{X}_{n},\mathbf{Y}_{m}\right)$
for the energy distance $\mathcal{E}\left(\bm{X},\bm{Y}\right)$ within the
IS-ABC framework, \textcolor{black}{and is comparable to Corollaries 1--3 of \cite{Jiang2018} regarding the large sample asymptotics of other discrepancy measurements.}
\begin{cor}
\label{cor: Specialization to ES} Under the assumptions of Corollary
\ref{cor: Energy distance}. If $\mathcal{D}\left(\mathbf{X}_{n},\mathbf{Y}_{m}\right)=\mathcal{V}_{\delta_{1}}\left(\mathbf{X}_{n},\mathbf{Y}_{m}\right)$,
then the conclusion of Theorem \ref{thm: Asymptotic pseudo-posteriori-1}
follows with
\begin{align*}
\mathcal{D}\left(\mathbf{X}_{n},\mathbf{Y}_{m}\right)\rightarrow&\frac{\Gamma\left(\frac{d+1}{2}\right)}{\pi^{\left(d+1\right)/2}}\int_{\mathbb{R}^{d}}\frac{\left|\varphi\left(\bm{t};\bm{\theta}_{0}\right)-\varphi\left(\bm{t}; \bm{\theta}\right)\right|^{2}}{\left\Vert \bm{t}\right\Vert _{2}^{d+1}}\ddr  \bm{t}\\
=&\mathcal{D}_{\infty}\left(\bm{\theta}_{0},\bm{\theta}\right)\text{,}
\end{align*}
almost surely, as $n\rightarrow\infty$, where $\mathcal{D}_{\infty}\left(\bm{\theta}_{0},\bm{\theta}\right)\ge0$
and $\mathcal{D}_{\infty}\left(\bm{\theta}_{0},\bm{\theta}\right)=0$, if and
only if $\bm{\theta}_{0}=\bm{\theta}$.
\end{cor}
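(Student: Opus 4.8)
The plan is to treat this corollary as an assembly of the machinery already built, since all the analytical content has been isolated in the preceding results. First I would set $\mathcal{D}\left(\mathbf{X}_{n},\mathbf{Y}_{m}\right)=\mathcal{V}_{\delta_{1}}\left(\mathbf{X}_{n},\mathbf{Y}_{m}\right)$ and invoke Corollary \ref{cor: Energy distance}: under the moment condition \eqref{eq: Square condition}, which is imported through the hypotheses, $\mathcal{V}_{\delta_{1}}\left(\mathbf{X}_{n},\mathbf{Y}_{m}\right)$ converges almost surely to $\mathcal{E}\left(\bm{X},\bm{Y}\right)$ as $\min\left\{n,m\right\}\to\infty$. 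Combining this with the characteristic-function representation \eqref{eq: Specific ES} (that is, Proposition 1 of \cite{Szekely2013}) identifies the limit as the stated integral, which I would then label $\mathcal{D}_{\infty}\left(\bm{\theta}_{0},\bm{\theta}\right)$. This supplies exactly the almost sure convergence hypothesis required by Theorem \ref{thm: Asymptotic pseudo-posteriori-1}.

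Next I would read off the two structural properties of $\mathcal{D}_{\infty}$. Non-negativity is immediate, since the integrand $\left|\varphi\left(\bm{t};\bm{\theta}_{0}\right)-\varphi\left(\bm{t};\bm{\theta}\right)\right|^{2}/\left\Vert\bm{t}\right\Vert_{2}^{d+1}$ is non-negative throughout $\mathbb{R}^{d}$. For the characterization of the zero set, the integral vanishes if and only if this non-negative integrand is zero for almost every $\bm{t}$, i.e. $\varphi\left(\bm{t};\bm{\theta}_{0}\right)=\varphi\left(\bm{t};\bm{\theta}\right)$ for almost all $\bm{t}$; by continuity of characteristic functions this then holds for all $\bm{t}$, and by the uniqueness theorem for characteristic functions it is equivalent to $f\left(\cdot;\bm{\theta}_{0}\right)$ and $f\left(\cdot;\bm{\theta}\right)$ inducing the same distribution. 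To pass from equality of distributions to equality of parameters, $\bm{\theta}_{0}=\bm{\theta}$, I would invoke identifiability of the parametric family, namely injectivity of $\bm{\theta}\mapsto f\left(\cdot;\bm{\theta}\right)$ as a map into distributions. This is the one ingredient not supplied by the characteristic-function calculus itself, and is therefore the step I would flag as requiring an explicit standing assumption.

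Finally I would apply Theorem \ref{thm: Asymptotic pseudo-posteriori-1}. Its remaining hypotheses concern only the weight function, which is already assumed piecewise continuous, decreasing in $d$, and bounded by $a<\infty$, together with the continuity requirement $\mathcal{D}_{\infty}\left(\bm{\theta}_{0},\bm{\theta}\right)\in C\left(w\left(\cdot,\epsilon\right)\right)$. With the almost sure convergence of $\mathcal{D}\left(\mathbf{X}_{n},\mathbf{Y}_{m}\right)$ to $\mathcal{D}_{\infty}\left(\bm{\theta}_{0},\bm{\theta}\right)$ now in hand, the conclusion \eqref{eq: convergence equation} of Theorem \ref{thm: Asymptotic pseudo-posteriori-1} transfers verbatim, with the limiting discrepancy given explicitly by the integral above. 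I expect the main obstacle to be bookkeeping rather than analytical difficulty: ensuring that the moment hypothesis \eqref{eq: Square condition} of Corollary \ref{cor: Energy distance} and the continuity-at-the-limit condition of Theorem \ref{thm: Asymptotic pseudo-posteriori-1} are both genuinely in force, and that identifiability is assumed, so that the clean parameter-level statement that $\mathcal{D}_{\infty}\left(\bm{\theta}_{0},\bm{\theta}\right)=0$ holds if and only if $\bm{\theta}_{0}=\bm{\theta}$ is warranted.
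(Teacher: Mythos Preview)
Your proposal is correct and mirrors the paper's own argument, which is not given as a formal proof but simply states that the corollary follows by combining Theorem~\ref{thm: Asymptotic pseudo-posteriori-1}, Corollary~\ref{cor: Energy distance}, and Proposition~1 of \cite{Szekely2013}. Your flagging of identifiability of the parametric family as an implicit assumption needed for the equivalence $\mathcal{D}_{\infty}(\bm{\theta}_0,\bm{\theta})=0 \Leftrightarrow \bm{\theta}_0=\bm{\theta}$ is a valid observation that the paper does not make explicit.
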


% \subsection{Further theoretical results}
\subsection{Behavior as $\mathbf{\epsilon} \to \mathbf{0}$}

Let $\mathbb{F}$ be the set of probability distributions on $\mathbb{X}$.
From \cite[Thm. 22]{Sejdinovic:2013aa}, we have the fact that $\mathcal{E}^{1/2}\left(\bm{X},\bm{Y}\right)=\mathcal{E}^{1/2}\left(F_{X},F_{Y}\right)$
is a metric on $\mathbb{F}$, where $\bm{X}$ and $\bm{Y}$ have data
generating process that are characterized by $F_{X}$ and $F_{Y}$,
respectively. As such, when we
take $\bm{X}$ and $\bm{Y}$ arising from two empirical distributions
with an equal number of masses (defined on $\mathbf{x}_{n}$ and $\mathbf{y}_{n}$,
for instance), then we obtain the fact that $\mathcal{E}^{1/2}\left(\bm{X},\bm{Y}\right)=0$
if and only if the two empirical distributions are the same. In other words,
$\mathbf{x}_{n}$ and $\mathbf{y}_{n}$ are equal, in the sense that
the elements of $\mathbf{x}_{n}$ and $\mathbf{y}_{n}$ are equal
up to a permutation. Proposition 2 of \cite{Bernton:2017aa} then provides the following
result in the case when $n$ is fixed.
\begin{prop} \label{prop: Bernton Prop 2}
Assume that $w\left(d,\epsilon\right)$ has form \eqref{eq: rejection}, where $d=\mathcal{V}_{\delta_{1}}$,
and that $f\left(\mathbf{x}_{n}|\bm{\theta}\right)$ is a continuous
and exchangeable PDF. Furthermore, assume that
\[
\sup_{\bm{\theta}\in\mathbb{T}\backslash\left\{ \Theta\subset\mathbb{T}:\int_{\Theta}\pi\left(\bm{\theta}\right)\mathrm{d}\bm{\theta}=0\right\} }f\left(\mathbf{x}_{n}|\bm{\theta}\right)<\infty
\]
and suppose that there exists some $\bar{\epsilon}>0$, where
\[
\sup_{\bm{\theta}\in\mathbb{T}\backslash\left\{ \Theta\subset\mathbb{T}:\int_{\Theta}\pi\left(\bm{\theta}\right)\mathrm{d}\bm{\theta}=0\right\} }\sup_{\left\{ \mathbf{y}_{n}:d\left(\mathbf{x}_{n},\mathbf{y}_{n}\right)\le\bar{\epsilon}\right\} }f\left(\mathbf{y}_{n}|\bm{\theta}\right)<\infty\text{.}
\]
Then, for fixed $\mathbf{x}_{n}$, the pseudo-posterior PDF \eqref{eq: pseudo posterior} converges
strongly to the posterior PDF \eqref{eq: posterior}, as $\epsilon\rightarrow 0$.
\end{prop}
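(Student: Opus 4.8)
The plan is to recognize this statement as a specialization of Proposition~2 of \cite{Bernton:2017aa} to the particular discrepancy $d=\mathcal{V}_{\delta_{1}}$, so that the bulk of the argument reduces to verifying that the energy-distance V-statistic meets the structural hypothesis required by that general result. The single property their proof hinges upon is that the discrepancy \emph{separates} empirical measures at zero: $d\left(\mathbf{x}_{n},\mathbf{y}_{n}\right)=0$ if and only if $\mathbf{y}_{n}$ coincides with $\mathbf{x}_{n}$ up to a permutation of its elements. First I would establish this directly from the metric property of $\mathcal{E}^{1/2}$ on $\mathbb{F}$ supplied by \cite[Thm.~22]{Sejdinovic:2013aa}: since $\mathcal{E}^{1/2}$ is a genuine metric, the energy distance between two $n$-atom empirical measures vanishes exactly when those measures are identical, which for equal atom counts means equality of the point sets up to reordering.

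With the separating property in hand, the mechanism of the limit is transparent. Writing out \eqref{eq: pseudo posterior} under the rejection weight \eqref{eq: rejection} with $m=n$, the approximate likelihood becomes the probability $L_{n,\epsilon}\left(\mathbf{x}_{n}|\bm{\theta}\right)=\mathbb{P}_{\bm{\theta}}\left(d\left(\mathbf{x}_{n},\mathbf{Y}_{n}\right)<\epsilon\right)$. As $\epsilon\to0$ the sublevel set $\left\{\mathbf{y}_{n}:d\left(\mathbf{x}_{n},\mathbf{y}_{n}\right)<\epsilon\right\}$ contracts onto the finite set of permutations of $\mathbf{x}_{n}$; because $f\left(\cdot\,|\bm{\theta}\right)$ is continuous and exchangeable, it attains the common value $f\left(\mathbf{x}_{n}|\bm{\theta}\right)$ at every such permutation. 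Consequently, after dividing by the Lebesgue volume $V_{\epsilon}$ of that sublevel set --- a normalizer that depends on $\mathbf{x}_{n}$ and $\epsilon$ but \emph{not} on $\bm{\theta}$ --- one obtains the pointwise limit $L_{n,\epsilon}\left(\mathbf{x}_{n}|\bm{\theta}\right)/V_{\epsilon}\to f\left(\mathbf{x}_{n}|\bm{\theta}\right)$. Since the same $\bm{\theta}$-free factor $V_{\epsilon}$ appears in both the numerator and the normalizing constant $c_{n,\epsilon}\left(\mathbf{x}_{n}\right)$ of \eqref{eq: pseudo posterior}, it cancels in the ratio, leaving $\pi\left(\bm{\theta}\right)f\left(\mathbf{x}_{n}|\bm{\theta}\right)$ over its integral, which is precisely the posterior \eqref{eq: posterior}.

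To make this cancellation rigorous I would control the denominator by dominated convergence: the two supremum hypotheses in the statement are exactly what supply a $\bm{\theta}$-uniform integrable envelope for $L_{n,\epsilon}\left(\mathbf{x}_{n}|\bm{\theta}\right)/V_{\epsilon}$ on the complement of a prior-null set, since $L_{n,\epsilon}/V_{\epsilon}\le\sup_{\left\{\mathbf{y}_{n}:d\le\bar{\epsilon}\right\}}f\left(\mathbf{y}_{n}|\bm{\theta}\right)$ once $\epsilon\le\bar{\epsilon}$, and the resulting dominating function is a finite constant times $\pi$. Hence $\int_{\mathbb{T}}\pi\left(\bm{\theta}\right)L_{n,\epsilon}\left(\mathbf{x}_{n}|\bm{\theta}\right)/V_{\epsilon}\,\ddr\bm{\theta}$ converges to $c\left(\mathbf{x}_{n}\right)$. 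A continuous mapping argument then gives pointwise convergence of the density ratio $\pi_{n,\epsilon}\left(\bm{\theta}|\mathbf{x}_{n}\right)\to\pi\left(\bm{\theta}|\mathbf{x}_{n}\right)$, and since both sides are bona fide densities, Scheff\'{e}'s lemma upgrades this to the claimed strong (total-variation) convergence.

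The hard part will be the middle step: extracting the limit $L_{n,\epsilon}/V_{\epsilon}\to f\left(\mathbf{x}_{n}|\bm{\theta}\right)$ requires understanding the geometry of the energy-distance sublevel sets, which are not Euclidean balls, well enough to isolate a single $\bm{\theta}$-independent volume factor and to guarantee that the mass genuinely concentrates at the permutation points rather than leaking along directions in which $d$ degenerates. This is precisely the technical content that \cite{Bernton:2017aa} abstract away behind their general distance hypotheses, so the cleanest route is to verify those hypotheses for $\mathcal{V}_{\delta_{1}}$ --- chiefly the separating property together with the continuity and exchangeability of $f$ --- and then invoke their Proposition~2, rather than re-deriving the concentration estimate by hand.
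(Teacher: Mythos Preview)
Your proposal is correct and matches the paper's approach exactly: the paper does not give a standalone proof but instead verifies, via \cite[Thm.~22]{Sejdinovic:2013aa}, that $\mathcal{E}^{1/2}$ is a metric on $\mathbb{F}$ and hence that $\mathcal{V}_{\delta_1}$ vanishes on empirical measures precisely when the samples agree up to permutation, and then invokes Proposition~2 of \cite{Bernton:2017aa} directly. Your additional sketch of the volume-normalization, dominated-convergence, and Scheff\'{e} mechanism goes beyond what the paper records, but is consistent with the argument in \cite{Bernton:2017aa} that the paper is citing.
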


Let us suppose that the empirical distribution of $\bm{X}_{n}$ is
denoted $\hat{F}_{n}$ and that each observation of $\bm{X}_{n}$
is generated from a process that can be characterized by the distribution
$F_{0}$ (we write the joint distribution of $\bm{X}_{n}$ as $F_{n}$).
We shall also write $F_{n}^{\bm{\theta}}$ as the probability distribution
corresponding to the PDF $f\left(\mathbf{x}_{n}|\bm{\theta}\right)$,
and $\hat{F}_{n}^{\bm{\theta}}$ as the empirical distribution obtained
from a sample $\bm{Y}_{n}$ with data generating process that is characterized
by $F_{n}^{\bm{\theta}}$.

Next, we let the probability distribution corresponding to the prior
and pseudo-posterior PDFs  of the ES-based ABC process
with rejection weights (i.e. $\pi(\bm{\theta})$ and \eqref{eq: pseudo posterior}) as $\Pi$ and $\Pi_{n}^{\epsilon}$, respectively.
And finally, let us denote the probability of the set $\mathbb{A}$
with respect to the probability distribution $F$ as $F\left(\mathbb{A}\right)$.
In order to state our next result, we require the following assumptions.
\begin{description}
\item [{A1}] The data generating process of $\bm{X}_{n}$ is such that,
for every $\varepsilon>0$,
\[
\lim_{n\rightarrow\infty}F_{n}\left(\mathcal{E}^{1/2}\left(\hat{F}_{n},F_{0}\right)>\varepsilon\right)=0\text{.}
\]
\item [{A2}] For every $\epsilon>0$,
\[
F_{n}^{\theta}\left(\mathcal{E}^{1/2}\left(\hat{F}_{n}^{\bm{\theta}},F_{1}^{\bm{\theta}}\right)>\epsilon\right)\le c\left(\bm{\theta}\right)s_{n}\left(\epsilon\right)
\]
where $s_{n}\left(\epsilon\right)$ is a sequence of functions that
is strictly decreasing in $\epsilon$ for all $n$, and $s_{n}\left(\epsilon\right)\rightarrow0$
as $n\rightarrow\infty$, for fixed $\epsilon$. Here: $c\left(\bm{\theta}\right)$
is a positive function that is integrable with respect to $\Pi$ and
satisfies $c\left(\bm{\theta}\right)\le c_{0}$ for some $c_{0}$,
for all $\bm{\theta}$ such that, for some $\delta_{0}>0$, $\mathcal{E}^{1/2}\left(F_{0},F_{1}^{\bm{\theta}}\right)\le\delta_{0}+\epsilon_{0}$,
where $\epsilon_{0}=\min_{\bm{\theta}\in\mathbb{T}}\mathcal{E}^{1/2}\left(F_{0},F_{1}^{\bm{\theta}}\right)$.
\item [{A3}] There exists an $L>0$ and a $c_{\pi}>0$ such that, for each
 sufficiently small $\epsilon>0$,
\[
\Pi\left(\bm{\theta}\in\mathbb{T}:\mathcal{E}^{1/2}\left(F_{0},F_{1}^{\bm{\theta}}\right)\le\epsilon+\epsilon_{0}\right)\ge c_{\pi}\epsilon^{L}\text{.}
\]
\end{description}
Upon making Assumptions A1--A3, we may apply the proof process of
\cite[Prop. 3]{Bernton:2017aa} directly, replacing the Wasserstein metric with the
energy metric $\mathcal{E}^{1/2}$, where appropriate. Such a process
yields the following result.
\begin{prop} \label{prop: Bernton Prop 3}
Along with A1--A3, assume that there
exists a sequence $\left\{ \epsilon_{n}\right\} _{n=1}^{\infty}$,
such that $\epsilon_{n}\rightarrow0$, $s_{n}\left(\epsilon_{n}\right)\rightarrow0$
and $F_{n}\left(\mathcal{E}^{1/2}\left(\hat{F}_{n},F_{0}\right)\le\epsilon_{n}\right)\rightarrow1$,
as $n\rightarrow\infty$. Then, the ES-based ABC algorithm with $m=n$,
discrepancy $d=\mathcal{V}_{\delta_{1}}^{1/2}$, and rejection weights
using $\epsilon=\epsilon_{n}+\epsilon_{0}$ satisfies the inequality
\[
\Pi_{n}^{\epsilon_{n}+\epsilon_{0}}\left[\mathcal{E}^{1/2}\left(F_0,F_{1}^{\bm{\theta}}\right)>\epsilon_{0}+\frac{4\epsilon_{n}}{3}+s_{n}^{-1}\left(\frac{\epsilon_{n}^{L}}{R}\right)\right]\le\frac{C}{R}\text{,}
\]
for some $C,R\in\left(0,\infty\right)$, with probability going to
1 as $n\rightarrow\infty$ (with respect to $F_0$).
\end{prop}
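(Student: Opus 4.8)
The plan is to transcribe the proof of Proposition 3 of \cite{Bernton:2017aa}, which is written for the Wasserstein distance, replacing that distance everywhere by the energy metric $\mathcal{E}^{1/2}$. The Wasserstein argument uses only two structural facts: that the discrepancy is a genuine metric on $\mathbb{F}$ (so that the triangle inequality and the identity of indiscernibles hold), and that the ABC discrepancy between two samples equals the metric distance between their empirical measures. The first is exactly \cite[Thm. 22]{Sejdinovic:2013aa}, invoked above, and the second holds because $\mathcal{V}_{\delta_{1}}\left(\mathbf{X}_{n},\mathbf{Y}_{n}\right)=\mathcal{E}\left(\hat{F}_{n},\hat{F}_{n}^{\bm{\theta}}\right)$, so that the chosen discrepancy $d=\mathcal{V}_{\delta_{1}}^{1/2}=\mathcal{E}^{1/2}\left(\hat{F}_{n},\hat{F}_{n}^{\bm{\theta}}\right)$. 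Since $\mathcal{E}^{1/2}$ enjoys both properties, no analytic machinery beyond that of \cite{Bernton:2017aa} is needed.

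First I would work on the $F_n$-high-probability event $G_n=\left\{ \mathcal{E}^{1/2}\left(\hat{F}_{n},F_{0}\right)\le\epsilon_{n}\right\}$, whose probability tends to one by the standing hypothesis on $\left\{ \epsilon_{n}\right\}$; this is the source of the phrase ``with probability going to $1$'' in the conclusion. On $G_n$ I would express the pseudo-posterior mass of the bad set $\mathbb{A}=\left\{ \bm{\theta}:\mathcal{E}^{1/2}\left(F_{0},F_{1}^{\bm{\theta}}\right)>\epsilon_{0}+\tfrac{4\epsilon_{n}}{3}+s_{n}^{-1}\left(\epsilon_{n}^{L}/R\right)\right\}$ as the ratio of $\int_{\mathbb{A}}\pi\left(\bm{\theta}\right)F_{n}^{\bm{\theta}}\left(d\le\epsilon_{n}+\epsilon_{0}\right)\ddr\bm{\theta}$ to the full normalizing integral $\int_{\mathbb{T}}\pi\left(\bm{\theta}\right)F_{n}^{\bm{\theta}}\left(d\le\epsilon_{n}+\epsilon_{0}\right)\ddr\bm{\theta}$, and bound the two pieces separately.

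For the numerator, the triangle inequality $\mathcal{E}^{1/2}\left(F_{0},F_{1}^{\bm{\theta}}\right)\le\mathcal{E}^{1/2}\left(F_{0},\hat{F}_{n}\right)+\mathcal{E}^{1/2}\left(\hat{F}_{n},\hat{F}_{n}^{\bm{\theta}}\right)+\mathcal{E}^{1/2}\left(\hat{F}_{n}^{\bm{\theta}},F_{1}^{\bm{\theta}}\right)$ shows that, on $G_n$, an accepted $\bm{\theta}\in\mathbb{A}$ must have $\mathcal{E}^{1/2}\left(\hat{F}_{n}^{\bm{\theta}},F_{1}^{\bm{\theta}}\right)$ at least $s_{n}^{-1}\left(\epsilon_{n}^{L}/R\right)$ once the data-approximation slack and the acceptance radius $\epsilon_{n}+\epsilon_{0}$ have been subtracted from the threshold; it is precisely this bookkeeping of the two slacks that fixes the constant $\tfrac{4}{3}$. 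Assumption A2 then bounds the probability of this event by $c\left(\bm{\theta}\right)s_{n}\left(s_{n}^{-1}\left(\epsilon_{n}^{L}/R\right)\right)=c\left(\bm{\theta}\right)\epsilon_{n}^{L}/R$, so integrating against the prior and using the $\Pi$-integrability of $c$ yields a numerator of order $\epsilon_{n}^{L}/R$. For the denominator, I would restrict the integral to a neighborhood $\left\{ \bm{\theta}:\mathcal{E}^{1/2}\left(F_{0},F_{1}^{\bm{\theta}}\right)\le\epsilon_{0}+\beta\epsilon_{n}\right\}$ of the pseudo-true value: A3 gives this set prior mass at least $c_{\pi}\left(\beta\epsilon_{n}\right)^{L}$, while on $G_n$ the triangle inequality together with the uniform bound $c\left(\bm{\theta}\right)\le c_{0}$ in A2 forces the acceptance probability there to stay bounded away from zero, giving a denominator of order $\epsilon_{n}^{L}$. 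Forming the ratio cancels the $\epsilon_{n}^{L}$ factors and leaves a bound of the form $C/R$.

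The main obstacle, exactly as in \cite{Bernton:2017aa}, is the denominator lower bound: one must certify that a prior-non-negligible neighborhood of the minimizer $\epsilon_{0}=\min_{\bm{\theta}\in\mathbb{T}}\mathcal{E}^{1/2}\left(F_{0},F_{1}^{\bm{\theta}}\right)$ is accepted with probability bounded away from zero, which is why A3 (a prior-mass rate $\epsilon^{L}$) and the uniform control $c\left(\bm{\theta}\right)\le c_{0}$ in A2 are imposed. The delicate part is choosing the neighborhood radius $\beta\epsilon_{n}$ and allocating the three triangle-inequality slacks---the data error controlled by $G_n$, the acceptance radius $\epsilon_{n}+\epsilon_{0}$, and the simulation-side fluctuation controlled through $s_{n}$ in A2---so that the same power $\epsilon_{n}^{L}$ appears in numerator and denominator and cancels, which is what simultaneously pins down the constant $\tfrac{4}{3}$ and the argument $\epsilon_{n}^{L}/R$ of $s_{n}^{-1}$ in the stated threshold. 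Because $\mathcal{E}^{1/2}$ obeys the triangle inequality identically to the Wasserstein distance, every such step transfers verbatim.
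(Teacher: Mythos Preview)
Your proposal is correct and follows exactly the paper's own approach: the paper does not give an independent proof but simply states that one may ``apply the proof process of \cite[Prop.~3]{Bernton:2017aa} directly, replacing the Wasserstein metric with the energy metric $\mathcal{E}^{1/2}$, where appropriate.'' You have gone further by actually sketching the numerator/denominator decomposition and the triangle-inequality bookkeeping from that reference, correctly identifying the two structural ingredients (that $\mathcal{E}^{1/2}$ is a metric on $\mathbb{F}$ by \cite[Thm.~22]{Sejdinovic:2013aa}, and that $\mathcal{V}_{\delta_{1}}^{1/2}$ equals $\mathcal{E}^{1/2}$ between empirical measures) that make the transcription go through.
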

The hypotheses of Proposition \ref{prop: Bernton Prop 2} are straight
forward and the conclusion implies that pseudo-posterior PDF of the
ES-based ABC procedure can approximate the posterior PDF, based on
the likelihood of the data generating process of $\mathbf{x}_{n}$,
to an arbitrary level of accuracy, when $\epsilon$ is made sufficiently
small. This however does not mean that one should make $\epsilon$ too
small in practice, as the effort required to simulate data will become
more difficult and the process becomes more computationally intensive
in such cases. \textcolor{black}{Note that the value of $\epsilon$ is often chosen in a pragmatic way as a quantile (of a small order, usually less than 5\%) of all the distances that are obtained in the ABC sample, thus deciding how many samples are kept as a fraction of the entire ABC replications. This procedure was used in the ABC algorithms of \cite{beaumont2002approximate}, \cite{blum2010choosing}, and \cite{jabot2013easy}.
}

Next, we note that the assumptions (other than A1, which is generally true for stationary and ergodic data; cf. \cite[Secs. 7 and 8]{Szekely2013}) and the conclusion
of Proposition \ref{prop: Bernton Prop 3} are more complex. Due to
the lack of ease by which A2 and A3 may be validated, the proposition
is more useful as an existence result regarding what can be expected
in theory, with respect to how quickly the ES-based ABC algorithm
converges in $n$, rather than providing any practical guidance. \textcolor{black}{A suggestion by \cite{Bernton:2017aa} is that one may potentially apply the theory of \cite{fournier2015rate} and \cite{weed2019sharp} in order to validate assumption A2.}

Under further assumptions, the concentration with respect
to the discrepancy in distributions can be transferred to a concentration
result, with respect to parameter vector in the space $\mathbb{T}$ (cf. \cite[Cor. 1]{Bernton:2017aa}).

\subsection{Illustration on a simple example}\label{sec:theory-illustration}

We use $\bm{X}\sim\mathcal{L}$ to denote that the random variable $\bm{X}$ has probability law $\mathcal{L}$. Furthermore, we denote the normal law by $\mathcal{N}\left(\bm{\mu},\bm{\Sigma}\right)$, where $\bm{X}\sim\mathcal{N}\left(\bm{\mu},\bm{\Sigma}\right)$ states that the DGP of $\bm{X}$ is multivariate normal distribution with mean vector $\bm{\mu}$ and covariance matrix $\bm{\Sigma}$.

For illustrating the theoretical results, we investigate the pseudo-posterior limit on a simple univariate Gaussian location model  $\mathcal{N}(\bm{\theta},\sigma^2)$ (with known variance $\sigma^2$) with conjugate Gaussian prior $\bm{\theta}\sim \mathcal{N}(0,\tau^2)$ (with variance $\tau^2$ fixed). We have  IID observations $\bm{X}_{1},\dots,\bm{X}_{n}\mid \bm{\theta}_0 \sim \mathcal{N}(\bm{\theta}_0,\sigma^2)$, and  IID replicates $\bm{Y}_{1},\dots,\bm{Y}_{m}\mid \bm{\theta} \sim \mathcal{N}(\bm{\theta},\sigma^2)$. The posterior is 
$\bm{\theta}\mid \bm{X}_{1},\dots,\bm{X}_{n} \sim \mathcal{N}(\hat{\bm{\theta}},\hat{\sigma}^2)$, where 
\begin{equation*}
    \hat{\bm{\theta}} = \frac{n\bar{\bm{X}}_n}{n+\sigma^2/\tau^2}, 
    \quad
    \hat{\sigma}^{-2} = \tau^{-2}+n\sigma^{-2}.
\end{equation*}

In this simple model, the limiting data discrepancy takes the form (up to a proportionality constant) of $\mathcal{D}_{\infty}\left(\bm{\theta}_{0},\bm{\theta}\right) = (\bm{\theta}_{0}-\bm{\theta})^2$ for the energy distance and Kullback--Leibler divergence, and $\mathcal{D}_{\infty}\left(\bm{\theta}_{0},\bm{\theta}\right) = |\bm{\theta}_{0}-\bm{\theta}|$ for the MMD and the (second order) Wasserstein distance.

Theorem~\ref{thm: Asymptotic pseudo-posteriori-1} establishes that the large $n$ and $m$ limit of the pseudo-posterior $\pi_{m,\epsilon}$ is the distribution that we denote here by $\pi_{\infty,\epsilon}\left(\bm{\theta}\right)\propto \pi\left(\bm{\theta}\right)w\left(\mathcal{D}_{\infty}\left(\bm{\theta}_{0},\bm{\theta}\right),\epsilon\right)$. 
For illustrative purposes, let us focus on the case when $\mathcal{D}_{\infty}\left(\bm{\theta}_{0},\bm{\theta}\right) = \vert\bm{\theta}_{0}-\bm{\theta}\mid$, and consider rejection ABC with 
$w\left(d,\epsilon\right)=\left\llbracket d<\epsilon\right\rrbracket$
 and IS-ABC with
$w\left(d,\epsilon\right)=\exp\left(-d^2/2\epsilon^2\right)$. The limiting pseudo-posterior can then be obtained in closed-form as
\begin{align}
    \pi_{\infty,\epsilon}\left(\bm{\theta}\right) 
    &\propto \mathcal{N}(\bm{\theta}\mid 0, \tau^2) \left\llbracket \vert\bm{\theta}-\bm{\theta}_0\vert <\epsilon\right\rrbracket,\nonumber\\
    & = \mathcal{N}_{[\bm{\theta}_0-\epsilon, \bm{\theta}_0+\epsilon]}(\bm{\theta}\mid 0, \tau^2), \label{eq: pseudo posterior reject}
\end{align}
a truncated Gaussian for rejection ABC and
\begin{align}
    \pi_{\infty,\epsilon}\left(\bm{\theta}\right) 
    &\propto \mathcal{N}(\bm{\theta}\mid 0, \tau^2) \exp\left(\frac{(\bm{\theta}_{0}-\bm{\theta})^2}{2\epsilon^2}\right),\nonumber\\
    & = \mathcal{N}(\bm{\theta}\mid \bar{\bm{\theta}}(\epsilon), \bar{\sigma}^2(\epsilon)), \label{eq: pseudo posterior IS}
\end{align}
for IS-ABC, where $\bar{\bm{\theta}}(\epsilon)=\frac{\bm{\theta}_0}{1+\epsilon^2/\tau^2}$ and $\bar{\sigma}^{-2}(\epsilon) = \tau^{-2}+\epsilon^{-2}$. See Figure~\ref{fig: illustration-theory} for an illustration, for various values of $\epsilon$.
% \Figure[t!](topskip=0pt, botskip=0pt, midskip=0pt){fig1.png}
% {Magnetization as a function of applied field.
% It is good practice to explain the significance of the figure in the caption.\label{fig1}}

\begin{figure}[t]
    \centering
    \includegraphics[width=8cm]{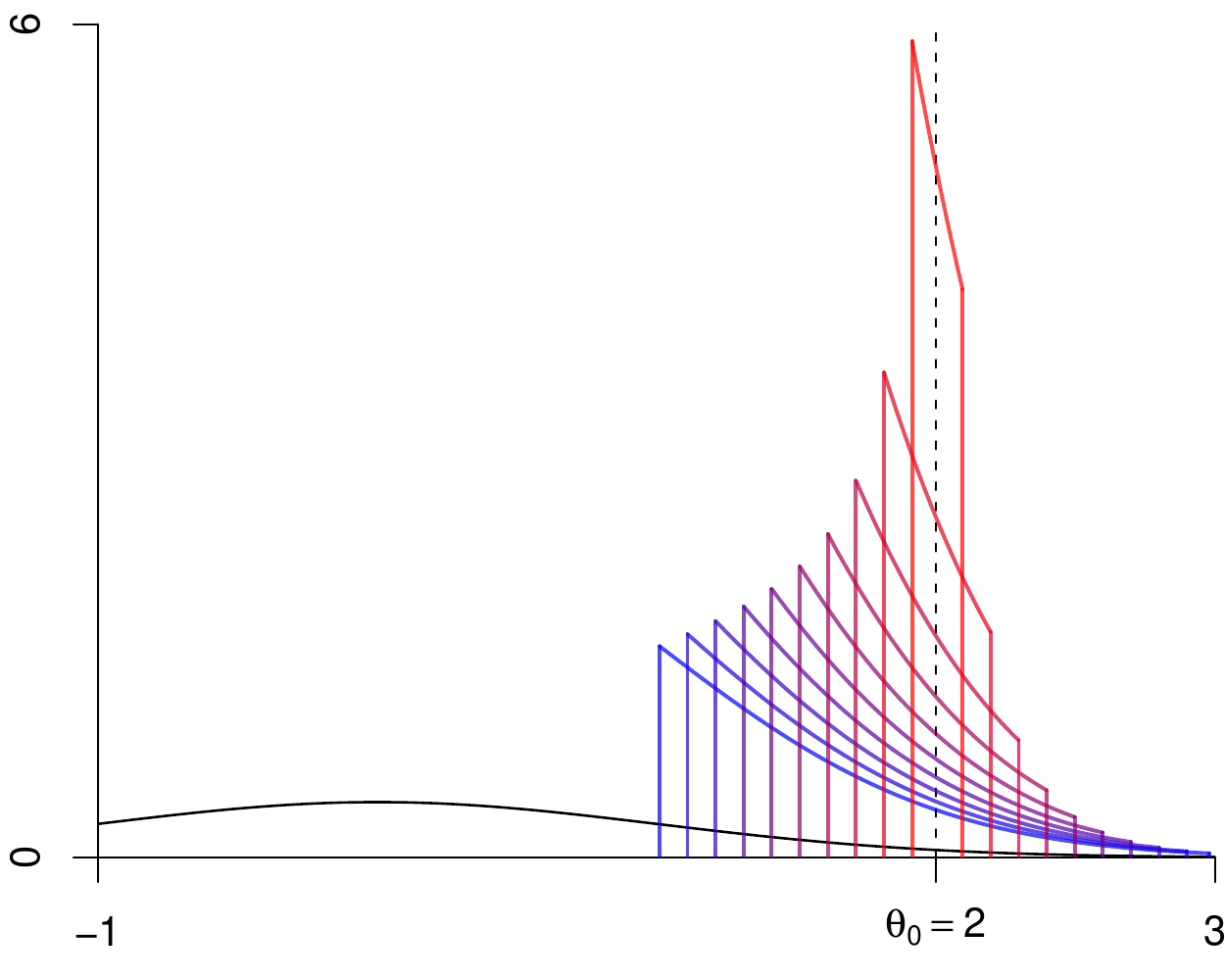}
    \includegraphics[width=8cm]{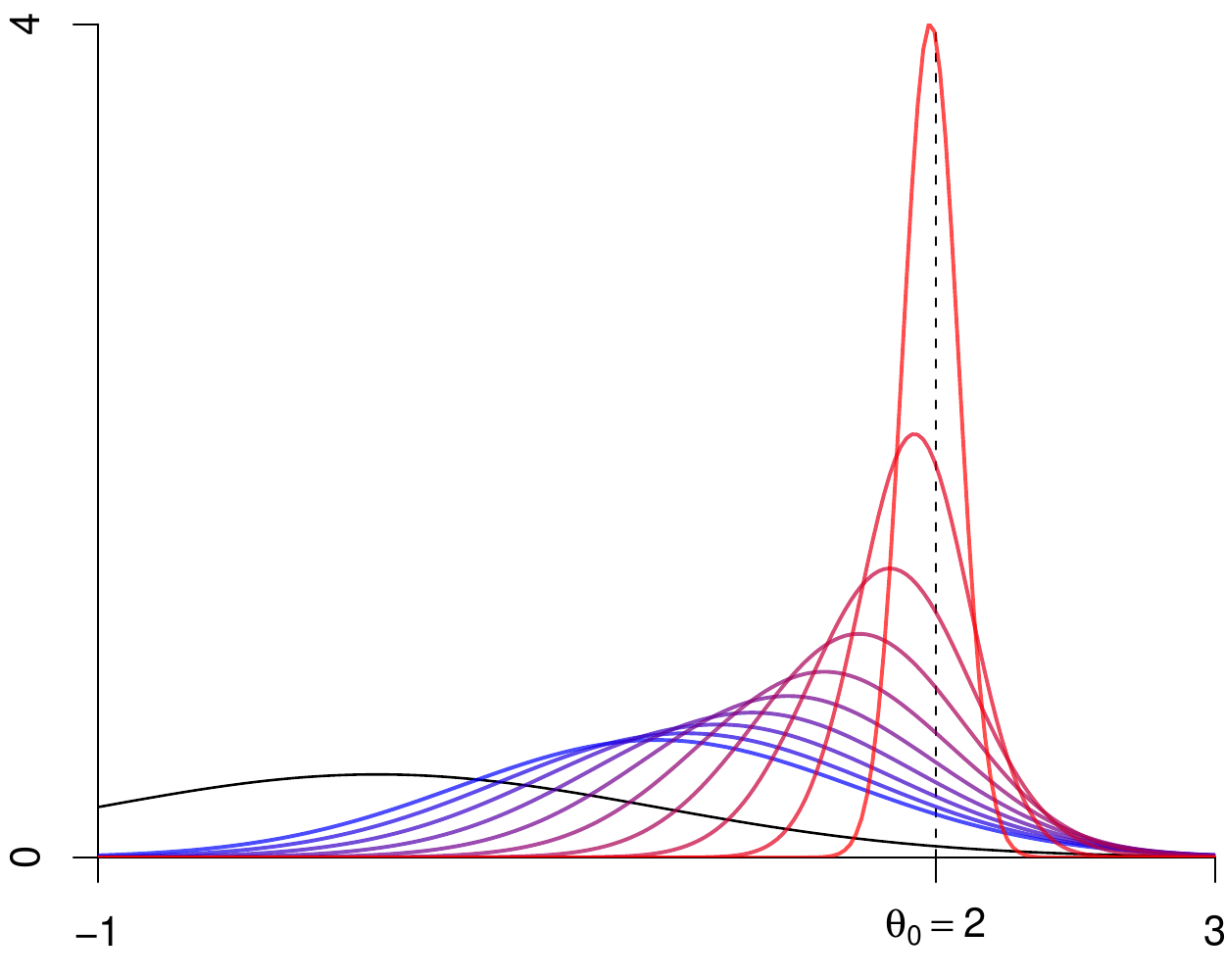}
    \caption{Illustration of theoretical results under rejection ABC (left) and IS-ABC (right) for the univariate Gaussian location model (see details in Section~\ref{sec:theory-illustration}. The prior density is displayed in black, and the true parameter $\bm{\theta}_0=2$ is indicated by a vertical dashed line. Limiting pseudo-posterior densities $\pi_{\infty,\epsilon}$ provided in \eqref{eq: pseudo posterior reject} (left) and \eqref{eq: pseudo posterior IS} (right), are computed with $\epsilon$ varying in $\{1, 0.9, \ldots,0.2, 0.1\}$ with colors from blue ($\epsilon = 1$) to red  ($\epsilon = 0.1$).}
    \label{fig: illustration-theory}
\end{figure}

\section{Illustrations} \label{sec:illustrations}

We illustrate the use of the ES on some standard models. The standard rejection ABC algorithm is employed (that is, we use Algorithm~\ref{alg:IS-ABC} with weight function $w$ of form~\eqref{eq: rejection}) for constructing estimators~\eqref{eq: Sample approx Bayes est}. The proposed ES is compared to the Kullback--Leibler divergence (KL), the Wasserstein distance (WA), and the maximum mean discrepancy (MMD). Here, the ES is applied using the Euclidean metric $\delta_1$, the Wasserstein distance using the exponent $p=2$ and \textcolor{black}{the approximation by the swapping distance} \citep{Bernton:2017aa} and the MMD using a Gaussian kernel $\chi(\bm{x},\bm{y})=\exp{[-(\bm{x}-\bm{y})^2]}$. The Gaussian kernel is commonly used in the MMD literature, and was also considered for ABC in \cite{Park2016} and \cite{Jiang2018}. Details regarding the use of the Kullback--Leibler divergence as a discrepancy function for ABC algorithms can be found in Sec. 2 of \cite{Jiang2018}. \textcolor{black}{With respect to the theoretical results of Section~\ref{sec:theory}, the chosen examples can be shown to be sufficiently regular as to validate the hypotheses of Corollary~\ref{cor: Specialization to ES} and Proposition~\ref{prop: Bernton Prop 2}. However, we believe that it would be difficult to validate Assumptions A2 and A3 of Proposition~\ref{prop: Bernton Prop 3}, without further theoretical development.}

We consider examples explored in \cite[Sec. 4.1]{Jiang2018}. For each illustration below, we sample synthetic data of the same size $m$ as the observed data size, $n$, whose value is specified for each model below. \textcolor{black}{The ABC procedure is sensitive to the choice of the prior; we follow the benchmark examples of \cite{Jiang2018} by employing  the same uniform priors, as specified in each example.} 
%already said We consider only the rejection weight function, and 
The number of ABC iterations in Algorithm~\ref{alg:IS-ABC} is set to $N=10^5$. The tuning parameter $\epsilon$ is set so that only the $0.05\%$ smallest discrepancies are kept to form ABC posterior sample. We postpone the discussion of the results of our simulation experiments to Section~\ref{sec:illustration-discussion}

The experiments were implemented in \textsf{R}, using in particular the \textsf{winference} package \citep{Bernton:2017aa} and the \textsf{FNN} package \citep{FNN2013}. 
The Kullback--Leibler divergence between two PDFs is computed within 
the $1$-nearest neighbor framework \citep{Boltz2009}. Moreover, 
the $k$-d trees is adopted for implementing the nearest neighbor search, 
which is the same as the method of \cite{Jiang2018}. For estimating 
the $2$-Wasserstein distance between two multivariate empirical measures, we propose 
to employ the swapping algorithm \citep{PUCCETTI2017132}, which is simple 
to implement, and is more accurate and less computationally expensive than other 
algorithms commonly used in the literature \citep{Bernton:2017aa}. Regarding the MMD, the same unbiased U-statistic estimator 
is adopted as given in \cite{Jiang2018} and \cite{Park2016}. For reproduction of the 
the experimental results, the original source code can be accessed at 
\url{https://github.com/hiendn/Energy_Statistics_ABC}.

\subsection{Bivariate Gaussian mixture model} \label{sec:BGM}

Let $\mathbf{X}_n$ be a sequence of IID random variables, such that each $\bm{X}_i$ has a mixture of bivariate Gaussian probability law
\begin{equation}\label{eq: Gaussian_mixture}
    \bm{X}_i\sim p\mathcal{N}(\bm{\mu}_0,\bm{\Sigma}_0) + (1-p)\mathcal{N}(\bm{\mu}_1,\bm{\Sigma}_1),
\end{equation}
with known covariance matrices 
\[
\bm{\Sigma}_{0}=\left[\begin{array}{cc}
0.5 & -0.3\\
-0.3 & 0.5
\end{array}\right]\text{ and }\bm{\Sigma}_{1}=\left[\begin{array}{cc}
0.25 & 0\\
0 & 0.25
\end{array}\right]\text{.}
\]
We aim to estimate the generative parameters $\bm{\theta}^\top=(p,\bm{\mu}_0^\top,\bm{\mu}_1^\top)$ consisting of the mixing probability $p$ and the population means $\bm{\mu}_0$ and $\bm{\mu}_1$. 
We denote the uniform law, in the interval $(a,b)$, for $a<b$, by $\text{Unif}(a,b)$. The priors  on the model parameters are uniform; 
that is, $\bm{\mu}_1 \sim \text{Unif}(-1,1)^2$, $\bm{\mu}_2 \sim \text{Unif}(-1,1)^2$ and  $p \sim \text{Unif}(0,1)$.  
We perform ABC using $n = 500$ observations, sampled from model~\eqref{eq: Gaussian_mixture} with $p=0.3$, $\bm{\mu}_0^\top=(0.7,0.7)$ and $\bm{\mu}_1^\top=(-0.7,-0.7)$.  A kernel density estimate (KDE) of the ABC posterior distribution (bivariate marginals of $\bm{\mu}_0$ and $\bm{\mu}_1$) is presented in Figure~\ref{fig: Gaussian mixture}.

\begin{figure*}[t]
    \centering
    \includegraphics[width=5cm]{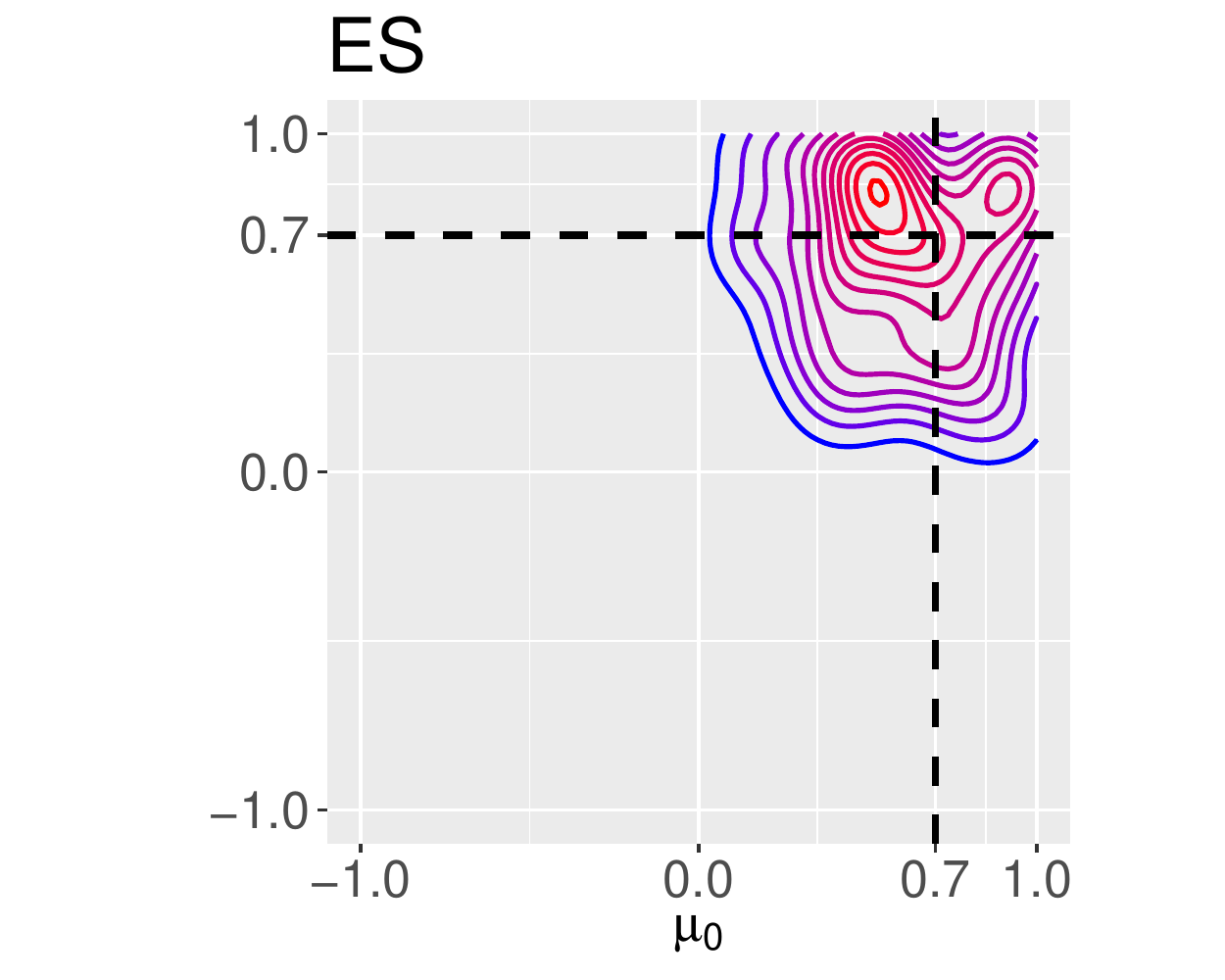}
    \hspace{-1.4cm}
    \includegraphics[width=5cm]{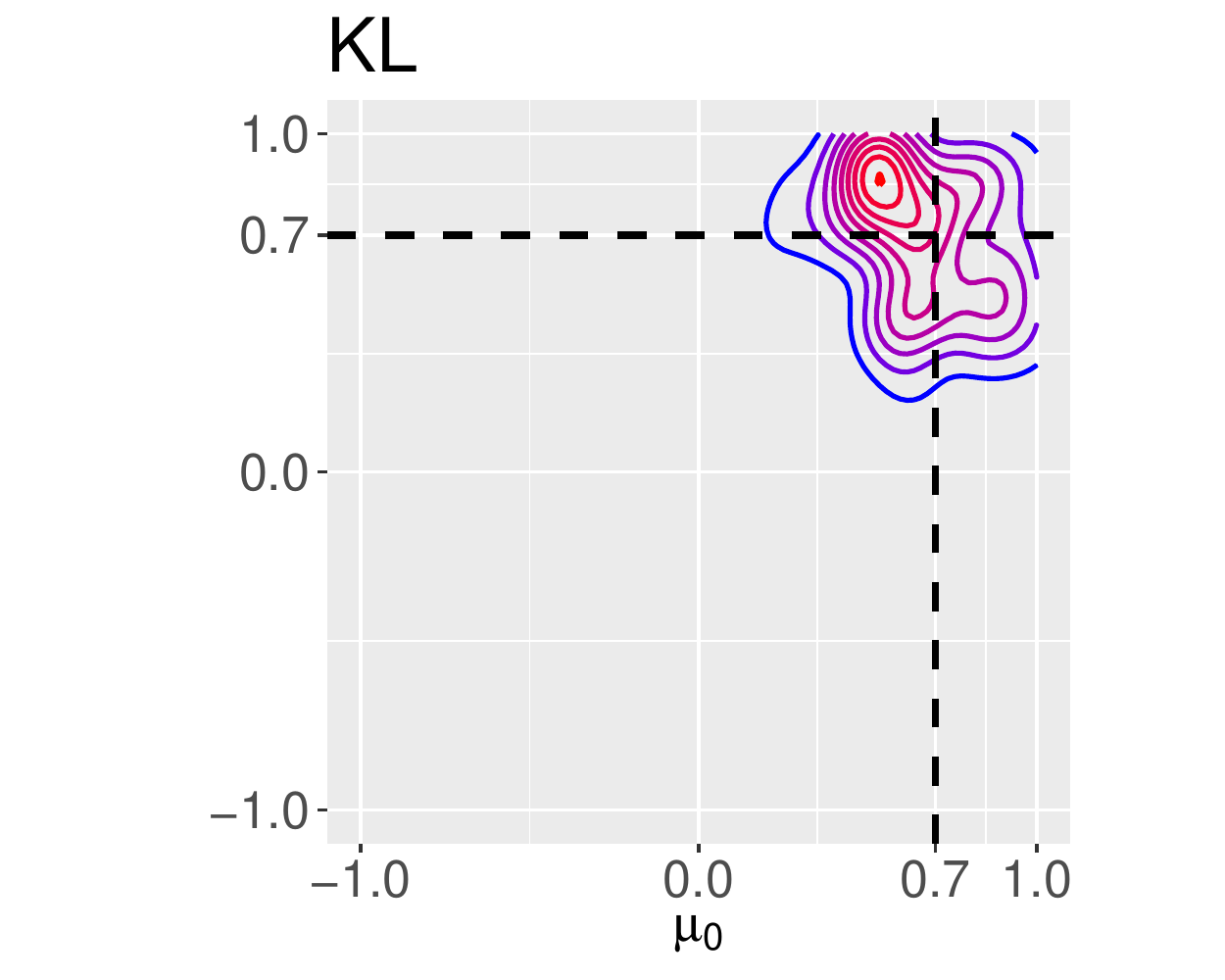}
    \hspace{-1.4cm}
    \includegraphics[width=5cm]{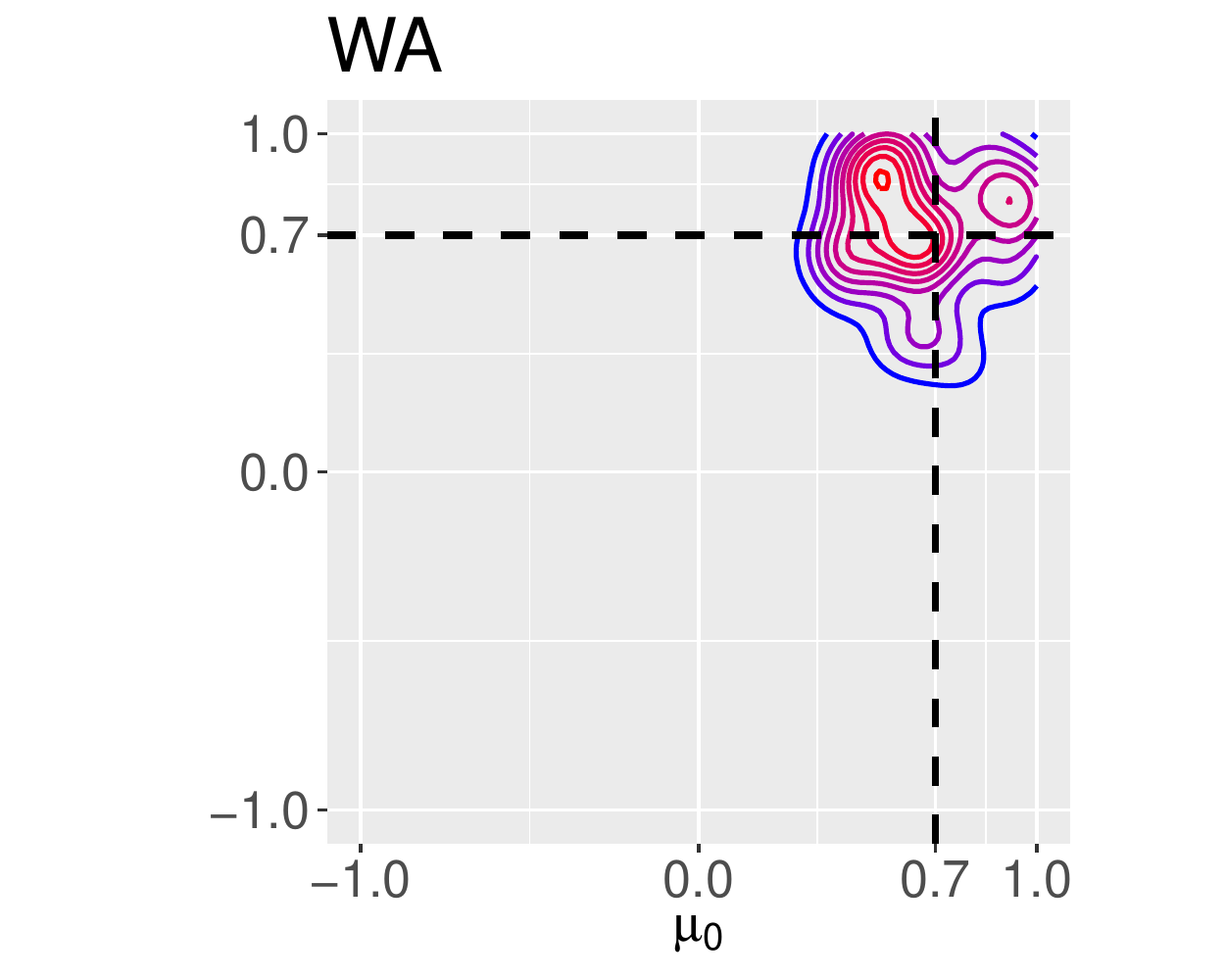}
    \hspace{-1.4cm}
    \includegraphics[width=5cm]{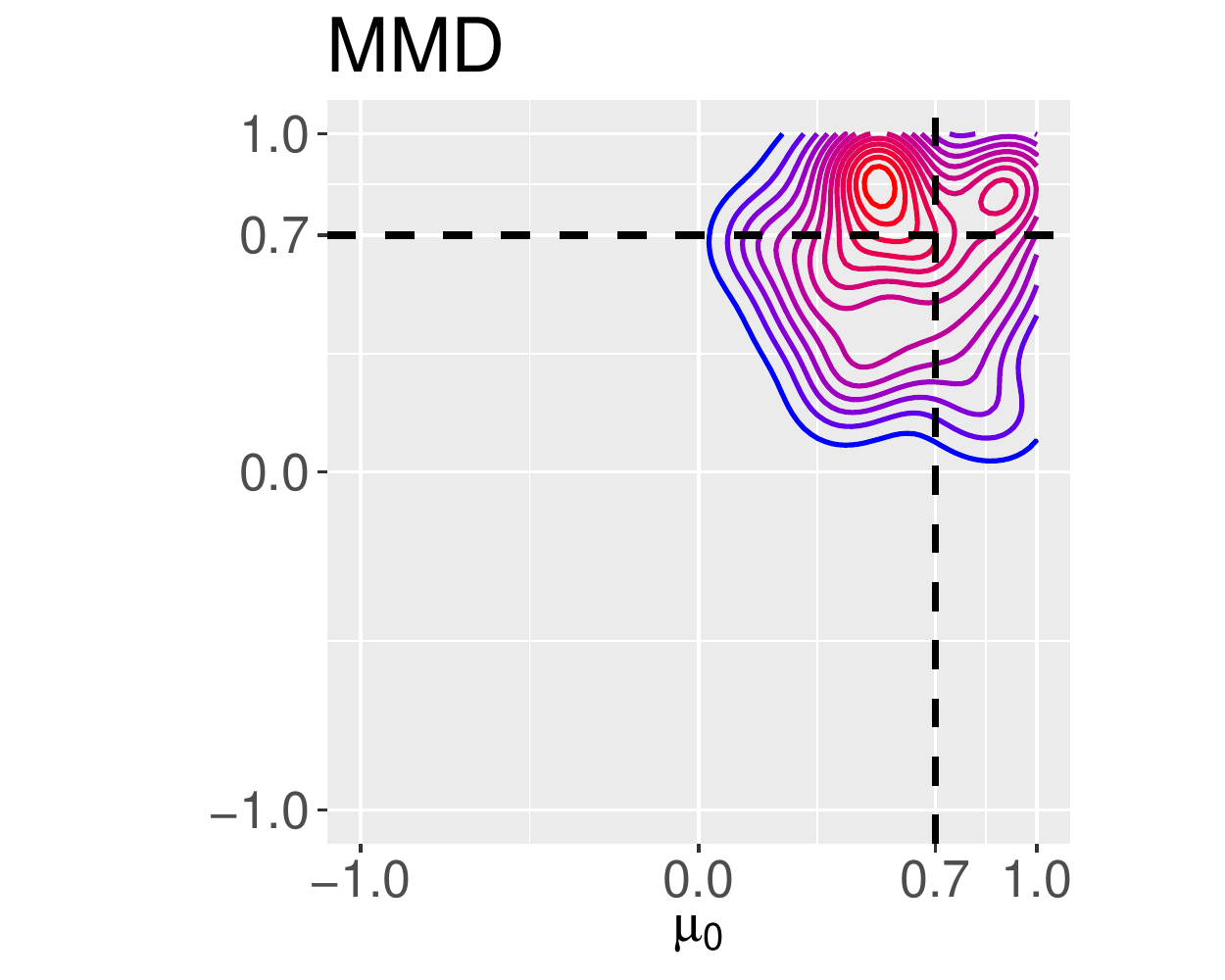}
    \hspace{-1.4cm}
    \includegraphics[width=5cm]{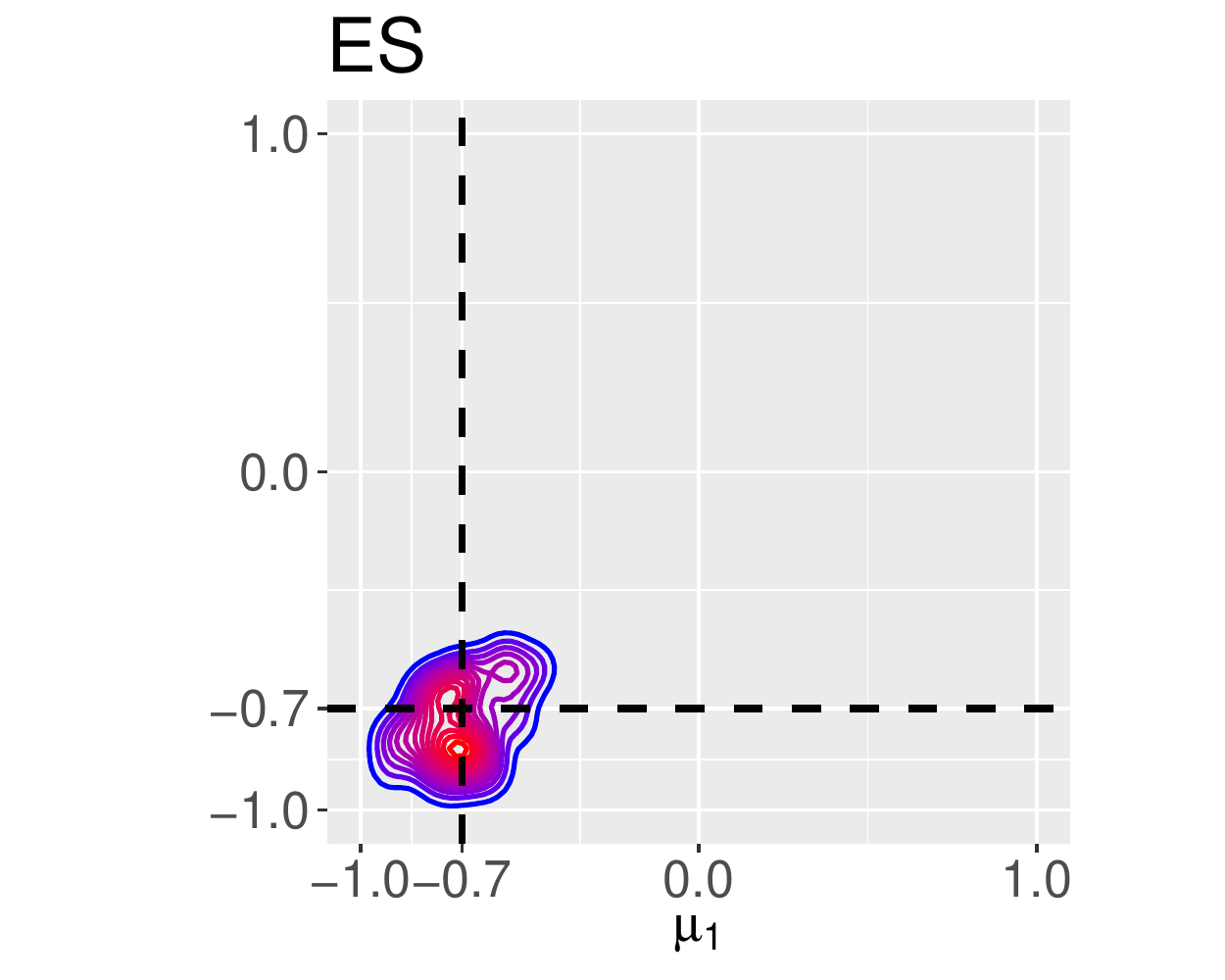}
    \hspace{-1.4cm}
    \includegraphics[width=5cm]{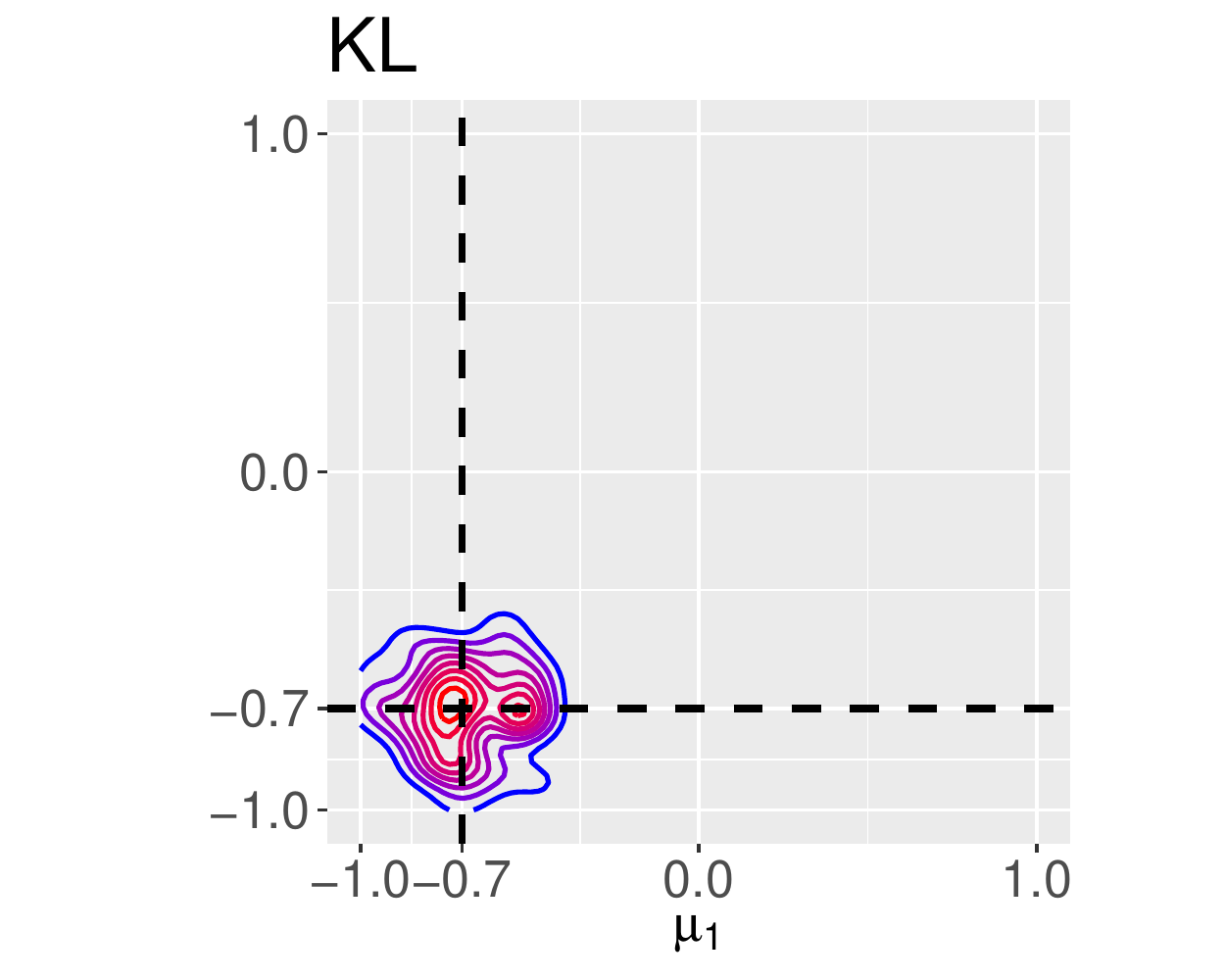}
    \hspace{-1.4cm}
    \includegraphics[width=5cm]{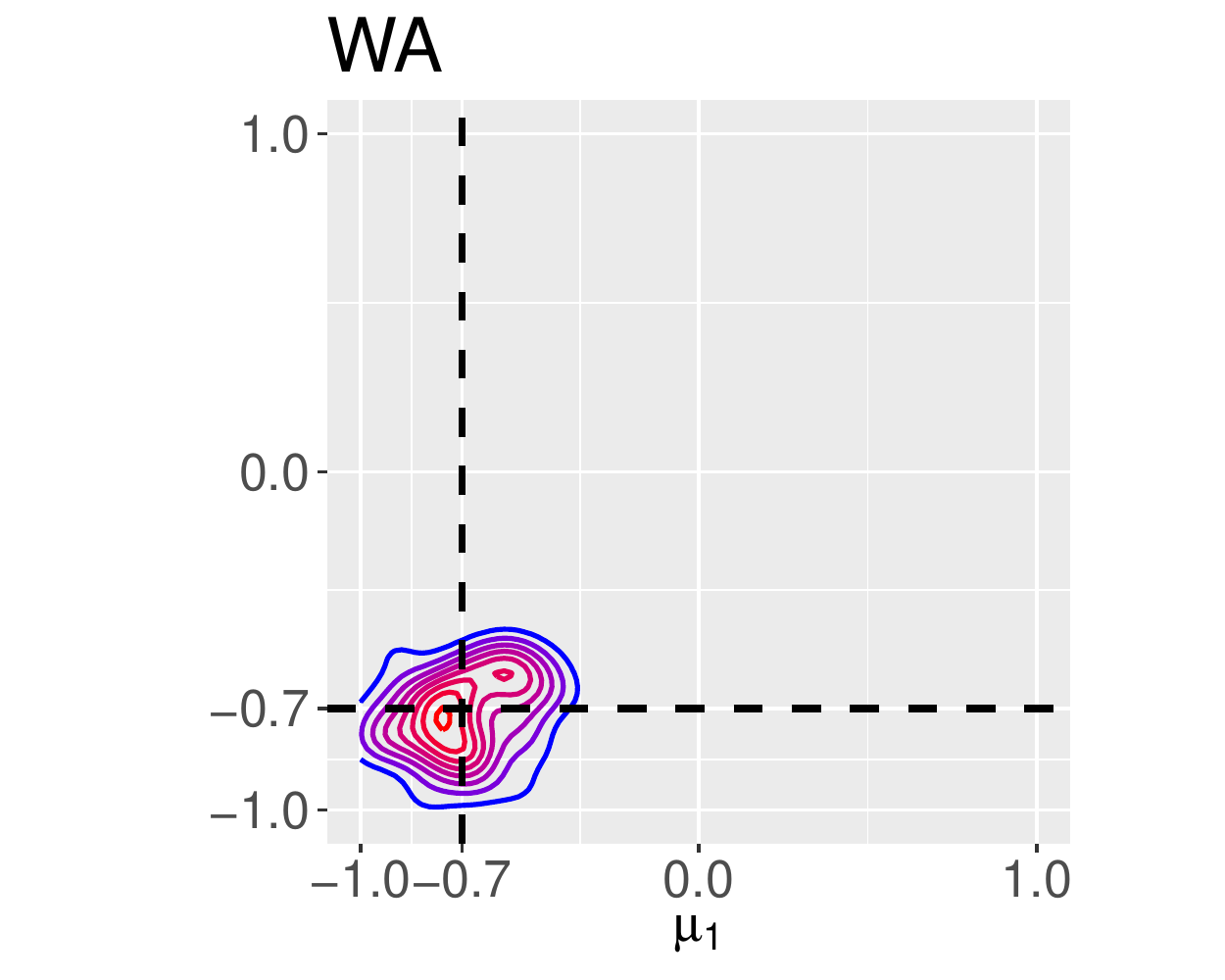}
    \hspace{-1.4cm}
    \includegraphics[width=5cm]{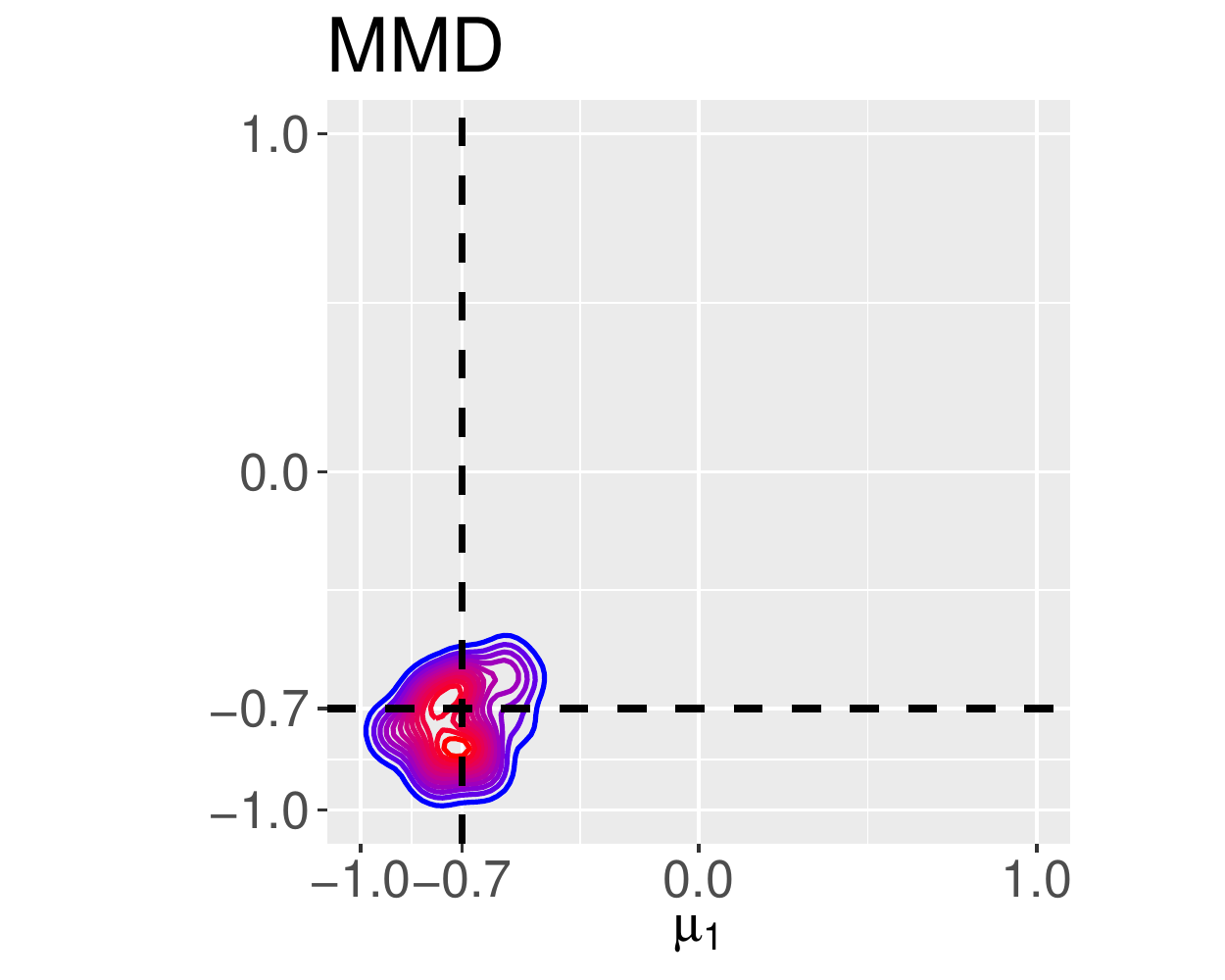}
    \caption{Marginal KDEs of the ABC posterior for the mean parameters $\bm{\mu}_0$ (top row) and $\bm{\mu}_1$ (bottom row) of the bivariate Gaussian mixture model~\eqref{eq: Gaussian_mixture}. The intersections of black dashed lines indicate the positions of the population means.}
    \label{fig: Gaussian mixture}
\end{figure*}

\subsection{Moving-average model of order 2} \label{sec:MA2}

%According to~\cite{Marin2012}, 
The moving-average model of order $q$, MA($q$), 
is a stochastic process $\{Y_t\}_{t\in\mathbb{N}^{\ast}}$ defined as
\begin{equation*}
    Y_t = Z_t + \sum_{i=1}^q \theta_i Z_{t-i}\text{,}
\end{equation*}
with $\{Z_t\}_{t\in\mathbb{Z}}$ being a sequence of unobserved noise error terms. 
\cite{Jiang2018} used a  MA$(2)$ model for their benchmarking; 
namely $Y_t = Z_t + \theta_1 Z_{t-1} + \theta_2 Z_{t-2},\ t\in[D]$. Each observation $\bm{Y}$ corresponds to a time series of length $D$. Here, 
we use the same model as that proposed in~\cite{Jiang2018}, where $Z_t$ follows the
Student-$t$ distribution with $5$ degrees of freedom, and $D = 10$. The priors 
on the model parameters $\theta_1$ and $\theta_2$ are taken to be uniform, 
that is, $\theta_1 \sim \text{Unif}(-2,2)$ and $\theta_2 \sim \text{Unif}(-1,1)$. 
We performed ABC using $n=200$ samples generated from a model 
with the true parameter values $(\theta_1,\theta_2) = (0.6, 0.2)$. 
A KDE of the ABC joint posterior distribution of $(\theta_1,\theta_2)$ 
is displayed in Figure~\ref{fig: MA2}.

\begin{figure*}[t]
    \centering
    % \includegraphics[width=8cm]{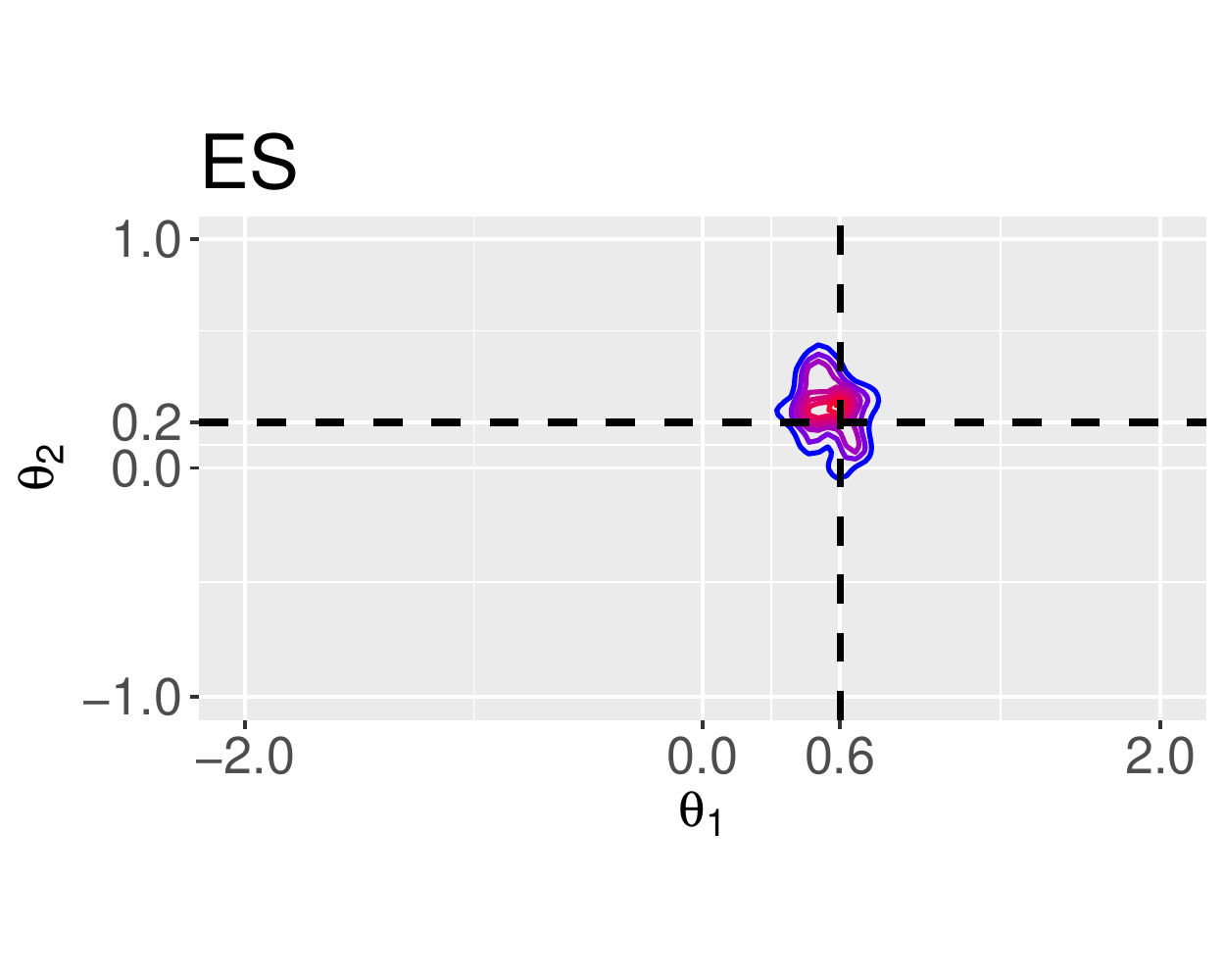}
    % \vspace{-1.5cm}
    % \includegraphics[width=8cm]{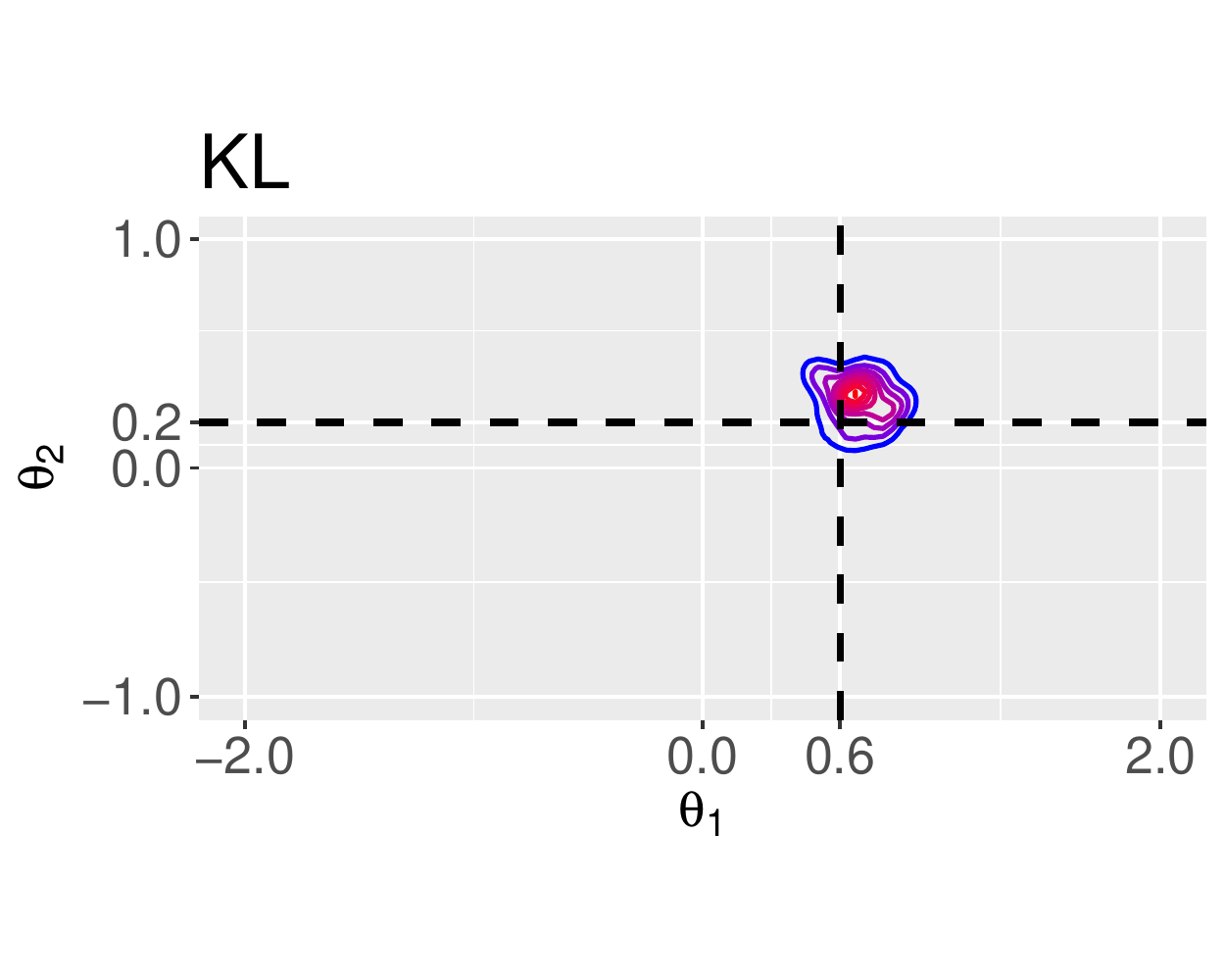}
    % \vspace{-1.5cm}
    % \includegraphics[width=8cm]{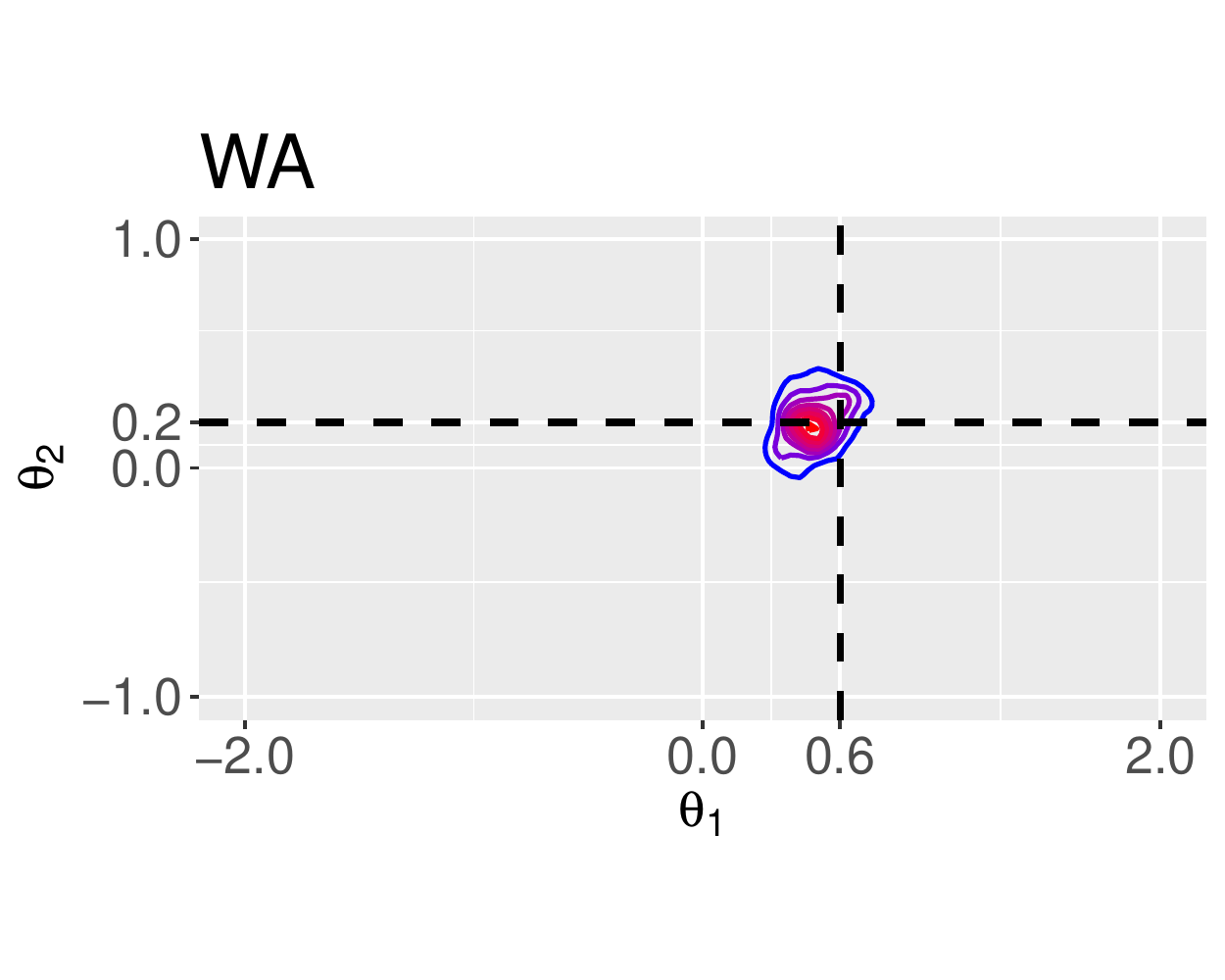}
    % \vspace{0.2cm}
    % \includegraphics[width=8cm]{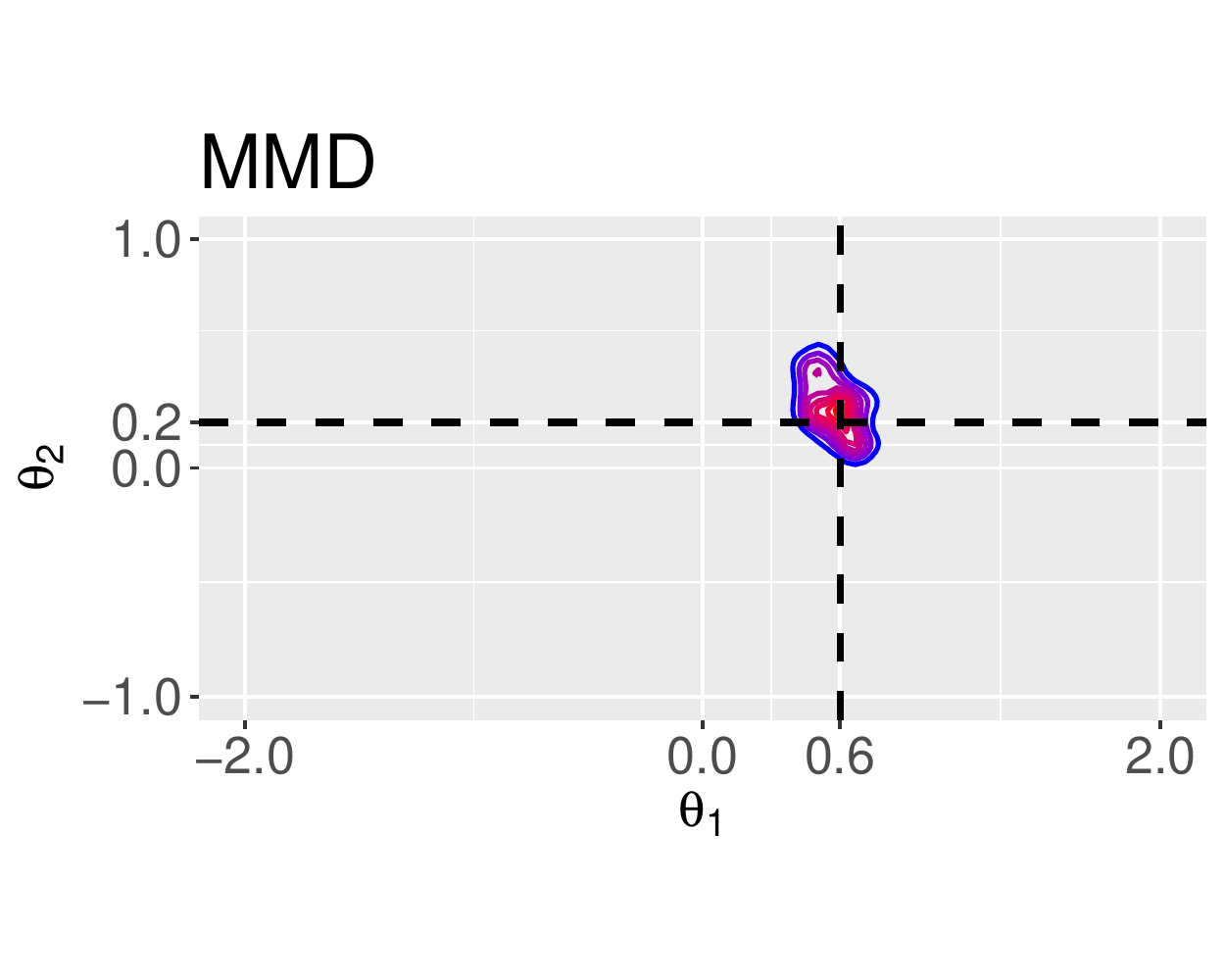}
    % \vspace{0.2cm}
    \includegraphics[width=6cm]{figs/ma2_1.pdf}
    \vspace{-1.5cm}
    \includegraphics[width=6cm]{figs/ma2_2.pdf}
    \vspace{-1.5cm}
    \includegraphics[width=6cm]{figs/ma2_3.pdf}
    \vspace{0.2cm}
    \includegraphics[width=6cm]{figs/ma2_4.pdf}
    \vspace{0.8cm}
    \caption{KDE of the ABC posterior for the parameters 
    $(\theta_1,\theta_2)$ of the MA($2$) model experiment. 
    The intersections of black dashed lines indicate the true parameter values.}
    \label{fig: MA2}
\end{figure*}

\subsection{Bivariate beta model} \label{sec:BBM}

% In~\cite{ARNOLD20111194}, 
The bivariate beta model  proposed by \cite{Crackel2017} is defined with five positive parameters $\theta_1,\ldots,\theta_5$ by letting
\begin{equation}
    V_1 = \dfrac{U_1 + U_3}{U_5 + U_4}\text{, and }
    V_2 = \dfrac{U_2 + U_4}{U_5 + U_3}\text{,}
\end{equation}
where $U_i \sim \text{Gamma}(\theta_i, 1)$, for $i\in[5]$, and 
setting $Z_1 = V_1 /(1 + V_1 )$ and $Z_2 = V_2 /(1 + V_2 )$. The bivariate random variable $\bm{Z}^\top=(Z_1, Z_2)$ has marginal laws
$Z_1 \sim \mathrm{Beta}(\theta_1 + \theta_3, 
\theta_5 + \theta_4)$ and 
$Z_2 \sim \mathrm{Beta}(\theta_2 + \theta_4, 
\theta_5 + \theta_3)$. 
We performed ABC using samples of size $n =  500$, which 
are generated from a DGP with true parameter values 
$(\theta_1, \theta_2, \theta_3, \theta_4, \theta_5) = (1, 1, 1, 1, 1)$. 
The prior on each of the model parameters is taken to be independent $\mathrm{Unif}(0, 5)$. 
KDEs of the marginal ABC  posterior distributions of parameters $\theta_1, \theta_2, \theta_3, \theta_4$ and  $\theta_5$ are  displayed in Figure~\ref{fig: bbm}.

\begin{figure*}[t]
    \centering
% trim={<left> <lower> <right> <upper>}
    \includegraphics[trim={0cm 0cm 2.8cm 0cm},clip,height=5.5cm]{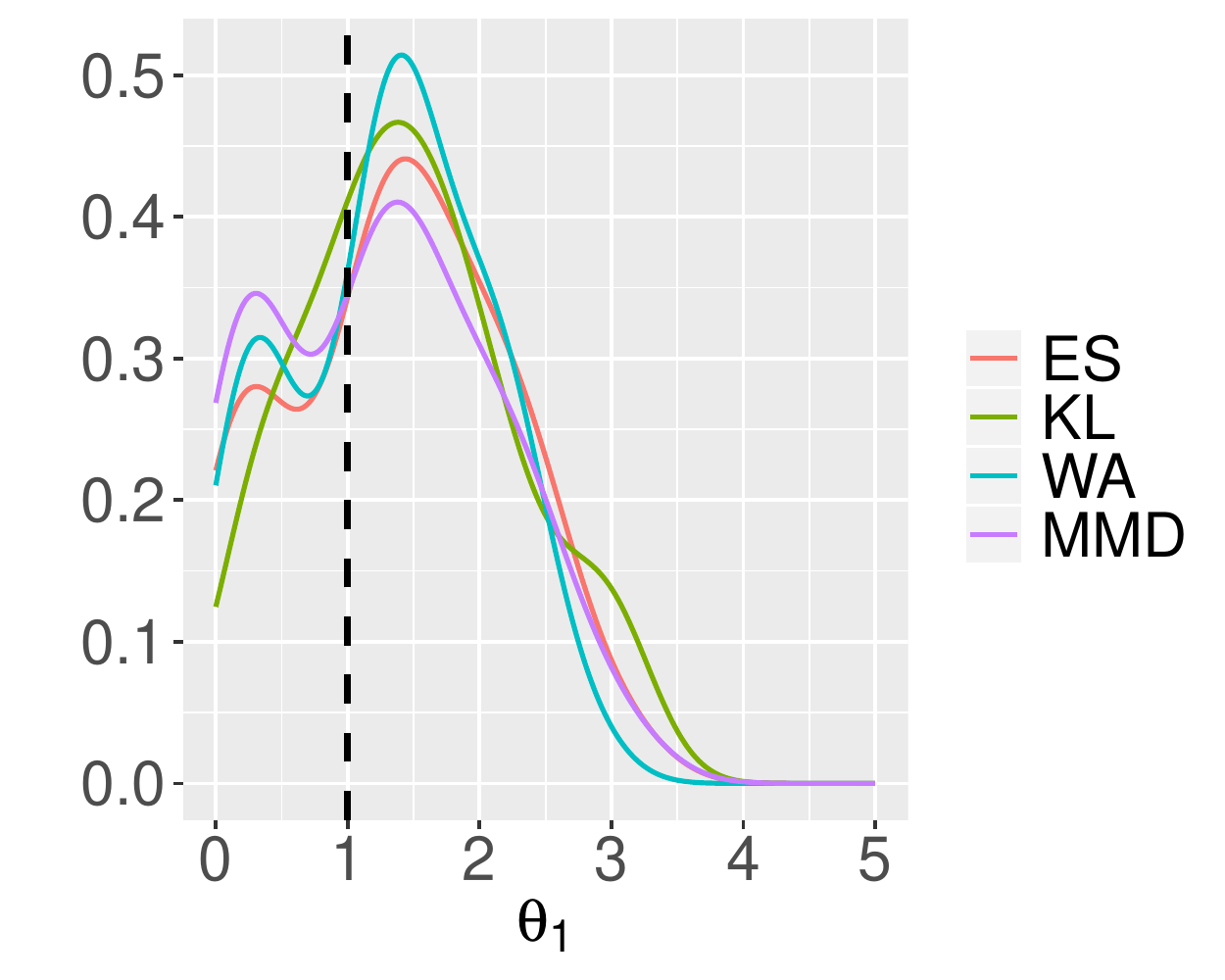}
    \includegraphics[trim={0cm 0cm 2.8cm 0cm},clip,height=5.5cm]{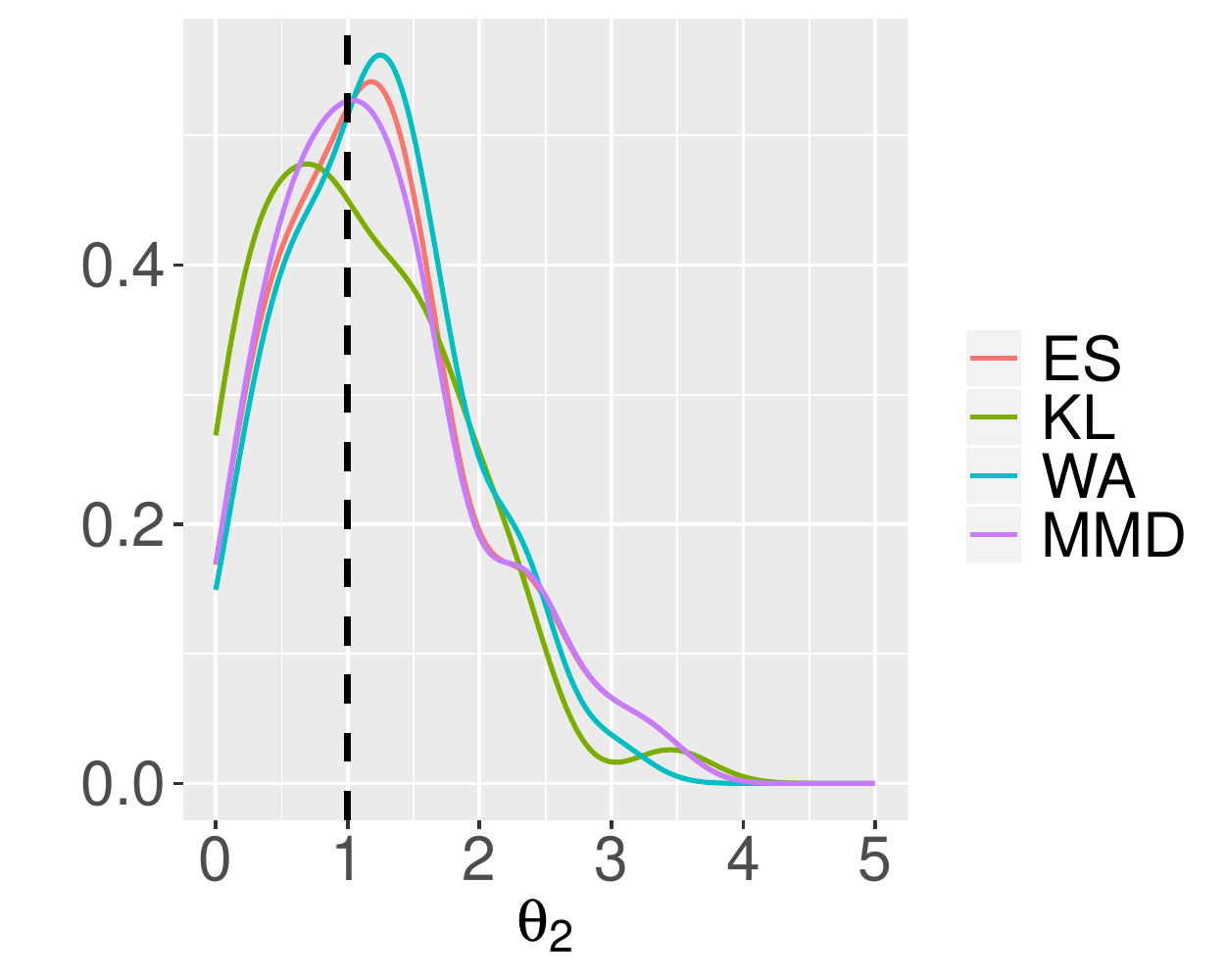}
    \includegraphics[trim={0cm 0cm 2.8cm 0cm},clip,height=5.5cm]{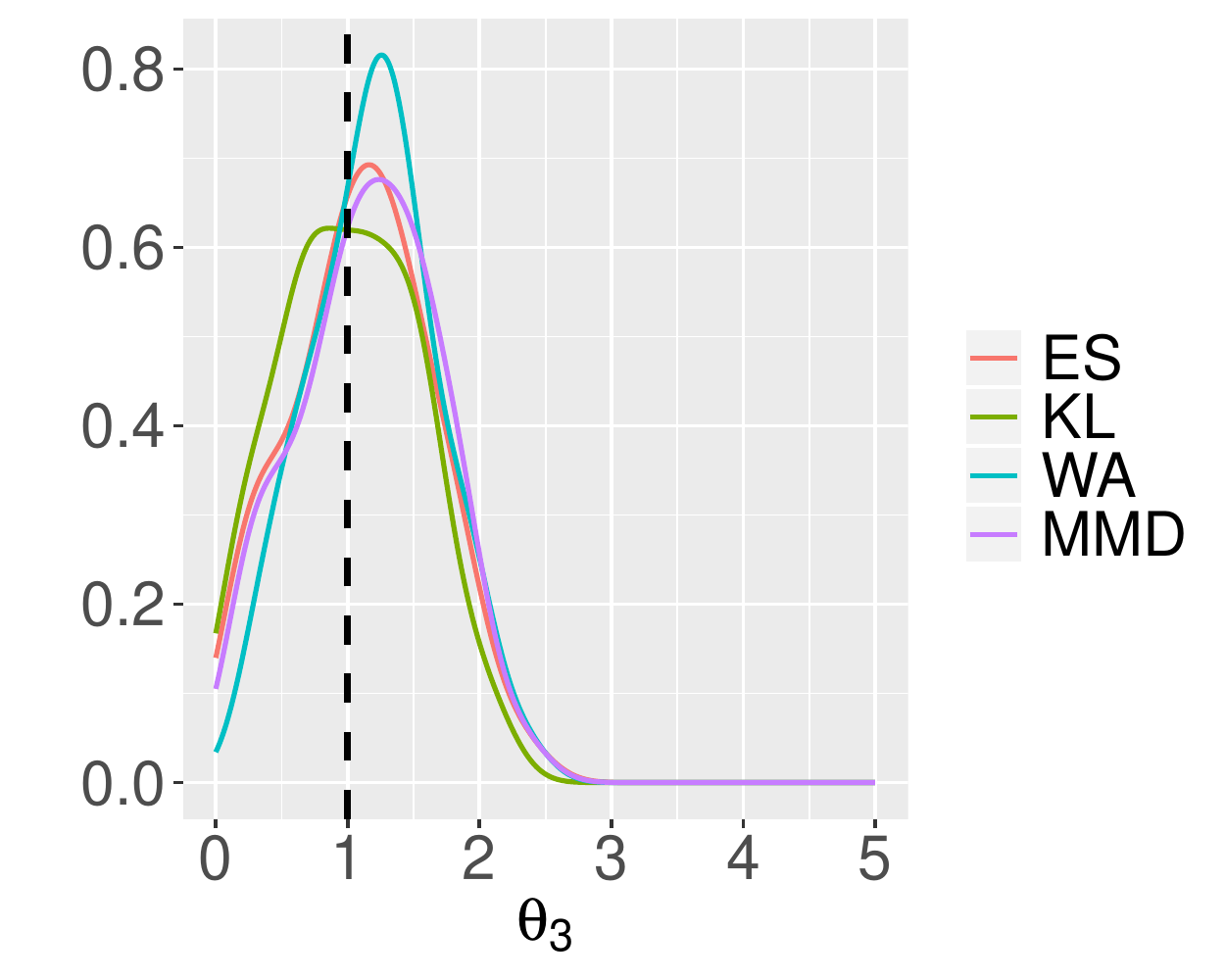}
    \includegraphics[trim={0cm 0cm 2.8cm 0cm},clip,height=5.5cm]{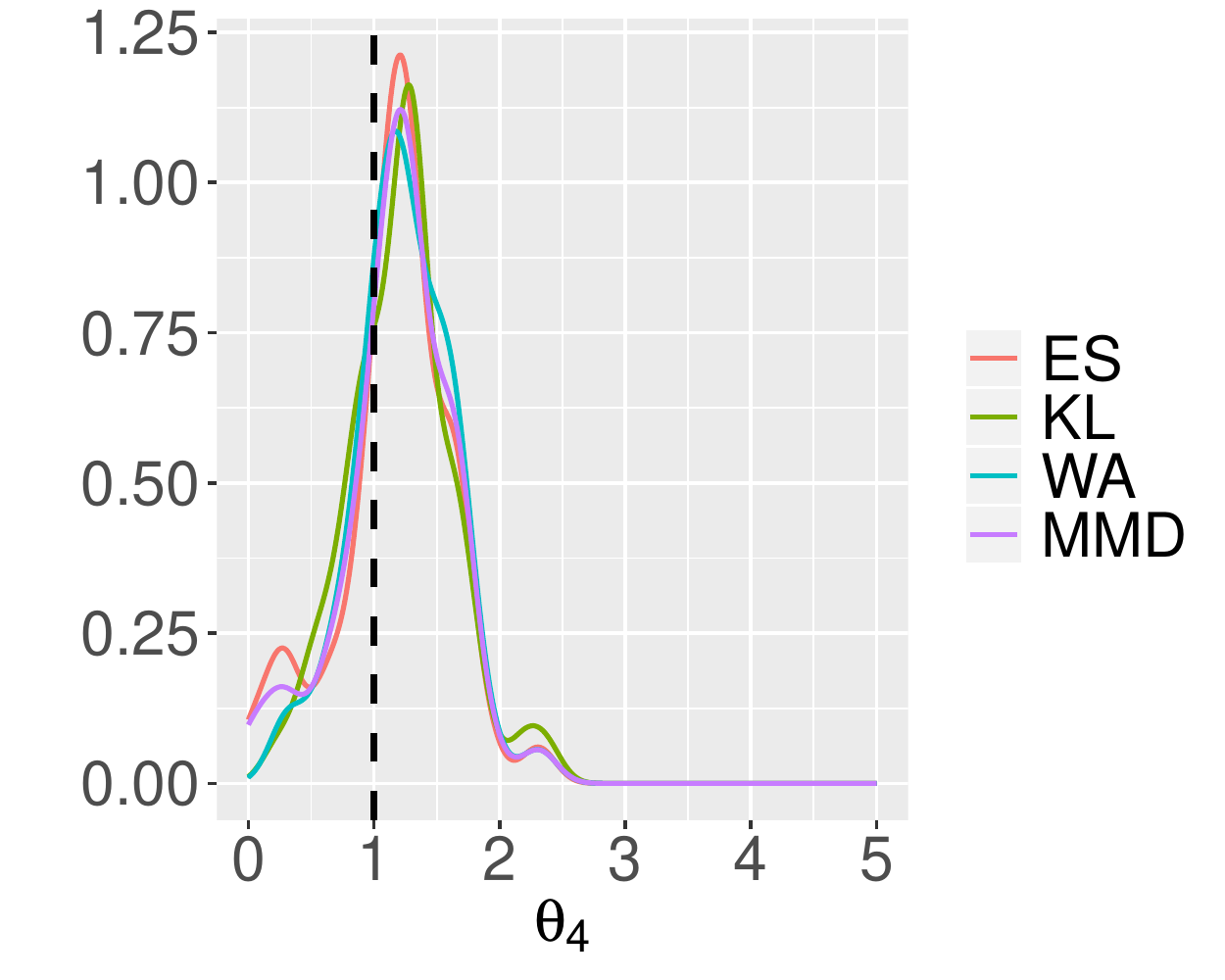}
    \includegraphics[trim={0cm 0cm 0.0cm 0cm},clip,height=5.5cm]{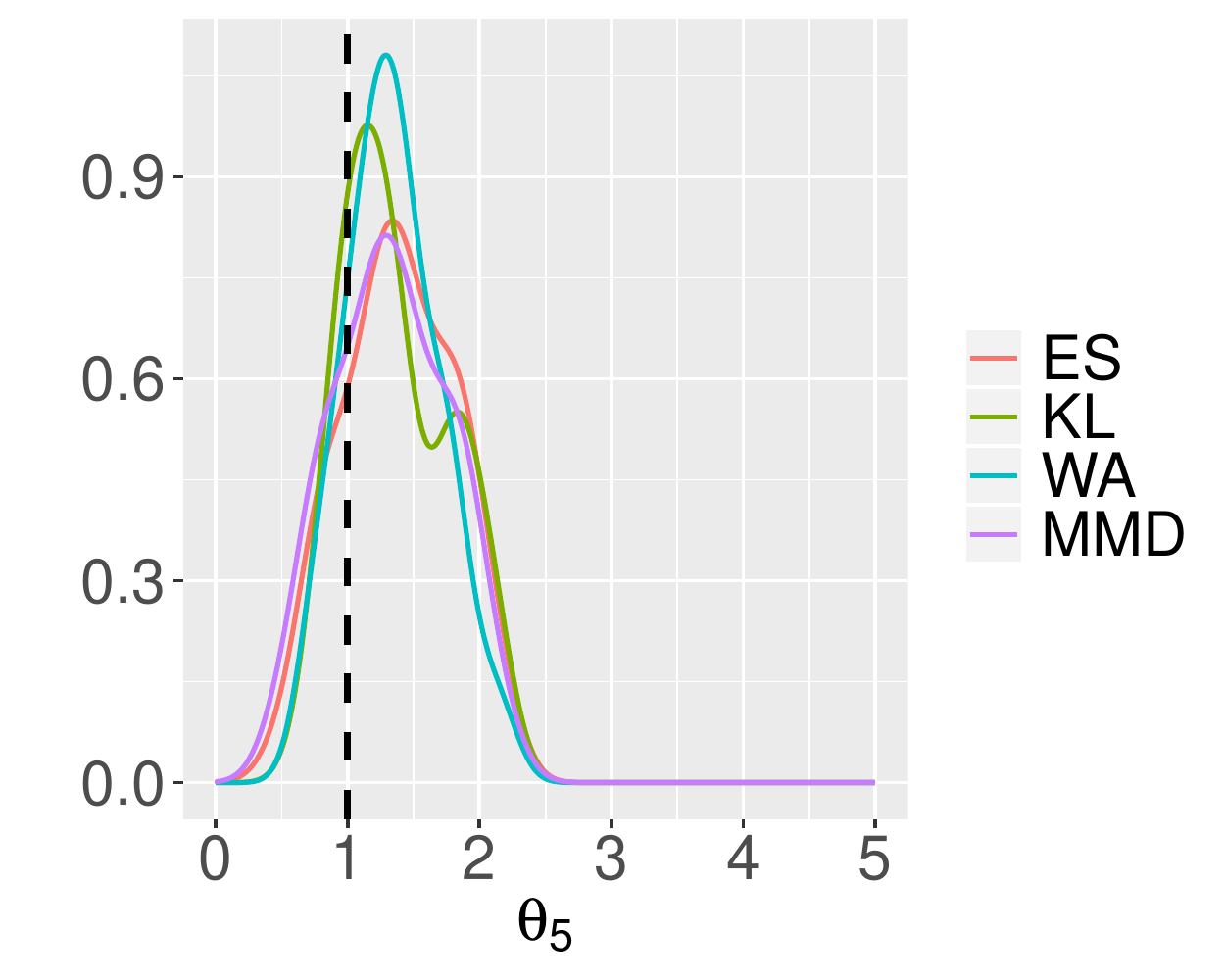}
    \caption{Marginal KDEs of the ABC posterior for the parameters 
    $\theta_1,\ldots,\theta_5$ for the bivariate beta model. 
    The black dashed lines indicate the true parameter values.}
    \label{fig: bbm}
\end{figure*}

\subsection[Multivariate g-and-k distribution]{Multivariate \MakeLowercase{g}-and-\MakeLowercase{k} distribution} \label{sec:gandk}

A univariate $g$-and-$k$ distribution can be defined 
via its quantile function \citep{DROVANDI20112541}:
\begin{equation}\label{eq: g-and-k_1D}
    F^{-1}(x) = A + B \left[1 + 0.8\dfrac{1-\exp(-g \times z_x)}{1+\exp(-g \times z_x)}\right] 
                \left(1+z_x^2\right)^k z_x, 
\end{equation}
where parameters $(A, B, g, k)$ respectively relate to location, scale, 
skewness, and kurtosis. Here, $z_x$ is the $x$th quantile of the standard 
normal distribution. Given a set of 
parameters $(A, B, g, k)$, it is easy to simulate $D$ observations of a DGP 
with quantile function~(\ref{eq: g-and-k_1D}), by generating a sequence of 
IID sample $\{Z_i\}_{i=1}^D$, where $Z_i \sim \mathcal{N}(0, 1)$, for $i\in[D]$. 

A so-called $D$-dimensional $g$-and-$k$ DGP can instead be defined by applying 
the quantile function~(\ref{eq: g-and-k_1D}) to each of the $D$ elements of a 
multivariate normal vector $\bm{Z}^\top=(Z_1, ..., Z_D) \sim \mathcal{N}(\bm{0}, \bm{\Sigma})$, 
where $\bm{\Sigma}$ is a covariance matrix. In our experiment, we use a 
5-dimensional $g$-and-$k$ model with the same covariance matrix and parameter 
values for $(A,B,g,k)$ as that considered by~\cite{Jiang2018}. That is, we generate samples of size 
$n =  200$ from a $g$-and-$k$ DGP with the true parameter values 
$(A,B,g,k)=(3,1,2,0.5)$ and the covariance matrix
\[
\bm{\Sigma}=\left[\begin{array}{ccccc}
1 & \rho & 0 & 0 & 0\\
\rho & 1 & \rho & 0 & 0\\
0 & \rho & 1 & \rho & 0\\
0 & 0 & \rho & 1 & \rho\\
0 & 0 & 0 & \rho & 1
\end{array}\right]\text{,}
\]
where $\rho=-0.3$. 
The prior on the model parameters  $A,B,g,k$ is taken to be independent $\mathrm{Unif}(0, 4)$, while $\rho$ is independently assigned a $\mathrm{Unif}(-0.5, 0.5)$ prior. 
KDEs of the marginal ABC  posterior distributions of parameters $A, B, g, k$ and $\rho$ are displayed in 
Figure~\ref{fig: gandk}.

\begin{figure*}[t]
    \centering
% trim={<left> <lower> <right> <upper>}
    \includegraphics[trim={0cm 0cm 2.8cm 0cm},clip,height=5.5cm]{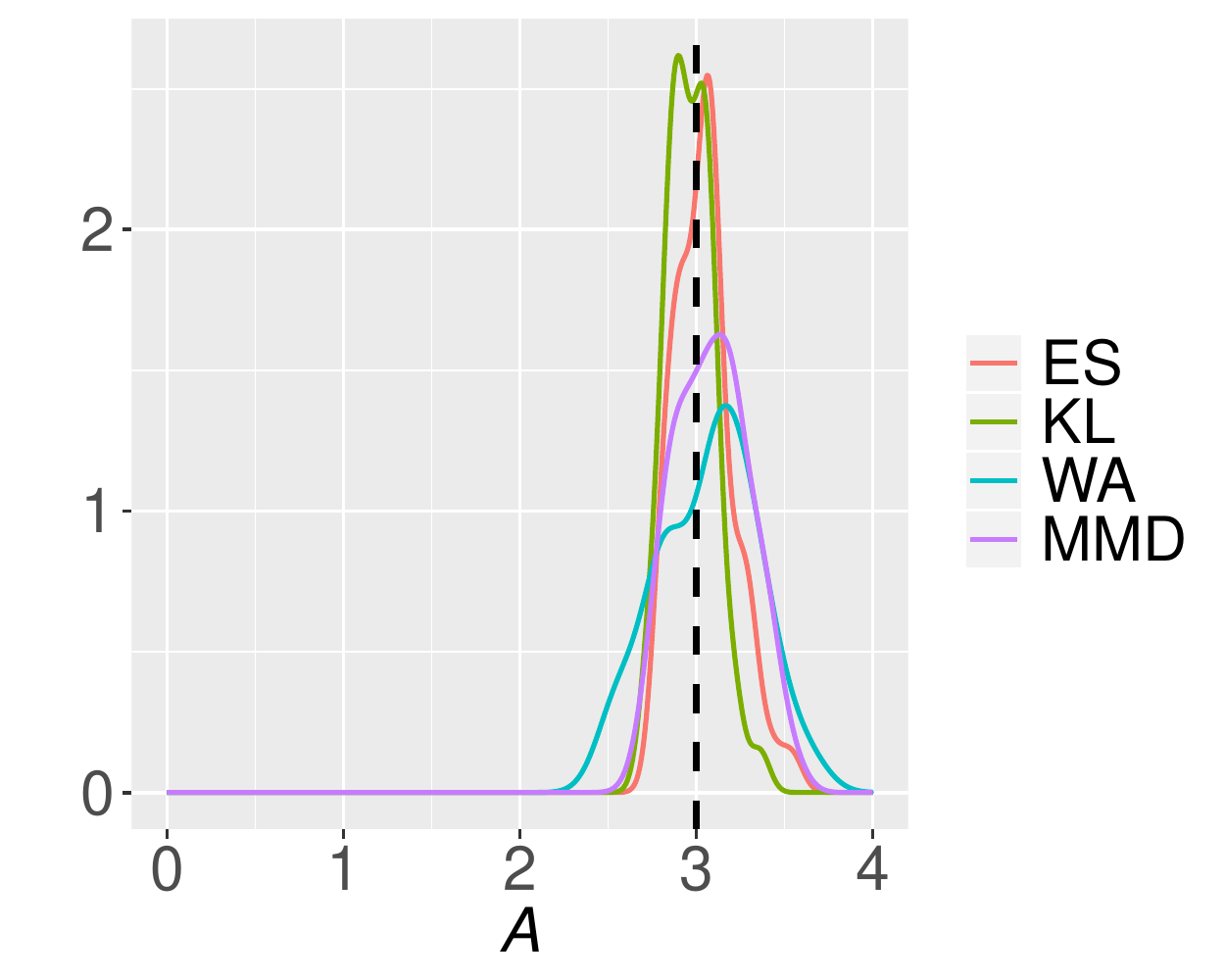}
    \includegraphics[trim={0cm 0cm 2.8cm 0cm},clip,height=5.5cm]{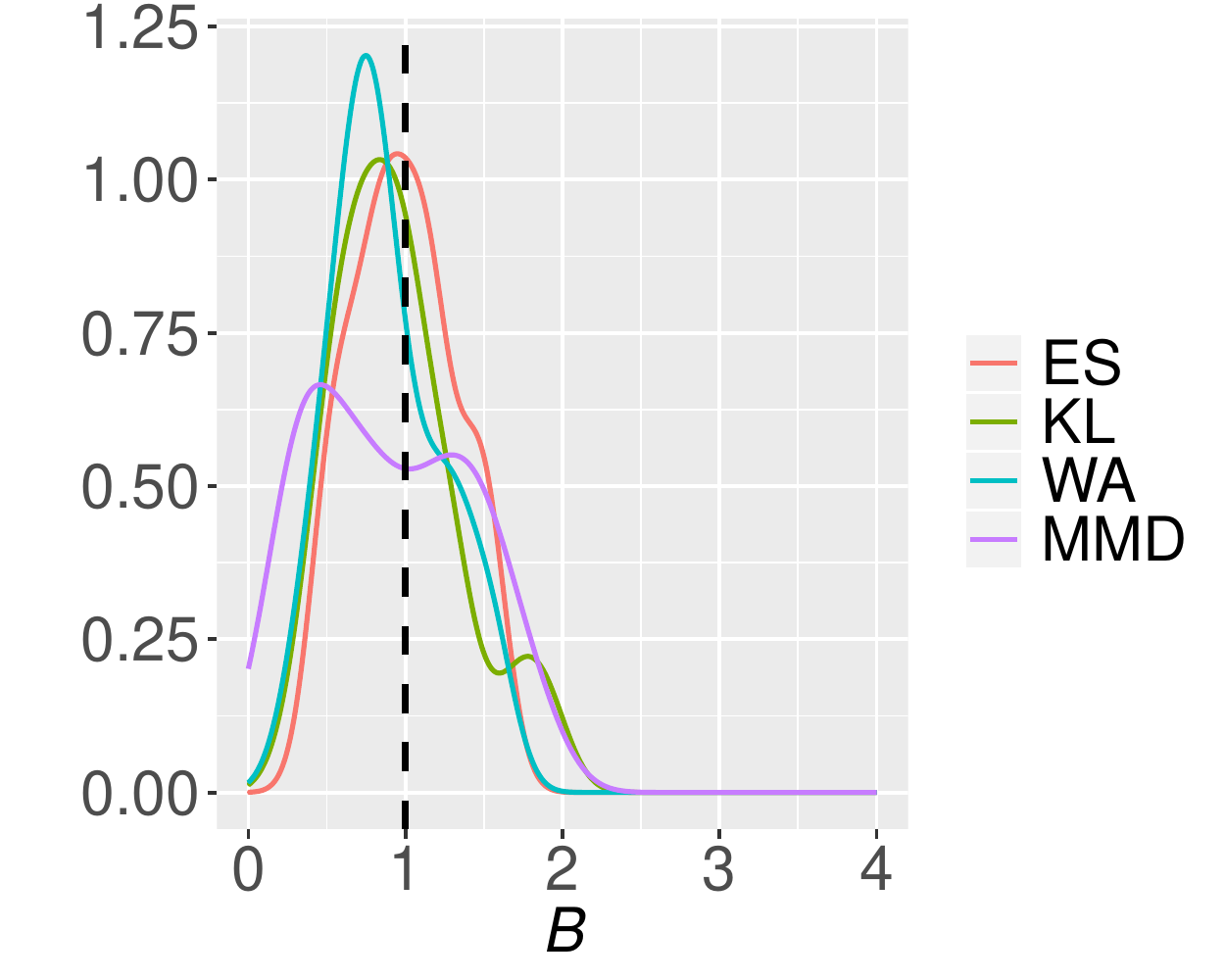}
    \includegraphics[trim={0cm 0cm 2.8cm 0cm},clip,height=5.5cm]{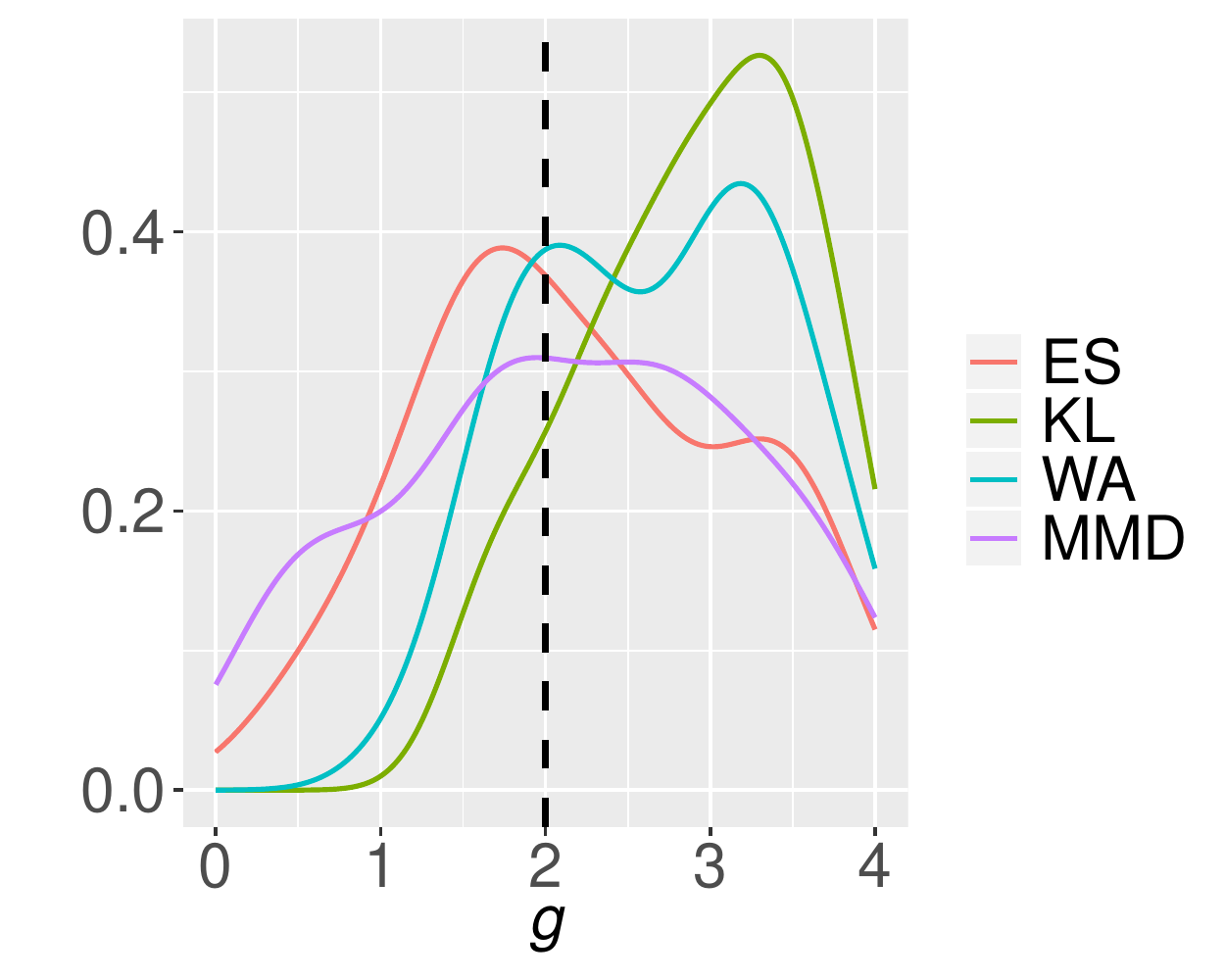}
    \includegraphics[trim={0cm 0cm 2.8cm 0cm},clip,height=5.5cm]{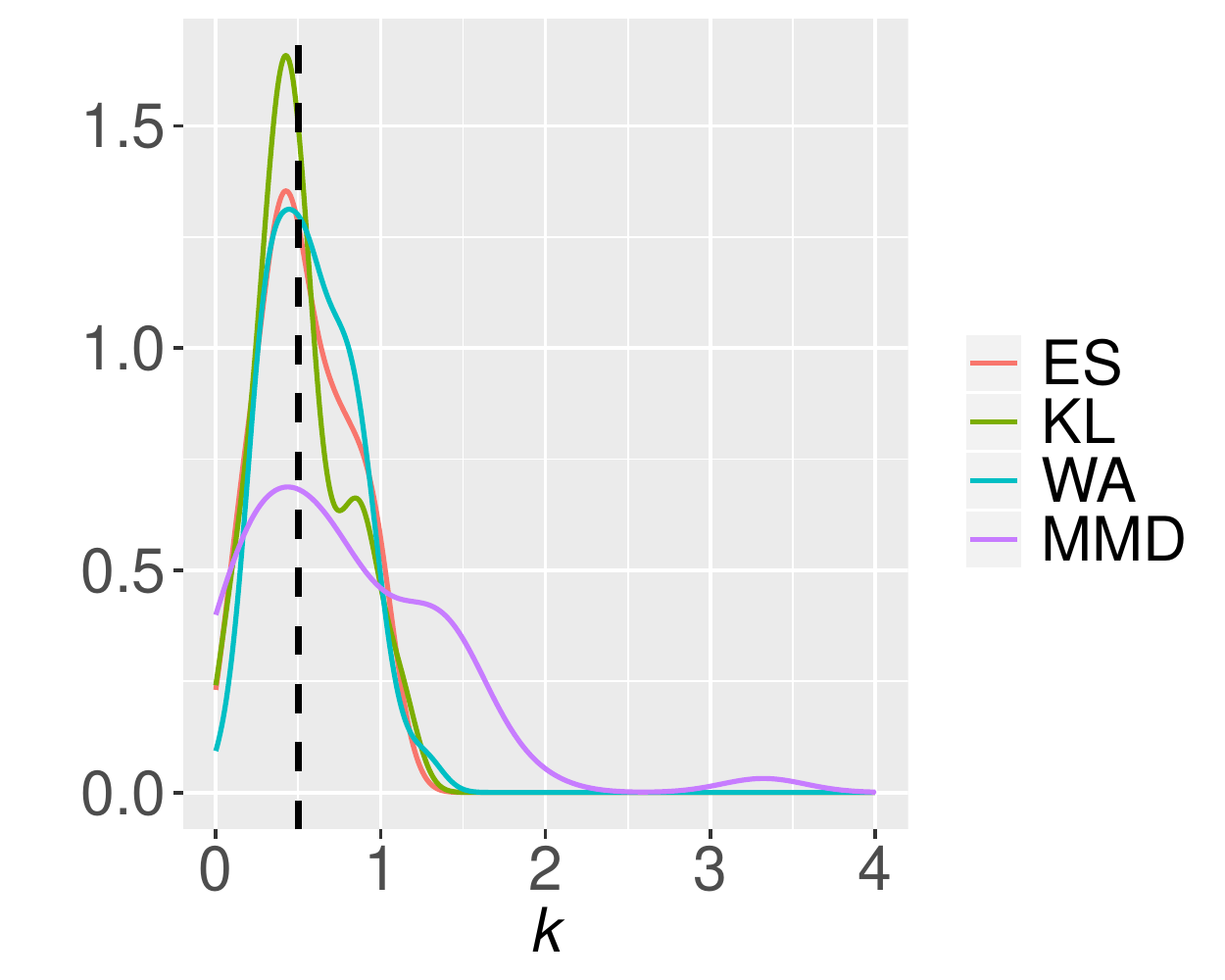}
    \includegraphics[trim={0cm 0cm 0.0cm 0cm},clip,height=5.5cm]{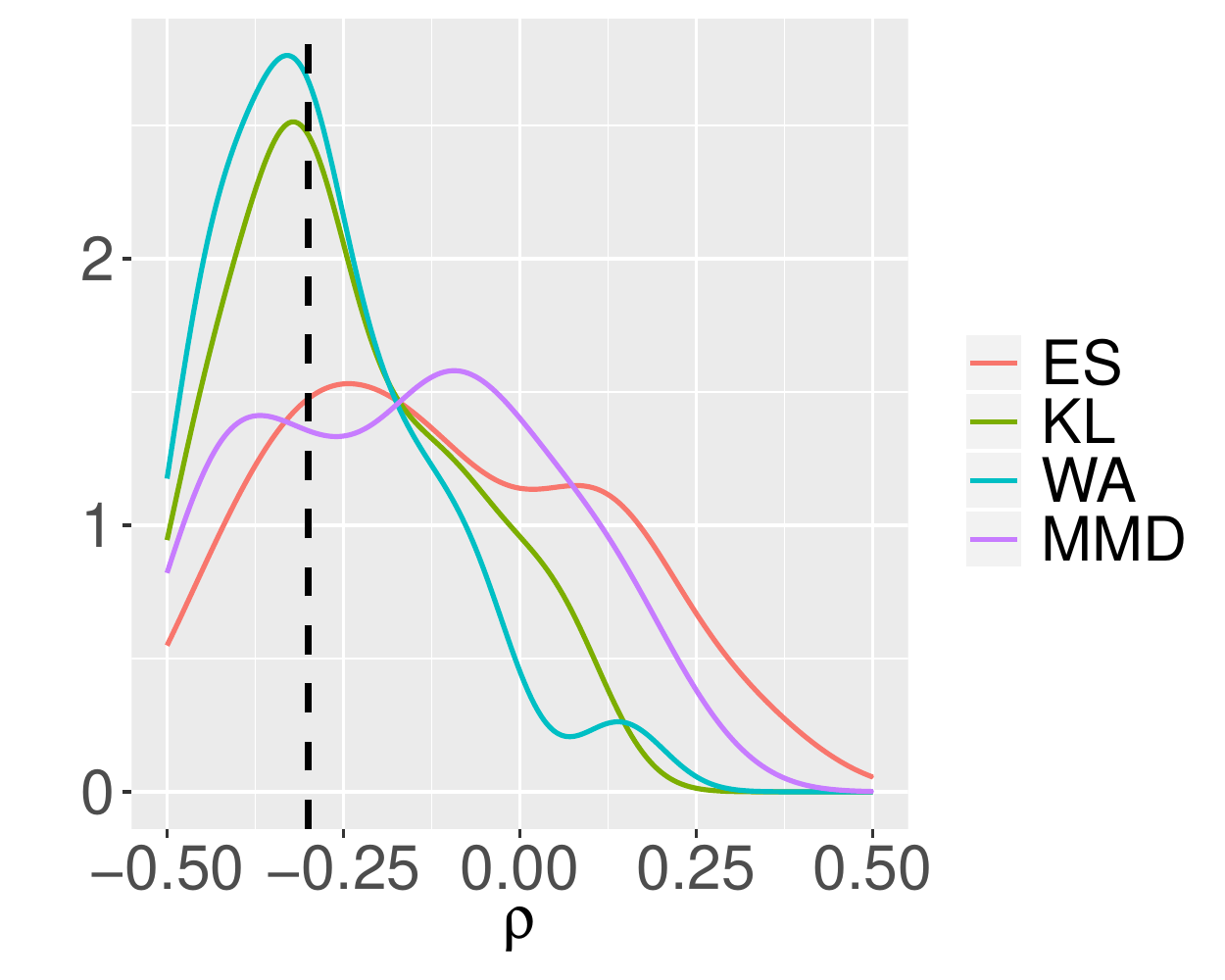}
    \caption{Marginal KDEs of the ABC posterior for the parameters 
    $A,B,g,k$ and $\rho$ of the $g$-and-$k$ model. 
    The black dashed lines indicate the true parameter values.}
    \label{fig: gandk}
\end{figure*}

\subsection{Discussion of the results and performance} \label{sec:illustration-discussion}

For each of the four  experiments and each parameter, we computed the posterior mean $\hat \theta_{\text{mean}}$, posterior median $\hat \theta_{\text{med}}$, mean absolute error
and mean squared error defined by
\begin{equation*}
     \text{MAE}=\frac{1}{M}\sum_{k=1}^M |\theta_k-\theta_0 |\text{, and} \quad 
     \text{MSE} = \frac{1}{M}\sum_{k=1}^M |\theta_k-\theta_0 |^2,
\end{equation*}
where $\left\{\theta_k\right\}_{k=1} ^M$ denotes the pseudo-posterior sample and $\theta_0$ denotes the true parameter. Here $M=50$ since $N=10^5$ and $\epsilon$ is chosen as to retain $0.05\%$ of the samples.  Each experiment was replicated ten times by  keeping the same fixed (true) values for the parameters and by sampling new observed data each of the ten times. The estimated quantities $\hat \theta_{\text{mean}}$, $\hat \theta_{\text{med}}$, and errors MAE  and $\text{RMSE}=\text{MSE}^{1/2}$ were then averaged over the ten replications, and are reported along with standard deviations $\sigma(\cdot)$ in columns associated with each estimator  and true values $\theta_0$ for each parameter in Tables \ref{tab:performances_bgm}, \ref{tab:performances_ma2}, \ref{tab:performances_bbm} and 
\ref{tab:performances_gandk}. 

Upon inspection, Tables \ref{tab:performances_bgm}, \ref{tab:performances_ma2}, \ref{tab:performances_bbm} and 
\ref{tab:performances_gandk} showed some advantage in performance from WA on the bivariate Gaussian mixtures, some advantage from the MMD on the bivariate beta model, and some advantage from the ES on the $g$-and-$k$ model, while multiple methods are required to make the best inference in the case of the MA$(2)$ experiment. When we further take into account %the energy statistic, the Kullback--Leibler divergence, the Wasserstein distance and the Maximum Mean Discrepancy 
the standard deviations of the estimators, we observe that all four data discrepancy measures essentially perform comparatively well across the four experimental models. Thus, we may conclude that there is no universally best performing discrepancy measure. \textcolor{black}{Some considerations are therefore necessary when choosing between discrepancies. The first point of consideration is whether the data $\mathbf{X}_n$ are random variables arising from continuous or discrete measures. In the case that the data $\mathbf{X}_n$ arises from a discrete measure, the KL discrepancy measure is not applicable, since it is not defined on a set of measure greater than zero. Another consideration regarding the choice of discrepancy measures is the computational complexity of each discrepancy measure, as is summarized in Table~\ref{tab:comput_complexities}.}

\textcolor{black}{From Table~\ref{tab:comput_complexities}, we firstly note that in the case of univariate data, all methods have the same computational complexity, as all of the discrepancy measures amount to comparisons between the order statistics of the observed and simulated data. Computational complexity becomes a greater separating criterion when considering the multivariate setting. In the multivariate case, the KL divergence is clearly faster than the other methods, but as mentioned before, is not applicable for discrete data. The ES and MMD methods share the same order of complexity, $\mathcal{O}((n+m)^{2})$, due to their theoretical equivalence (cf. \cite{Sejdinovic:2013aa}). It is notable that, in general, the computational complexity of the WA discrepancy is of order $\mathcal{O}((n+m)^{5/2}\log(n+m))$, which greater than that of the ES and MMD discrepancies, and is thus a significantly slower method when $n$ and $m$ get large. However, in our numerical results, we have used the $\mathcal{O}((n+m)^{2})$ swapping distance approximation of the WA method, as was considered in \cite{Bernton:2017aa}. Although this approximation is faster than the exact WA discrepancy, it does not converge to the same value, in general, and thus theoretical results regarding the WA discrepancy cannot be directly applied to the approximation (although some theoretical statements are still available). Thus, there is a trade-off regarding theoretical outcomes when using the swap distance approximation.}

\textcolor{black}{We note that in the case when the MMD discrepancy measure is estimated by the V-statistic estimator, much of our theoretical results from Section~\ref{sec:theory}, are applicable with minor modifications, due to the results of \cite{Sejdinovic:2013aa}. Thus, the choice between the ES and the MMD method comes down to a preference for the use of kernels or metrics. A consideration regarding the choice of the MMD discrepancy versus the ES discrepancy is that, to the best of our knowledge, a comparable result to \eqref{eq: ES limit} does not exist for any common kernel choice.}

As an alternative to choosing one of the assessed discrepancy measures, one may also consider some kind of averaging over the results of the different discrepancy measures. We have not committed to an investigation of such methodologies and leave it as a future research direction.

\textcolor{black}{Running times (on a MacBook Pro 3,1 GHz) for the ES, KL, MMD and WA distance computations for $10^5$ ABC replications, in the four models considered in the simulations, for varying sample sizes $n$, and with $m=n$, 
are reported in Figure~\ref{fig:times}. ES is uniformly much faster than the other approaches for small samples sizes, up to the value of $n=m=50$, where it is performing as fast as KL. For sample sizes larger than  $n=m=50$, KL is fastest. Overall, MMD and WA are slower than ES and KL. 
% Table~\ref{tab:running_times_10} (for $n=10$) and Table~\ref{tab:running_times_500} (for $n=500$).
}

\begin{table*}%[t]
    \centering
    \caption{Estimation performance for bivariate Gaussian mixtures (Section~\ref{sec:BGM}). The best results in each column is highlighted in boldface.}
    \label{tab:performances_bgm}
    \scalebox{0.9}{
    \begin{tabular}{cccccccccc}
    \toprule
    && $\hat \theta_{\text{mean}}$ & $\sigma(\hat \theta_{\text{mean}})$ 
    & $\hat \theta_{\text{med}}$ & $\sigma(\hat \theta_{\text{med}})$  
    & MAE & $\sigma(\mathrm{MAE})$ & RMSE &  $\sigma(\mathrm{RMSE})$ \\
    \toprule
    \multirow{4}{*}{ $\mu_{00}=0.7$} 
      & ES  & 0.594 & 0.045 & 0.607 & 0.063 & 0.215 & 0.030 & 0.283 & 0.055 \\ 
      & KL & 0.648 & 0.039 & 0.666 & 0.048 & 0.165 & 0.016 & 0.205 & 0.026 \\ 
      & \textbf{WA} & \textbf{0.675} & 0.035 & \textbf{0.682} & 0.043 & \textbf{0.152} & 0.020 & \textbf{0.181} & 0.021 \\ 
      & MMD & 0.564 & 0.079 & 0.582 & 0.076 & 0.234 & 0.054 & 0.311 & 0.101 \\ 
      \hline
    \multirow{4}{*}{$\mu_{01}=0.7$} 
      & ES  & 0.587 & 0.063 & 0.613 & 0.059 & 0.215 & 0.038 & 0.282 & 0.069 \\ 
      & KL & 0.651 & 0.042 & 0.667 & 0.061 & 0.169 & 0.022 & 0.210 & 0.027 \\ 
      & \textbf{WA} & \textbf{0.655} & 0.050 & \textbf{0.669} & 0.047 & \textbf{0.152} & 0.015 & \textbf{0.187} & 0.019 \\ 
      & MMD & 0.559 & 0.076 & 0.598 & 0.075 & 0.235 & 0.049 & 0.313 & 0.092 \\  
      \hline
    \multirow{4}{*}{$\mu_{10}=-0.7$} 
      & \textbf{ES} & \textbf{-0.699} & 0.046 & -0.716 & 0.040 & 1.401 & 0.043 & 1.412 & 0.039 \\ 
      & KL & -0.709 & 0.029 & -0.712 & 0.035 & 1.409 & 0.029 & 1.415 & 0.029 \\ 
      & \textbf{WA} & \textbf{-0.699} & 0.030 & \textbf{-0.704} & 0.037 & \textbf{1.399} & 0.030 & \textbf{1.404} & 0.030 \\ 
      & MMD & -0.709 & 0.054 & -0.731 & 0.036 & 1.411 & 0.051 & 1.422 & 0.038 \\ 
      \hline
    \multirow{4}{*}{$\mu_{11}=-0.7$} 
      & \textbf{ES}  & \textbf{-0.696} & 0.058 & -0.712 & 0.043 & 1.396 & 0.058 & 1.407 & 0.049 \\ 
      & \textbf{KL} & -0.711 & 0.047 & \textbf{-0.704} & 0.057 & 1.411 & 0.047 & 1.416 & 0.047 \\ 
      & \textbf{WA} & -0.695 & 0.043 & -0.695 & 0.053 & \textbf{1.395} & 0.043 & \textbf{1.401} & 0.043 \\
      & MMD & -0.711 & 0.066 & -0.726 & 0.046 & 1.411 & 0.066 & 1.424 & 0.052 \\ 
      \bottomrule
    \end{tabular}
    }
\end{table*}%

\begin{table*}%[ht]
    \centering
    \caption{Estimation performance for the MA($2$) model  (Section~\ref{sec:MA2}). The best results in each column is highlighted in boldface.}
    \label{tab:performances_ma2}
    \scalebox{0.9}{
    \begin{tabular}{cccccccccc}
    \toprule
    && $\hat \theta_{\text{mean}}$ & $\sigma(\hat \theta_{\text{mean}})$ 
    & $\hat \theta_{\text{med}}$ & $\sigma(\hat \theta_{\text{med}})$  
    & MAE & $\sigma(\mathrm{MAE})$ & RMSE &  $\sigma(\mathrm{RMSE})$ \\
    \toprule
    \multirow{4}{*}{ $\theta_1=0.6$} 
      & ES  & 0.569 & 0.042 & 0.570 & 0.045 & 0.083 & 0.015 & 0.100 & 0.017 \\
      & KL & 0.664 & 0.028 & 0.658 & 0.031 & 0.106 & 0.017 & 0.132 & 0.019 \\
      & WA & 0.509 & 0.033 & 0.505 & 0.038 & 0.112 & 0.022 & 0.133 & 0.026 \\ 
      & \textbf{MMD} & \textbf{0.583} & 0.044 & \textbf{0.586} & 0.048 & \textbf{0.079} & 0.013 & \textbf{0.096} & 0.015 \\
      \hline
    \multirow{4}{*}{$\theta_2=0.2$} 
      & ES & 0.215 & 0.035 & 0.219 & 0.035 & 0.111 & 0.015 & 0.135 & 0.019 \\ 
      & KL & 0.274 & 0.023 & 0.280 & 0.027 & 0.110 & 0.014 & 0.134 & 0.014 \\ 
      & \textbf{WA} & \textbf{0.205} & 0.025 & \textbf{0.207} & 0.030 & \textbf{0.090} & 0.029 & \textbf{0.112} & 0.034 \\ 
      & MMD & 0.220 & 0.037 & 0.220 & 0.036 & 0.108 & 0.010 & 0.132 & 0.012 \\
      \bottomrule
    \end{tabular}
    }
\end{table*}%

\begin{table*}%[t]
    \centering
    \caption{Estimation performance for the bivariate beta model (Section~\ref{sec:BBM}). The best results in each column is highlighted in boldface.}
    \label{tab:performances_bbm}
    \scalebox{0.9}{
    \begin{tabular}{cccccccccc}
    \toprule
    && $\hat \theta_{\text{mean}}$ & $\sigma(\hat \theta_{\text{mean}})$ 
    & $\hat \theta_{\text{med}}$ & $\sigma(\hat \theta_{\text{med}})$  
    & MAE & $\sigma(\mathrm{MAE})$ & RMSE &  $\sigma(\mathrm{RMSE})$ \\
    \toprule
    \multirow{4}{*}{ $\theta_1=1.0$} 
      & ES  & 1.299 & 0.223 & 1.189 & 0.264 & 0.713 & 0.130 & 0.885 & 0.165 \\ 
      & KL & 1.389 & 0.190 & 1.333 & 0.165 & 0.696 & 0.151 & 0.877 & 0.205 \\
      & \textbf{WA} & \textbf{1.286} & 0.220 & 1.193 & 0.265 & 0.672 & 0.128 & \textbf{0.828} & 0.153 \\
      & \textbf{MMD} & 1.229 & 0.188 & \textbf{1.143} & 0.241 & \textbf{0.676} & 0.092 & 0.836 & 0.121 \\
      \hline
     \multirow{4}{*}{ $\theta_2=1.0$} 
      & ES & 1.362 & 0.185 & 1.290 & 0.237 & 0.716 & 0.118 & 0.904 & 0.131 \\ 
      & \textbf{KL} & \textbf{1.235} & 0.152 & \textbf{1.153} & 0.170 & \textbf{0.588} & 0.070 & \textbf{0.745} & 0.097 \\ 
      & WA & 1.292 & 0.196 & 1.240 & 0.241 & 0.657 & 0.114 & 0.817 & 0.139 \\ 
      & MMD & 1.268 & 0.173 & 1.170 & 0.171 & 0.669 & 0.103 & 0.841 & 0.131 \\ 
      \hline
      \multirow{4}{*}{ $\theta_3=1.0$} 
      & ES & 1.170 & 0.132 & 1.183 & 0.157 & 0.459 & 0.045 & 0.552 & 0.049 \\  
      & \textbf{KL} & \textbf{1.083} & 0.100 & \textbf{1.077} & 0.088 & \textbf{0.394} & 0.034 & \textbf{0.496} & 0.045 \\
      & WA & 1.229 & 0.118 & 1.216 & 0.132 & 0.426 & 0.054 & 0.521 & 0.059 \\ 
      & MMD & 1.181 & 0.116 & 1.182 & 0.143 & 0.456 & 0.051 & 0.548 & 0.061 \\ 
      \hline
      \multirow{4}{*}{ $\theta_4=1.0$} 
      & \textbf{ES} & \textbf{1.128} & 0.112 & 1.113 & 0.138 & 0.435 & 0.032 & 0.534 & 0.045 \\ 
      & \textbf{KL} & 1.133 & 0.111 & \textbf{1.086} & 0.135 & \textbf{0.390} & 0.038 & \textbf{0.498} & 0.051 \\ 
      & WA & 1.218 & 0.110 & 1.196 & 0.108 & 0.409 & 0.049 & 0.514 & 0.066 \\ 
      & MMD & 1.150 & 0.098 & 1.133 & 0.130 & 0.423 & 0.041 & 0.518 & 0.049 \\ 
      \hline
      \multirow{4}{*}{ $\theta_5=1.0$} 
      & ES & 1.343 & 0.096 & 1.360 & 0.104 & 0.428 & 0.052 & 0.514 & 0.059 \\
      & KL & 1.300 & 0.087 & 1.250 & 0.065 & 0.384 & 0.040 & 0.491 & 0.061 \\
      & \textbf{WA} & 1.300 & 0.101 & 1.298 & 0.105 & \textbf{0.370} & 0.058 & \textbf{0.446} & 0.066 \\
      & \textbf{MMD} & \textbf{1.258} & 0.115 & \textbf{1.232} & 0.120 & 0.375 & 0.055 & 0.454 & 0.063 \\
      \bottomrule
    \end{tabular}
    }
\end{table*}%

\begin{table*}%[t]
    \centering
    \caption{Computational complexities. See discussion in Section~\ref{sec:conclusion}.}
    \label{tab:comput_complexities}
    % \scalebox{0.9}{
    \begin{tabular}{lcc}
    \toprule
    &Complexity & References\\ \hline 
    Univariate (all methods) & $\mathcal{O}((n+m)\log(n+m))$ & \cite{Jiang2018,Bernton:2017aa,huo2016fast,chaudhuri2019fast}\\
    KL & $\mathcal{O}((n+m)\log(n+m))$  & \cite{Jiang2018} \\
    Multivariate ES/MMD, WA (approx.) & $\mathcal{O}((n+m)^2)$ & \cite{Jiang2018,Bernton:2017aa}\\
    Multivariate WA & $\mathcal{O}((n+m)^{5/2}\log(n+m))$  & \cite{Bernton:2017aa} \\
    \bottomrule
    \end{tabular}
    % }
\end{table*}%

\begin{table*}%[t]
    \centering
    \caption{Estimation performance for the $g$-and-$k$ distribution (Section~\ref{sec:gandk}). The best results in each column is highlighted in boldface.}
    \label{tab:performances_gandk}
    \scalebox{0.9}{
    \begin{tabular}{cccccccccc}
    \toprule
    && $\hat \theta_{\text{mean}}$ & $\sigma(\hat \theta_{\text{mean}})$ 
    & $\hat \theta_{\text{med}}$ & $\sigma(\hat \theta_{\text{med}})$  
    & MAE & $\sigma(\mathrm{MAE})$ & RMSE &  $\sigma(\mathrm{RMSE})$ \\
    \toprule
    \multirow{4}{*}{ $A=3.0$} 
      & \textbf{ES}  & \textbf{3.024} & 0.044 & \textbf{3.009} & 0.047 & 0.133 & 0.016 & 0.170 & 0.018 \\ 
      & \textbf{KL} & 2.955 & 0.030 & 2.948 & 0.033 & \textbf{0.105} & 0.013 & \textbf{0.128} & 0.013 \\ 
      & WA & 3.043 & 0.045 & 3.052 & 0.067 & 0.232 & 0.020 & 0.277 & 0.020 \\ 
      & MMD & 3.081 & 0.061 & 3.062 & 0.065 & 0.177 & 0.029 & 0.221 & 0.036 \\ 
      \hline
    \multirow{4}{*}{ $B=1.0$} 
      & \textbf{ES} & \textbf{1.046} & 0.062 & \textbf{1.027} & 0.079 & \textbf{0.268} & 0.024 & \textbf{0.322} & 0.029 \\ 
      & KL & 0.918 & 0.071 & 0.885 & 0.068 & 0.313 & 0.026 & 0.375 & 0.029 \\ 
      & WA & 0.894 & 0.127 & 0.869 & 0.136 & 0.277 & 0.044 & 0.334 & 0.045 \\ 
      & MMD & 0.899 & 0.069 & 0.855 & 0.079 & 0.374 & 0.029 & 0.440 & 0.030 \\
      \hline
    \multirow{4}{*}{ $g=2.0$} 
      & ES & 2.289 & 0.101 & 2.264 & 0.210 & 0.872 & 0.098 & 1.026 & 0.091 \\ 
      & KL & 2.993 & 0.080 & 3.046 & 0.121 & 1.043 & 0.070 & 1.193 & 0.066 \\ 
      & \textbf{WA} & 2.581 & 0.101 & 2.599 & 0.147 & \textbf{0.858} & 0.078 & \textbf{1.025} & 0.075 \\ 
      & \textbf{MMD} & \textbf{2.184} & 0.128 & \textbf{2.227} & 0.190 & 0.904 & 0.103 & 1.052 & 0.100 \\ 
      \hline
    \multirow{4}{*}{ $k=0.5$} 
      & \textbf{ES} & \textbf{0.476} & 0.046 & 0.444 & 0.067 & 0.225 & 0.014 & 0.270 & 0.015 \\
      & \textbf{KL} & 0.550 & 0.059 & \textbf{0.498} & 0.064 & 0.252 & 0.029 & 0.317 & 0.045 \\  
      & \textbf{WA} & 0.544 & 0.095 & 0.526 & 0.094 & \textbf{0.189} & 0.035 & \textbf{0.238} & 0.046 \\ 
      & MMD & 0.691 & 0.056 & 0.621 & 0.072 & 0.380 & 0.041 & 0.502 & 0.070 \\ 
      \hline
    \multirow{4}{*}{$\rho=-0.3$} 
      & ES & -0.163 & 0.047 & -0.178 & 0.069 & 0.197 & 0.032 & 0.246 & 0.034 \\ 
      & \textbf{KL} & \textbf{-0.291} & 0.034 & -0.324 & 0.037 & \textbf{0.117} & 0.014 & \textbf{0.144} & 0.020 \\ 
      & \textbf{WA} & -0.288 & 0.026 & \textbf{-0.314} & 0.035 & 0.125 & 0.016 & 0.152 & 0.020 \\
      & MMD & -0.194 & 0.047 & -0.210 & 0.063 & 0.174 & 0.030 & 0.218 & 0.035 \\
      \bottomrule
    \end{tabular}
    }
\end{table*}%

\begin{figure*}[t]
\centering    
    \begin{minipage}[b]{.45\linewidth} 
        \centering    
        \includegraphics[width = .8\textwidth]{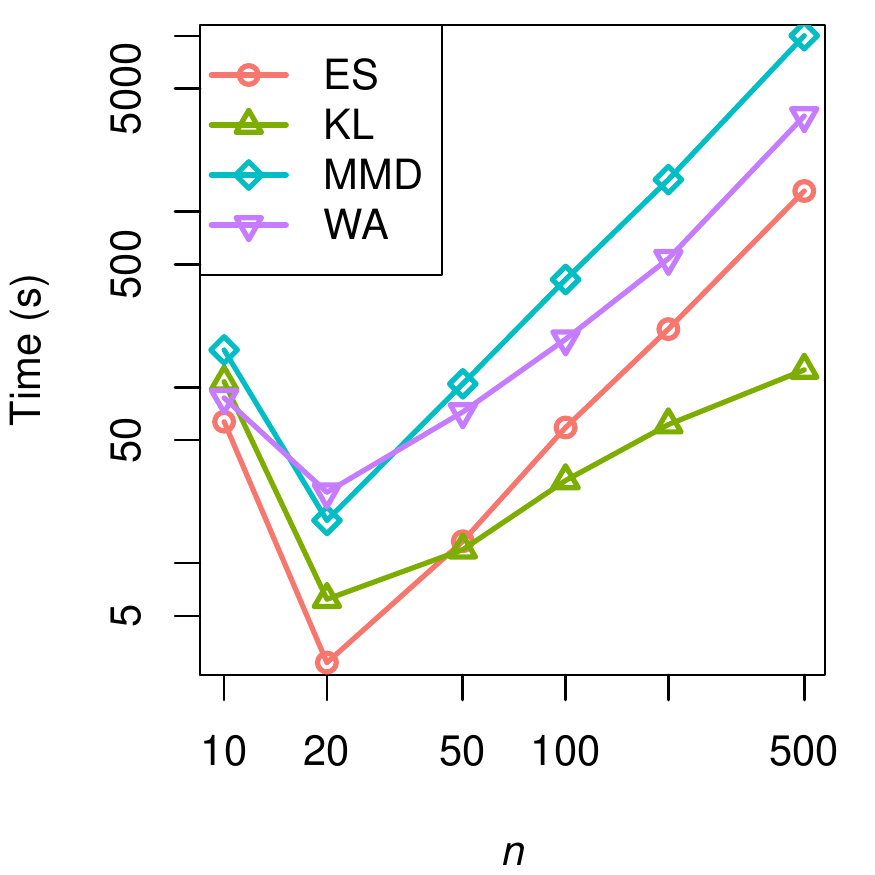}
        \subcaption{Bivariate Gaussian mixtures (Section~\ref{sec:BGM})}\label{fig:1a} 
    \end{minipage}% 
    \begin{minipage}[b]{.45\linewidth} 
        \centering    
        \includegraphics[width = .8\textwidth]{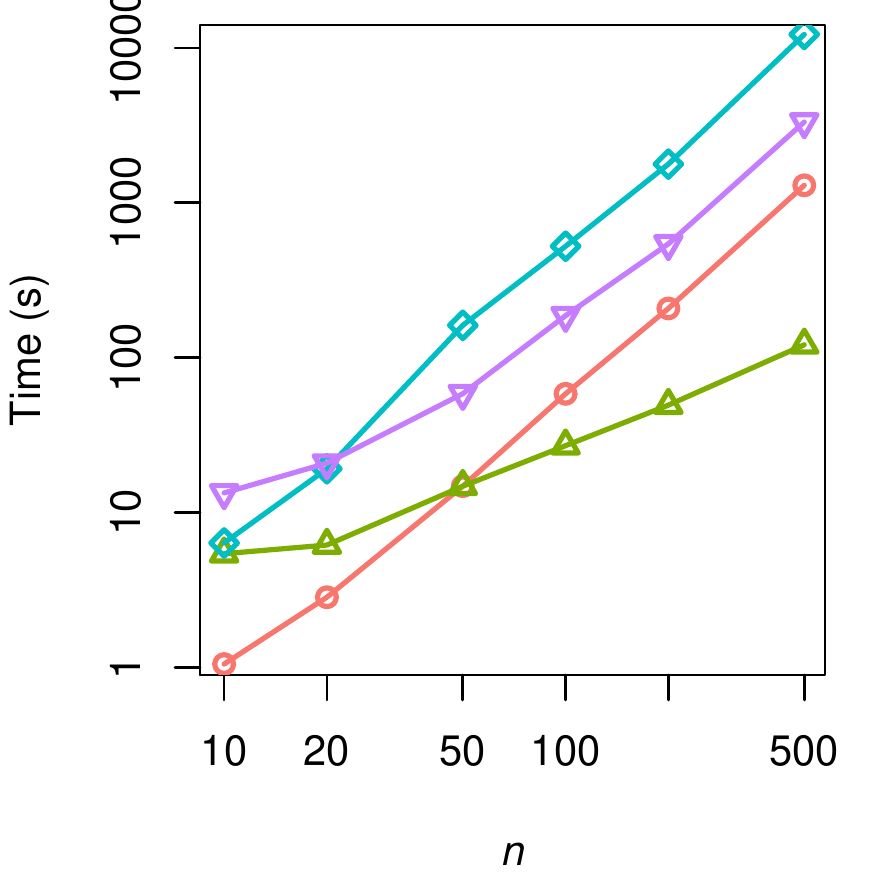}
        \subcaption{MA($2$) model  (Section~\ref{sec:MA2})}\label{fig:1b} 
    \end{minipage} \\
    \vspace{.5cm}
    \begin{minipage}[b]{.45\linewidth} 
        \centering    
        \includegraphics[width = .8\textwidth]{figs/times-3.pdf}
        \subcaption{Bivariate beta model (Section~\ref{sec:BBM})}\label{fig:1c} 
    \end{minipage}% 
    \begin{minipage}[b]{.45\linewidth} 
        \centering    
        \includegraphics[width = .8\textwidth]{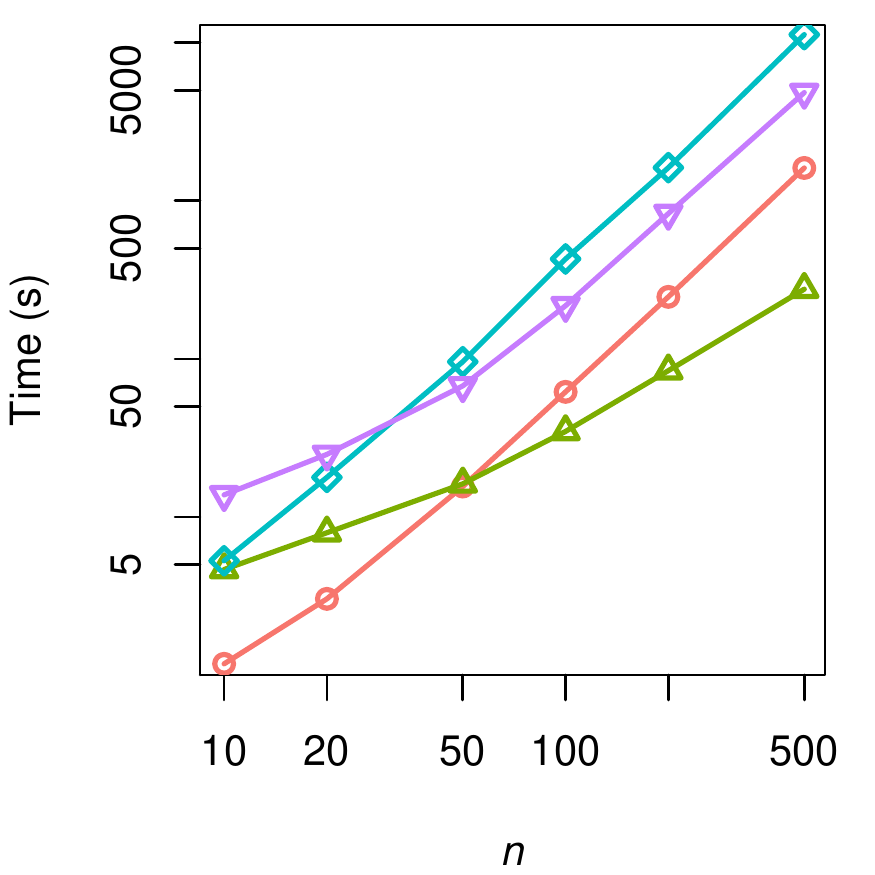}
        \subcaption{$g$-and-$k$ distribution (Section~\ref{sec:gandk})}\label{fig:1d} 
    \end{minipage}% 
    \caption{Running times for the ES, KL, MMD and WA distance computations in the four models considered in the simulations. Time in seconds for $10^5$ ABC replications, for varying sample sizes $n$, and with $m=n$. Log scales on both axes.}
    \label{fig:times}
\end{figure*}

\section{Conclusion} \label{sec:conclusion}

We have introduced a novel importance-sampling  ABC algorithm that is based on the so-called \textit{two-sample energy statistic}. Along with other data discrepancy measures that view data sets as empirical measures, such as the Kullback--Leibler divergence, the Wasserstein distance and maximum mean discrepancies, our proposed approach bypasses the cumbersome use of summary statistics. 

We have shown that the V-statistic estimator of the ES is consistent under mild moment conditions. 
Furthermore, we have established a new asymptotic result for cases when the observed sample and simulated sample sizes increasing to infinity, that shows a kind of consistency of the pseudo-posterior in the infinite data scenario. This is in concordance with previous results in such cases \citep[see for instance][]{Jiang2018,Bernton:2017aa} and extends upon existing theory for the application in the general IS-ABC framework. \textcolor{black}{That is, we largely extend the main result of \cite{Jiang2018}, regarding the large sample properties of the pseudo-posterior PDF, to the IS-ABC cases that are considered in \cite{Karabatsos2018} and \cite{Park2016}. Thus, we provide further theoretical justification for the usage of such algorithms.}

Illustrations of the proposed ES-ABC algorithm on four experimental models have shown that it performs comparatively well to alternative discrepancy measures.

\textcolor{black}{Considering computing costs, the ES, KL, MMD, and WA estimators in \textit{univariate settings} are all equal in terms of order of complexity, with a \emph{linearithmic} computational time of $\mathcal{O}((n+m)\log(n+m))$ (see \cite{huo2016fast,chaudhuri2019fast}, regarding the complexity of the ES and MMD estimators). 
In \textit{multivariate settings}, KL complexity is unchanged; ES and MMD have quadratic time $\mathcal{O}((n+m)^2)$, while the Wasserstein distance has complexity $\mathcal{O}((n+m)^{5/2}\log(n+m))$. The latter can be reduced to quadratic complexity if one is targetting the swapping distance, an approximation of the actual Wasserstein distance \citep{Bernton:2017aa}. We note that linear time estimators are also available for the MMD and the ES, if one is willing to forgo precision in the estimates (see \cite{Gretton:2012aa}). See Table~\ref{tab:comput_complexities} for a summary.}

%In the rejection ABC setting, Proposition 2 of \cite{Bernton:2017aa} shows that under some regularity assumptions on the DGP and if the data discrepancy measure satisfies the condition:
%\begin{equation}\label{eq:prop2-berton-et-al}
%    \mathcal{D}\left(\mathbf{X}_{n},\mathbf{Y}_{n}\right)=0
%    \text{ if and only if } \mathbf{X}_{n}=\mathbf{Y}_{n},
%\end{equation}
%then the ABC pseudo-posterior contracts to the posterior distribution as the rejection threshold $\epsilon$ converges to zero. It can be shown that the V-statistic estimator of the ES only satisfies the \emph{only if} direction of \eqref{eq:prop2-berton-et-al} and thus does not necessarily enjoy the conclusions of Proposition 2 of \cite{Bernton:2017aa}. The condition is not known to be necessary and thus we do not know if the conclusion can be satisfied in another way. We observe, from our simulation experiments, that ES did not exhibit markedly different behavior to the Wasserstein distance, which can be shown to satisfy Proposition 2 of \cite{Bernton:2017aa}.

\bibliographystyle{apalike}
\bibliography{ABC_Manuscript}

\end{document}